\tikzset{line/.style={draw, thick, -latex'}}
    \newcommand{\rNum}[1]{\expandafter{\romannumeral #1\relax}}
    \newcommand{\rNUM}[1]{\uppercase\expandafter{\romannumeral #1\relax}}
    \newcommand{\mbb}[1]{\mathbb{#1}}
    \newcommand{\mcl}[1]{\mathcal{#1}}
    \newcommand{\spc}[1]{\begin{spacing}{#1}}
    \newcommand{\spce}{\end{spacing}}
    \newcommand{\la}[0]{\langle}
    \newcommand{\ra}[0]{\rangle}
    \newcommand{\mfr}[1]{\mathfrak{#1}}
    \DeclareMathOperator{\tran}{\to}
    \newcommand{\incycle}[0]{\unlhd}
    \newcommand{\pincycle}[0]{\lhd}
    \newcommand{\Act}[0]{\mathit{Act}}
    \newcommand{\R}[0]{\mfr{R}}
    \newcommand{\E}[0]{\mathbf{E}}
    \newcommand{\mclul}[1]{\mcl{#1}}
    \newcommand{\bo}[0]{\textit{bo}}
    \newcommand{\reg}[0]{\textit{Reg}}
    \newcommand{\LC}[0]{\mcl{L}}
    \newcommand{\eo}[0]{\prec}
    \newcommand{\ip}[0]{\textit{in}}
    \newcommand{\Mil}{\textbf{Mil}}
    \newcommand{\MilIfree}{\mbox{$\textbf{BBP}$}}
    \DeclareMathOperator{\Stari}{*}
    \DeclareMathOperator{\Star}{{^\circledast}}
    \newcommand{\assoc}{\textbf{lass}}
    \newcommand{\IFSExp}[0]{\mbox{$\mathit{RExp}_{\textit{1-free}}$}}
    \DeclareMathOperator{\IFvdash}{\mbox{$\vdash_{\textit{BBP}}$}}
    \DeclareMathOperator{\lpb}{\curvearrowright}
    \DeclareMathOperator{\subc}{\subseteq_{C}}
    \DeclareMathOperator{\psubc}{\subset_{C}}
    \DeclareMathOperator{\eqvc}{\equiv_{C}}
    \DeclareMathOperator{\eqvn}{\equiv}
    \newcommand{\gra}{\textbf{G}}
    \DeclareMathOperator{\unionc}{\cup_{C}}
    \newcommand{\dddef}[0]{=_{\textit{df}}}
        \newcommand{\EII}[0]{\mcl{E}_3}
        \newcommand{\EIIi}[0]{\mcl{E}_{4}}
        \newcommand{\EIIii}[0]{\mcl{E}_{5}}
        \newcommand{\CI}[0]{\mcl{E}_1}
        \newcommand{\CII}[0]{\mcl{E}_2}
\newcommand*{\da@rightarrow}{\mathchar"0\hexnumber@\symAMSa 4B }
\newcommand*{\da@leftarrow}{\mathchar"0\hexnumber@\symAMSa 4C }
\newcommand*{\xdashrightarrow}[2][]{%
  \mathrel{%
    \mathpalette{\da@xarrow{#1}{#2}{}\da@rightarrow{\,}{}}{}%
  }%
}
\newcommand{\xdashleftarrow}[2][]{%
  \mathrel{%
    \mathpalette{\da@xarrow{#1}{#2}\da@leftarrow{}{}{\,}}{}%
  }%
}
\newcommand*{\da@xarrow}[7]{%
  % #1: below
  % #2: above
  % #3: arrow left
  % #4: arrow right
  % #5: space left
  % #6: space right
  % #7: math style
  \sbox0{$\ifx#7\scriptstyle\scriptscriptstyle\else\scriptstyle\fi#5#1#6\m@th$}%
  \sbox2{$\ifx#7\scriptstyle\scriptscriptstyle\else\scriptstyle\fi#5#2#6\m@th$}%
  \sbox4{$#7\dabar@\m@th$}%
  \dimen@=\wd0 %
  \ifdim\wd2 >\dimen@
    \dimen@=\wd2 %
  \fi
  \count@=2 %
  \def\da@bars{\dabar@\dabar@}%
  \@whiledim\count@\wd4<\dimen@\do{%
    \advance\count@\@ne
    \expandafter\def\expandafter\da@bars\expandafter{%
      \da@bars
      \dabar@
    }%
  }%
  \mathrel{#3}%
  \mathrel{%
    \mathop{\da@bars}\limits
    \ifx\\#1\\%
    \else
      _{\copy0}%
    \fi
    \ifx\\#2\\%
    \else
      ^{\copy2}%
    \fi
  }%
  \mathrel{#4}%
}
\begin{document}
\title{Image Reflection on Process Graphs --- A Novel Approach for the Completeness of an Axiomatization of 1-Free Regular Expressions Modulo Bisimilarity}
%
%\titlerunning{Abbreviated paper title}
% If the paper title is too long for the running head, you can set
% an abbreviated paper title here
%

\author{Yuanrui Zhang\inst{1}\orcidID{0000-0002-0685-6905} \and
Xinxin Liu\inst{2}\orcidID{0000-0002-8334-8277}}

\authorrunning{Y. Zhang and X. Liu}

% First names are abbreviated in the running head.
% If there are more than two authors, 'et al.' is used.
%

\institute{School of Mathematics and Statistics, Southwest University, China\\
\email{zhangyrmath@126.com, zhangyrmath@swu.edu.cn} \\
\and
Institute of Software,
Chinese Academy of Sciences, China\\
\email{xinxin@ios.ac.cn}
}

\maketitle              % typeset the header of the contribution
\begin{abstract}
We analyze a phenomenon called ``image reflection'' on a type of characterization graphs --- LLEE charts --- of 1-free regular expressions. 
Due to the correspondence between 1-free regular expressions and the provable solutions of LEE/LLEE charts,  
this observation naturally leads to a new proof for the completeness of the proof system \MilIfree\ for 1-free regular expressions modulo bisimulation equivalence. 
The critical part of the previous proof is to show that bisimulation collapse, which plays the role in linking the provable solutions of two LLEE charts, is still an LLEE chart. 
The difference of our proof, compared to the previous one, is that  
we do not rely on the graph transformations from LLEE charts into their bisimulation collapses by merging two  carefully-selected bisimilar nodes in each transformation step. 
Instead, we directly show that the bisimulation collapse of an LLEE chart possesses an LEE/LLEE structure based on its set of images 
mapped through the bisimulation function from the LLEE chart,
and the constrained relation between the images 
and their so-called ``well-structured'' looping-back charts pre-images 
on the LLEE chart. 
Our approach provides a novel angle to look at this problem and related problems, and can also be used for simplifying the graph transformations in the proof of the completeness problem of the proof system \Mil\ for regular expressions modulo bisimulation equivalence, which had remained open until very recently.

%might introduce a different way for proving the completeness problem of \Mil\ for regular expressions modulo bisimulation equivalence, 
%which had remained open until very recently. 

\keywords{1-Free Regular Expression \and Process Graphs\and Process Semantics \and Bisimulation \and Axiomatization \and Completeness. }
\end{abstract}

\section{Introduction}
\label{section:Introduction}

Regular expression, proposed by Kleene~\cite{Kleene+1956+3+42}, is a formal language consisting of elements 
$a$ (\emph{actions}), $0$ (\emph{deadlock}), $1$ (\emph{skip}) and their compositions $e_1 + e_2$ (\emph{non-deterministic choice}), $e_1 \cdot e_2$ (\emph{sequence}) and $e^*$ (\emph{iterations}). 
The original star operator was a binary one: $e_1 \Stari e_2$ (meaning iterating $e_1$ for finite times and then executing $e_2$, or iterating $e_1$ infinitely). 
Later Copi et al. in their work~\cite{Copi1958} proposed to use the unary operator $e^*$ instead of $e_1\Stari e_2$. 
%($e^*$ is expressed by $e\Stari 1$). 
Regular expression has many applications in the field of computer science, 
especially as a formalism or when embedded into a formalism such as Kleene algebra with test~\cite{Kozen97}, dynamic logic~\cite{Harel00}, etc, it can serve for the specification and verification purposes of different systems (cf. work like~\cite{Smolka20,Beckert2016,Platzer2018,Zhang21,Zhang22}). 
%such as pattern matching in programming languages, or being used to sepecify and verify system in different formalism, like process algebra or dynamic logic. 

Equational axiomatisation of regular expressions is a basic problem.
It asks for 
a set of equational axioms and inference rules (together forming a \emph{proof system}) which are semantically correct (namely the proof system's \emph{soundness}), and based on which other semantically-correct equations of regular expressions can be derived. 
When all semantically-correct equations are derivable, a proof system is called \emph{complete}. 
Salomaa~\cite{Salomaa1966} gave a complete axiomatisation for regular expressions under the semantics of execution traces, 
where the behaviour of a regular expression is interpreted as a set of traces of actions. 
Milner first studied the axiomatisation of regular expressions under the \emph{process semantics}. 
In this semantics, the behaviour of a regular expression $e$ is interpreted as a set of   \emph{bisimilar charts}, instead of execution traces. 
%Milner first studied this problem. 
In~\cite{Milner1984}, Milner built
a sound proof system \Mil\ for regular expressions (there he called regular expression ``star expression''). 
\Mil\ is tailored from Solomaa's proof system by removing the critical left associative law (\assoc): $e_3\cdot (e_1 + e_2) = e_3\cdot e_1 + e_3\cdot e_2$, which does not hold under the process semantics anymore.  
It was claimed that the completeness problem of \Mil\ could be much harder because the Solomaa's proof approach, which relies on (\assoc), cannot be applied.  
This work by Milner has prompted a series of research work since then. 
Up to now, some partial solutions and a recent full solution have been acquired, considering various subsets of regular expressions together with their adapted proof systems from \Mil:
%but currently, to the completeness of \Mil, only partial solutions have been acquired, considering various proper subsets of regular expressions together with their adapted proof systems from \Mil: 

\cite{Bergstra94}
proposed an axiomatisation for a subset of regular expressions where 
there are no $0$ and $1$, but a binary star operator $e_1\Stari e_2$ instead of $e^*$. 
Its completeness was proved in~\cite{Fokkink96}. 
\cite{Fokkink1996AnAF} built a complete proof system 
for a sub-regular expression with $0$ and $1$, but with a ``terminal cycle'' $e^\omega$ rather than $e^*$. $e^\omega$ is semantically equivalent to $e^* \cdot 0$ in regular expression. 
Recently, Grabmayer and Fokkink in~\cite{Grabmayer20} axiomatized the so-called \emph{1-free regular expressions}, in which there are $0$ and a binary star operator $e_1\Stari e_2$ (denoted by $e_1\Star e_2$ in~\cite{Grabmayer20}) instead of $e^*$, but no $1$. 
The proof system tailored from \Mil\ for 1-free regular expressions is denoted by \MilIfree. 
In order to prove the completeness of \MilIfree, 
the authors developed a characterization graph called \emph{LEE/LLEE charts} and adopted a nominal ``minimization strategy'' (introduced below) which relies on a series of LLEE-structure-preservable graph transformations to minimize LLEE charts.  
The critical LEE/LLEE structure developed in~\cite{Grabmayer20} was further extended to the case for regular expressions~\cite{Grabmayer20-2}. 
Based on it a proof sketch was given for the completeness of \Mil~\cite{Grabmayer2022}. 

\ifx
Based on it a method to prove the completeness of \Mil\ was fully explored in~\cite{Grabmayer2022}, where the author claimed that this open problem has finally been solved. 
%hopeful to finally solve this open problem. 
\fi

In this paper, we propose a novel approach for the completeness of \MilIfree --- as a novel partial solution for the completeness of $\Mil$.  
We adopt a different proof strategy: to directly prove that the 
minimization graph (also called \emph{bisimulation collapse}) $\mcl{H}$ of an LLEE chart $\mcl{G}$ is an LLEE chart, without performing the LLEE-structure-preservable graph transformations from $\mcl{G}$ to $\mcl{H}$ as in~\cite{Grabmayer20}. 
The idea is based on an observation that, 
the behaviour of each image on $\mcl{H}$ of the bisimulation function $\theta$ from $\mcl{G}$ to $\mcl{H}$ is constrained by a so-called \emph{well-structured looping-back chart}, so that 
%LLEE sub-structure (called \emph{looping-back chart}), so that 
all of the images on $\mcl{H}$ form a hierarchy structure that satisfies the LEE property. 
\ifx
an LLEE sub-structure $\LC$ (called a \emph{looping-back chart}) on $\mcl{G}$ through the bisimulation function from $\mcl{G}$ to $\mcl{H}$ is constrained by the behaviour of $\LC$, so that we can `structure' all of the images on $\mcl{H}$ in a way 
such that $\mcl{H}$ satisfies the LEE property. 
\fi
We call this phenomenon the \emph{image reflection} on $\mcl{G}$ by $\mcl{H}$ via $\theta$.  
Below in this paper, we will make this ``hierarchy structure'' explicit. 
%the topological structure of the \emph{images} on $\mcl{H}$ through the bisimulation function $\theta:\mcl{G}\to \mcl{H}$ from $\mcl{G}$ to $\mcl{H}$, actually reflects the LLEE structure of (some part) of $\mcl{G}$ itself, in a way that we will make explicit in this paper below. 
Compared to~\cite{Grabmayer20}, 
our approach simplifies the whole proof process by skipping the 
so-called ``connect $x_1$ through-to $x_2$'' graph transformations, where 
two bisimilar nodes $x_1$ and $x_2$ need to be carefully selected and merged each time (cf. Proposition 6.4 of~\cite{Grabmayer20} for the selection rules). 
More importantly, 
our approach %though not necessarily simpler than the previous one, 
provides a novel angle to look at this problem and related problems, and is
also useful for simplifying the current proof of the completeness problem of \Mil\ by~\cite{Grabmayer2022}.  
%whose sketch has been recently given by extending LLEE charts to the case for regular expressions~\cite{Grabmayer2022}.

%helpful for
%introducing a different way for proving the completeness of \Mil. 

In the following, we briefly introduce the minimization strategy and our idea. 
The concepts and notations below will be formally defined in Sect.~\ref{section:Preliminaries} and~\ref{section:Looping-back Charts and well-structured Images}.

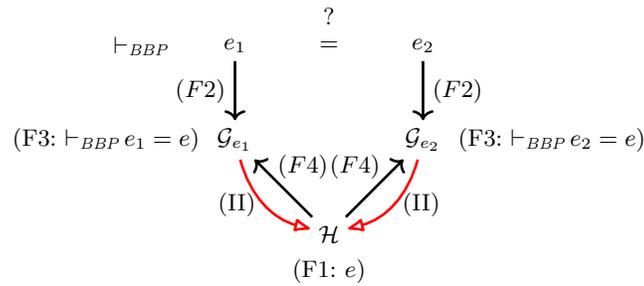
\begin{figure}[tbhp]
    \centering
    %\scalebox{1}{
    \noindent\makebox[\textwidth]{ %move the figure to the right of the whole page
    \begin{tabular}{c}
    \begin{tikzpicture}[->,>=stealth', node distance=1.7cm]
     %\node[state4,label=west:\mbox{$[p]\phi$}, fill=blue] (s1) {};
     %\coordinate (x) at (0,0);

     \coordinate (H) at (0,0);
     \coordinate (Gi) at ($(H) + (-1.25cm, 1.25cm)$);
     \coordinate (Gii) at ($(H) + (1.25cm, 1.25cm)$);
     \coordinate (ei) at ($(Gi) + (0,1.25cm)$);
     \coordinate (eii) at ($(Gii) + (0,1.25cm)$);
     \coordinate (eq) at ($(ei) + (1.25cm,0)$);
     \coordinate (der) at ($(ei) + (-1.25cm,0)$);
     
    \node[draw=none, label=south:\mbox{\footnotesize{(F1: $e$)}}] at (H) (nH) {$\mcl{H}$};
    \node[draw=none, label=west:\mbox{\footnotesize{(F3: $\IFvdash e_1 = e$)}}] at (Gi) (nGi) {$\mcl{G}_{e_1}$};
    \node[draw=none, label=east:\mbox{\footnotesize{(F3: $\IFvdash e_2 = e$)}}] at (Gii) (nGii) {$\mcl{G}_{e_2}$}; 
    \node[draw=none] at (ei) (nei) {$e_1$};
    \node[draw=none] at (eii) (neii) {$e_2$};
    \node[draw=none, label = north:$?$] at (eq) (neq) {$=$};
    \node[draw=none] at (der) (nder) {$\IFvdash$};
    
     \path
     (nei) edge [-To, line width=1pt] node [left,black] {\footnotesize{$(F2)$}} (nGi)
     (neii) edge [-To, line width=1pt] node [right,black] {\footnotesize{$(F2)$}} (nGii)
     (nGi) edge [-{Latex[round,open]}, line width=1pt, bend right, red] node [left,black] {\footnotesize{(II)}} (nH)
     (nGii) edge [-{Latex[round,open]}, line width=1pt, bend left, red] node [right,black] {\footnotesize{(II)}} (nH)
     (nH) edge [-To, line width=1pt] node [above right,black, xshift=-0.2cm] {\footnotesize{$(F4)$}} (nGi)
     (nH) edge [-To, line width=1pt] node [above left,black, xshift=0.2cm] {\footnotesize{$(F4)$}} (nGii)
     %(nx) edge [out=195,in=-15] node[above, yshift=0cm,black] {$b_1$} (nzi)
    ;

    %\coordinate (iZ) at ($(z) + (-5cm,1.25cm)$);

    %\node[circle,fill,inner sep=1.25pt, outer sep=1pt, label = {[label distance=0.15cm]-135:$\underline{Z}$}] at (iZ) (niZ) {};

    \path
     %(s1) edge [edge3] node [below,black] {$p$} (s2)
     ;

    %\coordinate (P1) at ($(zi) + (-1cm,0cm)$);
    
    %\draw[-To,dashed,very thick,out=210,in=-30] (P1) to node[above] {$\theta$} (P2); 

    %\coordinate (EII1) at ($(z) + (0.75cm,-0.5cm)$);

    %\node[draw=none] at (EII1) (nEII1) {$\hat{\EII}/\EII$};
    
    \end{tikzpicture}
    \\
    \end{tabular}
    } %end of makebox
    %}%end scalebox
        %\captionsetup{font=footnotesize}
        \caption{Minimization strategy}
        \label{figure:Minimization strategy}
\end{figure}

The minimization strategy (cf.~\cite{Grabmayer20}) is based on the following facts: 
\begin{enumerate}
\item[(F1)]
Each LLEE chart $\mcl{G}$ corresponds to a set of equations that has a 1-free expression as a \emph{primary provable solution} (simply \emph{primary solution}). 
%, i.e., a particular expression $e$ (together with other expressions) making the set of equations derivable.   

\item[(F2)]
%Each 1-free regular expression $e$ corresponds to an LLEE chart $\mcl{G}_e\in [|e|]_{\OS}$ that captures its operational behavior; 
%moreover, $e$ is a primary solution of $\mcl{G}_e$. 
Each 1-free regular expression $e$ corresponds to an LLEE chart, namely $\mcl{G}_e$, which captures its process semantics; 
%moreover, $e$ is a primary solution of $\mcl{G}_e$. 

\item[(F3)]
For any two primary solutions $e$ and $e'$ of an LLEE chart $\mcl{G}$, 
equation $e = e'$ is derivable. 

\item[(F4)]
For any bisimulation function $\theta$ from $\mcl{G}$ to $\mcl{H}$, each primary solution of $\mcl{H}$ is also a primary solution of $\mcl{G}$. 
\end{enumerate}

(F4) was also applied to solve the completeness of axiomatisations by~\cite{Liu2020}. 

The idea of minimization strategy (illustrated in Fig.~\ref{figure:Minimization strategy}) can be stated as follows on two stages:
\begin{enumerate}
\item[(I)]
To prove equation $e_1 = e_2$, 
by (F2), 
this means there are two LLEE charts $\mcl{G}_{e_1}$ and $\mcl{G}_{e_2}$ 
bisimilar to each other (denoted by $\mcl{G}_{e_1}\sim \mcl{G}_{e_2}$) that have $e_1$ and $e_2$ as their primary solutions respectively. 
Due to $\mcl{G}_{e_1}\sim \mcl{G}_{e_2}$, $\mcl{G}_{e_1}$ and $\mcl{G}_{e_2}$ have the common bisimulation collapse $\mcl{H}$ through two bisimulation functions
$\theta_1$ from $\mcl{G}_{e_1}$ to $\mcl{H}$ and $\theta_2$ from $\mcl{G}_{e_2}$ to $\mcl{H}$ respectively. 
Assume that $\mcl{H}$ is an LLEE chart. 
Then by (F1) $\mcl{H}$ has a primary solution, let us say $e$. 
By (F4), $e$ is also a primary solution of $\mcl{G}_{e_1}$ and $\mcl{G}_{e_2}$. 
So by (F3), $e_1 = e$ and $e_2 = e$ can both be derived. 
Therefore, $e_1 = e_2$ can be derived simply by the symmetry and transitivity of equivalence $=$.  

\item[(II)]
To show that $\mcl{H}$ is an LLEE chart, 
a graph transformation was carried out from either $\mcl{G}_{e_1}$ or 
$\mcl{G}_{e_2}$ into $\mcl{H}$ step by step. 
During each step, the transformation merges two selected bisimulation nodes in $\mcl{G}_{e_1}$ (or $\mcl{G}_{e_2}$), while preserving that the graph after the merge is still an LLEE chart. 
\end{enumerate}

Our approach poses a modification to the minimization strategy on stage (II),   namely stage (II'): 
We directly prove that $\mcl{H}$ is an LEE chart according to the bisimulation function $\theta_1$ from $\mcl{G}_{e_1}$ to $\mcl{H}$ (or $\theta_2$ from $\mcl{G}_{e_2}$ to $\mcl{H}$), 
making use of the image reflection on $\mcl{G}_{e_1}$ (or $\mcl{G}_{e_2}$) by $\mcl{H}$ through $\theta_1$ (or $\theta_2$). 
Since each LEE chart is also an LLEE chart (by Prop.~\ref{prop:From LEE Witnesses to LLEE Witnesses}), 
$\mcl{H}$ is an LLEE chart.

\begin{figure}[tbhp]
    \centering
    %\scalebox{1}{
    \noindent\makebox[\textwidth]{ %move the figure to the right of the whole page
    \begin{tabular}{c}
    \begin{tikzpicture}[->,>=stealth', node distance=1.7cm]
     %\node[state4,label=west:\mbox{$[p]\phi$}, fill=blue] (s1) {};
     %\coordinate (x) at (0,0);

     \coordinate (x) at (0,0);
     \coordinate (xi) at ($(x) + (0, 1.25cm)$);
     \coordinate (X) at ($(x) + (-4cm,0)$);
     \coordinate (H) at ($(X) + (0,-1.25cm)$);
     \coordinate (G) at ($(x) + (0,-1.25cm)$);
     \coordinate (i) at ($(x) + (0.5cm, -0.5cm)$);
     
    \node[circle,fill,inner sep=1.25pt, outer sep=1pt, label=south:\mbox{$\underline{x}$}] at (x) (nx) {};
    \node[circle,fill,inner sep=1.25pt, outer sep=1pt, label={[label distance = 0.1cm]east:\mbox{$\underline{x'}$}}] at (xi) (nxi) {};
    \node[circle,fill,inner sep=1.25pt, outer sep=1pt, 
    label=-105:\mbox{$\underline{X}$}, initial, initial text=] at (X) (nX) {};
    \node[draw=none] at (G) (nG) {$\hat{\mcl{G}}/\mcl{G}$};
    \node[draw=none] at (H) (nH) {$\hat{\mcl{H}}/\mcl{H}$};
    \node[draw=none] at (i) (ni) {};
    
     \path
     (nx) edge [out=180,in=180, thick, red] node[left,yshift=0.28cm] {$[2]$} node[left, yshift=0cm,black] {$a$} (nxi)
     (nx) edge [out=135,in=235, thick, red] node[right, yshift=0.25cm,xshift=0.1cm] {$[2]$} node [right, xshift=-0.1cm,black,yshift=0cm] {$b$} (nxi)
     (nxi) edge [out=-15,in=15] node[right, xshift=0cm,black] {$b$} (nx)
     (nxi) edge [out=45,in=135,loop, looseness=10, thick, red] node[right,xshift=0.1cm] {$[1]$} node[above, xshift=0cm,black] {$a$} (nxi)
     (nX) edge [out=45,in=135,loop, looseness=20, thick, red] node[right,xshift=0.2cm] {$[1]$} node[above, xshift=0cm,black] {$a$} (nX)
     (nX) edge [out=45,in=-45,loop, looseness=20, thick, red] node[right,yshift=0.2cm] {$[1]$} node[right, yshift=-0.25cm,black] {$b$} (nX)
     (nx) edge[dashed, out=210, in=-30, -, thick, gray] node[above] {$\theta$} (nX)
     (nxi) edge[dashed, out=135, in=60, -, thick, gray] node[above] {$\theta$} (nX)
     (ni) edge[] (nx)
    ;

    \coordinate (EiI) at ($(X) + (-3cm,0)$);
    \coordinate (EiII) at ($(EiI) + (0, 0.5cm)$);
    \coordinate (equH) at ($(EiI) + (0,-1.25cm)$);
     
    \node[draw=none] at (EiI) (nEiI) {
    where $s(X)\dddef (a + b)\Star 0$
    };
    \node[draw=none] at (EiII) (nEiII) {$s(X) = a\cdot s(X) + b\cdot s(X)$,};
    \node[draw=none] at (equH) (nequH) {$\mcl{H}$'s equations};

    \coordinate (EiiI) at ($(x) + (4cm,0)$);
    \coordinate (EiiII) at ($(EiiI) + (0, 1.5cm)$);
    \coordinate (equG) at ($(EiiI) + (0,-1.25cm)$);
    
    \node[draw=none] at (EiiII) (nEiiII) {
    $
    \left\{
    \begin{aligned}
        s(x) &= a\cdot s(x') + b\cdot s(x'),\\
        s(x') &= a\cdot s(x') + b\cdot s(x),
    \end{aligned}
    \right.
    $
    };

     \node[draw=none] at (EiiI) (nEiiI) {
    $
    \begin{aligned}
    \mbox{where}&\\
        s(x) &\dddef ((a+b)\cdot a\Star b)\Star 0,\\
        s(x') &\dddef a\Star b\cdot ((a+b)\cdot a\Star b)\Star 0
    \end{aligned}
    $
    };
    
    \node[draw=none] at (equG) (nequG) {$\mcl{G}$'s equations};

    %\coordinate (iZ) at ($(z) + (-5cm,1.25cm)$);

    %\node[circle,fill,inner sep=1.25pt, outer sep=1pt, label = {[label distance=0.15cm]-135:$\underline{Z}$}] at (iZ) (niZ) {};

    \path
     %(s1) edge [edge3] node [below,black] {$p$} (s2)
     ;

    %\coordinate (P1) at ($(X) + (1.5cm,0)$);
    %\coordinate (P2) at ($(X) + (1.5cm,1cm)$);
    %\coordinate (P3) at ($(X) + (0cm,1cm)$);
    %\coordinate (P4) at ($(X) + (-0.75cm,1cm)$);
    %\coordinate (P5) at ($(X) + (-0.75cm,-1cm)$);
    %\coordinate (P6) at ($(X) + (1.5cm,-1cm)$);
    \coordinate (P1) at ($(X) + (1.25cm, -0.75cm)$);
    \coordinate (P2) at ($(X) + (-0.75cm, 1cm)$);  
    \coordinate (P3) at ($(xi) + (0.75cm,-0.25cm)$);
    \coordinate (P4) at ($(xi) + (-0.5cm,0.75cm)$);

    %\draw [-,dashed] plot coordinates {(P1) (P2) (P3) (P4) (P5) (P6) (P1)};
    \draw [-,dotted, thick] (P1) rectangle (P2);
    \draw [-,dotted, thick] (P3) rectangle (P4);

    \coordinate (P5) at ($(P1) + (0.15cm, 1.25cm)$);
    \coordinate (P6) at ($(P4) + (-0.15cm, -0.55cm)$);
    \coordinate (P7) at ($(P2) + (0.75cm, 0.15cm)$);
    \coordinate (P8) at ($(P4) + (-0.15cm,0.15cm)$);
    
    \draw [thick, -To, line width=1.5pt, out=180, in=0] (P6) to node[above left, xshift=0.2cm, yshift=-0.1cm] {$\theta(\{x'\})$} (P5);
    \draw [thick, -To, line width=1.5pt, dashed, bend left] (P7) to node[above left, xshift=0.2cm, yshift=-0.1cm] {reflect} (P8);

    \coordinate (Gxi) at ($(xi) + (0.25cm,1.25cm)$);
    \node[draw=none] at (Gxi) (nGxi) {$\mcl{G}_{x'}$};
    
    \end{tikzpicture}
    \\
    \end{tabular}
    } %end of makebox
    %}%end scalebox
        %\captionsetup{font=footnotesize}
        \caption{A simple example illustrating our idea}
        \label{figure:A simple example illustrating our idea}
\end{figure}
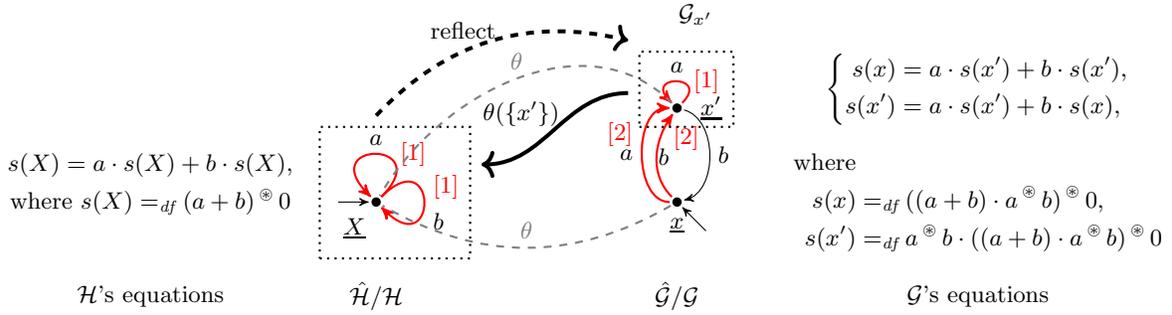

Let us see a simple example. 
In Fig.~\ref{figure:A simple example illustrating our idea}~\footnote{To distinguish node names from action names, in this paper, we underline all the node names (e.g. $\underline{X}$) in figures. }, expression $(a + b)\Star 0$ is the primary solution of 
LLEE chart $\mcl{H}$, since it is the provable solution of the equation 
shown on the left of $\mcl{H}$ that $\mcl{H}$ corresponds to.  
\ifx
$s(X) = a\cdot s(X) + b\cdot s(X)$ that $\mcl{H}$ corresponds to, where $s(X) \dddef (a + b)\Star0$. 
\fi
Expression $((a + b)\cdot a\Star b)\Star 0$ is the primary solution of LLEE chart $\mcl{G}$, which corresponds to the set of equations on its right. 
\ifx
: 
$s(x) = a\cdot s(x') + b\cdot s(x')$ and $s(x') = a\cdot s(x') + b\cdot s(x')$, where 
$s(x) \dddef ((a + b)\cdot a\Star b)\Star 0$ and $s(x') = a\Star b\cdot ((a + b)\cdot a\Star b)\Star 0$. 
\fi
$X$ and $x$ are the initial nodes of the two charts $\mcl{H}$ and $\mcl{G}$ respectively (that's why expressions $s(X)$ and $s(x)$ are called `primary' solutions). 
$\mcl{H}$ is the bisimulation collapse of $\mcl{G}$, through a bisimulation function $\theta$ from $\mcl{G}$ to $\mcl{H}$ defined as: $\theta\dddef \{x\mapsto X, x'\mapsto X\}$. 

By (F4), $(a+b)\Star 0$ is also the primary solution of $\mcl{G}$. 
To see this, let $s'(x)\dddef (a + b)\Star 0$ and $s'(x') \dddef (a + b)\Star 0$. 
From that $s(X) = a\cdot s(X) + b\cdot s(X)$ can be derived, it is easy to see that 
both $s'(x) = a\cdot s'(x') + b\cdot s'(x')$ and 
$s'(x') = a\cdot s'(x') + b\cdot s'(x)$ can be derived. 
Because $s(X)$, $s'(x)$ and $s'(x')$ are the same expression $(a + b)\Star 0$. 
Therefore, by (F3) equation $(a+b)\Star 0 = ((a + b)\cdot a\Star b)\Star 0$ can be derived. 
This forms the proof of one side on stage (I). 

Informally, to see that $\mcl{G}$ is an LLEE chart, 
by removing all tagged transitions in $\mcl{G}$ in the order indicated by the tagged number: i.e., removing transition $x'\xrightarrow{a} x'$, 
then transition $x\xrightarrow{a} x'$ and lastly transition $x\xrightarrow{b} x'$, 
all loops in $\mcl{G}$ can be eliminated. 
Similarly, it is easy to see that $\mcl{H}$ is an LLEE chart. 
The looping-back chart $\mcl{G}_{x'}$ consists of node $x'$ and transition $x'\xrightarrow{a} x'$. 
Informally, $\mcl{H}$ is the image of chart $\mcl{G}_{x'}$ through $\theta$, because $\theta$ maps to each node of $\mcl{H}$ from a node in $\mcl{G}_{x'}$: i.e., $\theta(x') = X$. 
$\mcl{G}_{x'}$ is well-structured as there is no proper sub-chart of $\mcl{G}_{x'}$ that has $\mcl{H}$ as its image.

\ifx
Informally, here, to see that $\mcl{G}$ is an LLEE chart, 
observe that $\mcl{G}$ is formed of two looping-back charts $\mcl{G}_{x'}$ and $\mcl{G}$ itself, 
%observe that $\mcl{G}$ is formed of two special sub-charts $\mcl{G}_{x'}$ and $\mcl{G}$ itself (called ``loop sub-charts''),  
where $\mcl{G}_{x'}$ is a loop sub-chart of $\mcl{G}$ consisting of node $x'$ and transition $x'\xrightarrow{a} x'$. 
These two sub-charts of $\mcl{G}$ satisfy that 
by eliminating all the tagged transitions `$\xrightarrow{}_{[\cdot]}$' of $\mcl{G}_{x'}$ and $\mcl{G}$ in 
an inside-out manner (i.e., first $\mcl{G}_{x'}$ then $\mcl{G}$ because $\mcl{G}_{x'}$ is a sub-chart of $\mcl{G}$), 
all loops in $\mcl{G}$ can be eliminated.
In this sense, 
it is easy to see that $\mcl{H}$ is also an LLEE chart (where $\mcl{H}$ itself is a looping-back chart). 
Informally, we say that $\mcl{H}$ is the image of the looping-back chart $\mcl{G}_{x'}$ through $\theta$, because for each node of $\mcl{H}$ (i.e. $X$), there is a node in $\mcl{G}_{x'}$ (i.e. $x'$) which the function $\theta$ maps to it (i.e. $\theta(x') = X$). 
$\mcl{G}_{x'}$ satisfies a well-structuredness property 
since there is no proper sub-chart of $\mcl{G}_{x'}$ that has $\mcl{H}$ as its image. 
\fi

In this example, 
in order to prove that $\mcl{H}$ is an LLEE chart, 
stage (II) merges every two bisimilar nodes by performing ``connect-through-to'' graph transformations (here a ``connect $x$ through-to $x'$'' transformation) in $\mcl{G}$ and thus transforms $\mcl{G}$ into $\mcl{H}$. 
\cite{Grabmayer20} generally proved that each transformation preserves the LLEE structure. 
In our opinion, there are two technical complexities in the proof work on stage (II). 
One is to choose the right bisimilar pair each time, the other is 
to analyze different cases of graph transformations based on different types of bisimilar pairs. 

Our approach (II'), on the other hand,  tackles these two technical complexities by directly showing that $\mcl{H}$, as the image of the looping-back chart $\mcl{G}_{x'}$ through $\theta$, satisfies the LEE property. 
Intuitively, the behaviour of $\mcl{H}$ is contrained by the behaviour of $\mcl{G}_{x'}$ through $\theta$ so that all loops in $\mcl{H}$ pass through $X$ and all paths starting from $X$ stay in $\mcl{H}$ before returning to $X$. 
This so-called \emph{LEE sub-chart} on $\mcl{H}$ enforced by its corresponding well-structured looping-back chart $\mcl{G}_{x'}$ is called that 
$\mcl{H}$ \emph{reflects} the LLEE structure of $\mcl{G}_x$. 

\ifx
In this example, 
in order to prove that $\mcl{H}$ is an LLEE chart, 
stage (II) merges the two bisimilar nodes $x$ and $x'$ in $\mcl{G}$ and thus transforms $\mcl{G}$ into $\mcl{H}$. 
\cite{Grabmayer20} generally proved that this transformation process preserves the LLEE structure. 
Our approach (II'), on the other hand, will show that $\mcl{H}$, as the image of the looping-back chart $\mcl{G}_{x'}$ through $\theta$, satisfies the LEE property
because of the LLEE structure of $\mcl{G}_{x'}$ and the bisimulation relation between $X$ and $x'$ indicated by $\theta$. 
We say that the image $\mcl{H}$ \emph{reflects} the LLEE structure of $\mcl{G}_{x'}$, in the sense that intuitively, 
by that all loops in $\mcl{G}_{x'}$ pass through $x'$ and all paths in $\mcl{G}_{x'}$ from $x'$ stay in $\mcl{G}_{x'}$ before returning to $x'$, 
we obtained that
all loops in $\mcl{H}$ must pass through $X$ and all paths from $X$ must stay in $\mcl{H}$ before returning to $X$. 
%because 
%all loops in $\mcl{G}_{x'}$ pass through $x'$ and stay in $\mcl{G}_{x'}$ before reaching $x'$. 
%leads to that all loops in $\mcl{H}$ must pass through $X$. 
\fi

In this paper, 
by utilizing this phenomenon between an image and its correspondence well-structured looping-back chart, 
we will prove in general case, 
that the bisimulation collapse of any LLEE chart itself is an LLEE chart. 

\ifx
In this paper, 
by utilizing  the LLEE structure of $\mcl{G}$ and 
the bisimulation relation $\theta: \mcl{G}\to \mcl{H}$ as shown in this example,  
we will prove in general case, that the bisimulation collapse of any LLEE chart itself is an LLEE chart.  
\fi

As a summery, the main contributions of this paper are mainly twofolds:
\begin{enumerate}[$\bullet$]
    \item 
    We introduce the concepts of \emph{images} and \emph{well-structured looping-back charts} on LLEE charts, and analyze their properties and relationship. 
    %the bisimulation collapse $\mcl{H}$ of an LLEE chart $\mcl{G}$ through the bisimulation function $\theta$ from $\mcl{G}$ to $\mcl{H}$;
    \item Based on these concepts, we propose a new method for proving the completeness of \MilIfree\ by directly showing that the
    bisimulation collapse of every LLEE chart is an LLEE chart, without performing graph transformations. 
    %$\mcl{H}$ is an LLEE chart. 
    
\end{enumerate}

The rest of the paper is organized as follows:
Sect.~\ref{section:Preliminaries} introduces the necessary background knowledge needed to understand our work. 
In Sect.~\ref{section:Looping-back Charts and well-structured Images} we define 
the concepts as the foundation of our proof method, 
while in Sect.~\ref{section:Restoration for LLEE Charts} we mainly focus on our proof. 
Sect.~\ref{section:Conclusion and Future Work} makes a conclusion and an analysis. Sect.~\ref{section:Future Work} discusses about future work.

\section{Preliminaries}
\label{section:Preliminaries}

In this section we introduce the necessary background related to this topic. 
Note that some of them are not directly used in our proof method, but are necessary for understanding the minimization strategy. We tag them as `(*)'.

\subsection{1-Free Regular Expressions and Charts}

\paragraph{1-Free Regular Expression}
A 1-free regular expression $e\in \IFSExp$ is defined as follows in BNF form:
$$
e \dddef a\ |\ 0 \ |\ e + e\ |\ e\cdot e\ |\ e\Star e,
$$
where $a\in \Act$ is called an \emph{action}, ranged over by $a,b,c,d$. 
$e\Star e$ is the binary star operator, $e_1\cdot e_2$ is sometimes abbreviated as $e_1e_2$. 
We use $\equiv$ to represent the syntactic identity between two 1-free regular expressions.

\paragraph{Chart}
A \emph{chart} is a 5-tuple $\mcl{G} = \la V, A, v, \tran, \surd\ra$, where $V$ is a finite set of \emph{nodes}, usually ranged over by $x, y, z, X, Y, Z$; $A\subseteq \Act$ is a set of actions;
$v\in V$ is the \emph{initial node};
$\tran : V\times A\times (V\cup \{\surd\})$ is a set of transitions; 
$\surd$ is a special node called \emph{termination}. 
 Each transition $\la X, a, Y\ra\in \tran$ and transition $\la X, a, \surd\ra\in \tran$ are also written as $X\xrightarrow{a} Y$ and $X\xrightarrow{a}\surd$ respectively, where $a\in A$. 
 We also call $X\xrightarrow{a}\surd$ a \emph{terminal transition}. 
%Often write $X\ter$ if $X\in \ter$. 
A sequence of transitions is often written as: $X_1\xrightarrow{a_1} X_2\xrightarrow{a_2}...\xrightarrow{a_n}X_n\xrightarrow{b} \xi$ ($n > 0$), where 
$\xi$ is a node or $\surd$, 
$X_1\xrightarrow{a_1} X_2$, ..., $X_n\xrightarrow{b} \xi$ are $n$ transitions. 
In this paper, any chart we talk about satisfies that all nodes are reachable from the initial node. 

\ifx
For convenience, when the structure of a chart $\mcl{G}$ in the context is clear or irrelevant, we often mix the concepts of the set of nodes of $\mcl{G}$ and the chart $\mcl{G}$ itself consisting of its set of nodes and all transitions between them. And we simply use $\mcl{G}$ to represent both concepts. 
For example, given two charts $\mcl{G} = \la V_G, A_G, s_G, \tran_G, \surd\ra$ and $\mcl{H} = \la V_H, A_H, s_H, \tran_H, \surd\ra$, 
we often write $x\in \mcl{G}$ to mean $x\in V_G$; 
write $\mcl{G}\subseteq \mcl{H}$ to mean both $V_G\subseteq V_H$ and 
that $\mcl{G}$ is a sub-chart of $\mcl{H}$; 
we will use $\mcl{G}\cup \mcl{H}$ to express both the set $V_G\cup V_H$ and the chart made of the set $V_G\cup V_H$ and all transitions between these nodes. 
%$\la V_G\cup V_H, A_G\cup A_H, v_{G\cup H}, \tran_G\cup \tran_H, \surd \ra$ (where we ignore the discussion of its initial node $v_{G\cup H}$). 
\fi

\ifx
For convenience, when no ambiguity is caused, we often directly use the chart symbol, like $\mcl{G}, \mcl{H}$ and $\mcl{C}$, to express its set of nodes. 
In a given chart, we often mix using the concept of a set of nodes and 
the concept of the sub-chart consisting of these nodes and all transitions between them. 
e.g., 
for the sub-chart $\mcl{G}_{x'}: \la \{x'\}, \{a\}, x', \{x'\xrightarrow{a} x'\}, \surd\ra$ of $\mcl{G}$ in Fig.~\ref{figure:A simple example illustrating our idea}, 
we can write $x'\in \mcl{G}_{x'}$ to mean that $x'$ is a node of $\mcl{G}_{x'}$;
$\mcl{G}_{x'}$ can also be simply denoted as its set of nodes: $\{x'\}$. 
%Given two sub-charts $\mcl{G}$ and $\mcl{H}$, we will often use 
%$\mcl{G}\cup \mcl{H}$ to express the set of nodes of both $\mcl{G}$ and $\mcl{H}$ or 
%the chart made of their nodes and the transitions between their nodes. 
Given two sub-charts $\mcl{G} = \la V_G, A_G, s_G, \tran_G, \surd\ra$ and $\mcl{H} = \la V_H, A_H, s_H, \tran_H, \surd\ra$, 
depending on the context, 
we will use 
$\mcl{G}\cup \mcl{H}$ to express either the set $V_G\cup V_H$ or the chart 
$\la V_G\cup V_H, A_G\cup A_H, v_{G\cup H}, \tran_G\cup \tran_H, \surd \ra$ (where we ignore the discussion of its initial node $v_{G\cup H}$). 
\fi

Sometimes we ignore symbol $a$ and simply write a transition $X\xrightarrow{a} Y$ as $X\xrightarrow{}Y$. 
We use $X\eqvn Y$ to represent that nodes $X$ and $Y$ are identical. 
Use $\cdot$ to represent an arbitrary node, e.g. a transition $X\xrightarrow{} \cdot$. 
The reflexive and transitive closure of $\xrightarrow{}$ is defined as:
$X\xrightarrow{}^* Y$ if either (i) $X\eqvn Y$ or (ii) $X\xrightarrow{}Z$
and $Z\xrightarrow{}^* Y$ for some $Z$. 
The transitive closure $X\xrightarrow{}^+ Y$ is defined s.t. $X\xrightarrow{}Z$ and $Z\xrightarrow{}^* Y$ for some $Z$.
A \emph{path} is a finite or infinite sequence of transitions. 
We use $X\xrightarrow[\not \eqvn Z]{} Y$, $X\xrightarrow[\not\eqvn Z]{}^* Y$ or $X\xrightarrow[\not\eqvn Z]{}^+ Y$ to represent a path where all intermediate nodes (not including $X$) are not $Z$.
A \emph{loop} is a finite path of the form: $X\xrightarrow{}^+ X$, i.e., it starts from and returns to the same node $X$. 
%A loop with only one node is called a \emph{one-loop}.
%A loop is said to be \emph{proper} if it has no 1-transitions. 
%A 1-chart is called \emph{weakly guarded}, if in it there is no loop in which all transitions are 1-transitions. 

\paragraph{Chart Interpretation (*)}
The \emph{chart interpretation} of a 1-free regular expression $e\in \IFSExp$ is a chart 
$$
\mcl{C}(e) \dddef \la V(e), A(e), e, \tran(e), \surd\ra, 
$$
which is obtained by expanding $e$ according to its operational semantics given in Table~\ref{table:Operational Semantics of 1-free Regular Expressions}. 
$e$ is the initial node. 
$\tran(e)$ consists of all the transitions during the expansion of $e$. 
$V(e)$ is the finite set of expressions reachable from $e$ through transitions in $\tran(e)$. 
$A(e)$ is the set of actions appeared in $e$. 
%$\ter(e)$ is the set of expressions that can immediately terminate during the expansion, denoted by $e \ter$.  
%$V(e)$ is finite according to the finiteness of regular expressions~\cite{???}. 

\begin{table}[tbph]
	\centering
	\begin{tabular}{c}
		\toprule
        \\
		\infer[]
        {a\xrightarrow{a} \surd}
        {}
        \ \ \ 
        \infer[]
        {e_1 + e_2\xrightarrow{a} \surd}
        {e_1\xrightarrow{a} \surd}
        \ \ \ 
        \infer[]
        {e_1 + e_2\xrightarrow{a} \surd}
        {e_2\xrightarrow{a} \surd}
        \ \ \ 
        \infer[]
        {e_1\cdot e_2\xrightarrow{a} e_2}
        {
        e_1\xrightarrow{a}\surd
        }
        \ \ \ 
        \infer[]
        {
        e_1\Star e_2\xrightarrow{a} e_1\Star e_2
        }
        {
        e_1\xrightarrow{a} \surd
        }
        \ \ \ 
        \infer[]
        {
        e_1\Star e_2\xrightarrow{a}\surd
        }
        {e_2\xrightarrow{a} \surd}
        \\
        \\
        \infer[]
        {e_1 + e_2\xrightarrow{a} e'_1}
        {e_1\xrightarrow{a} e'_1}
        \ \ \ 
        \infer[]
        {e_1 + e_2\xrightarrow{a} e'_2}
        {e_2\xrightarrow{a} e'_2}
        \ \ \ 
        \infer[]
        {e_1 \cdot e_2\xrightarrow{a} e'_1 \cdot e_2}
        {e_1\xrightarrow{a} e'_1}
        \\
        \\ 
        \infer[]
        {e_1\Star e_2\xrightarrow{a} e'_1\cdot (e_1\Star e_2)}
        {e_1\xrightarrow{a} e'_1}
        \ \ \ 
        \infer[]
        {e_1\Star e_2\xrightarrow{a} e'_2}
        {e_2\xrightarrow{a} e'_2}
        \\
        \\
		\bottomrule
	\end{tabular}
	\caption{Operational semantics of 1-free regular expressions}
	\label{table:Operational Semantics of 1-free Regular Expressions}
\end{table}

\subsection{Bisimulation Equivalence and Proof System \MilIfree}

\paragraph{Bisimulation Equivalence}

A \emph{bisimulation relation} $\R\subseteq \mbb{V}\times \mbb{V}$, where $\mbb{V}$ is the universal set of nodes (i.e., the set of all nodes in our discussed domain), is a binary relation satisfying that for any $(X, Y)\in \R$, the following hold:
\begin{enumerate}[(1)]
	\item if $X\xrightarrow{a}X'$, then there is some $Y''$ such that $Y\xrightarrow{a}Y''$ and $(X', Y'')\in \R$;
	\item if $Y\xrightarrow{a}Y'$, then there is some $X''$ such that $X\xrightarrow{a}X''$ and $(X'', Y')\in \R$;
	\item $X \xrightarrow{a}\surd $ iff $Y \xrightarrow{a} \surd$. 
\end{enumerate}

\ifx
Given two charts $\mcl{G}=\la V_1, A_1, s_1, \tran_1, \surd\ra$ and 
$\mcl{H}=\la V_2, A_2, s_2, \tran_2, \surd\ra$, 
a \emph{bisimulation relation} $\R\subseteq V_1\times V_1$ 
between two charts $\mcl{G}$ and $\mcl{H}$
is a binary relation satisfying that
\begin{enumerate}[(1)]
        \item $(s_1, s_2)\in \R$;
	\item if $X\xrightarrow{a}X'\in \tran_1$, then $Y\xrightarrow{a}Y'\in \tran_2$ and $(X', Y')\in \R$;
	\item if $Y\xrightarrow{a}Y'\in \tran_2$, then $X\xrightarrow{a}X'\in \tran_1$ and $(X', Y')\in \R$;
	\item $X \xrightarrow{a}\surd \in \tran_1$ iff $Y \xrightarrow{a} \surd\in \tran_2$. 
\end{enumerate}
\fi

Given two charts $\mcl{G}=\la V_1, A_1, v_1, \tran_1, \surd\ra$ and 
$\mcl{H}=\la V_2, A_2, v_2, \tran_2, \surd\ra$, 
a \emph{bisimulation relation} $\R\subseteq V_1\times V_2$ \emph{between} $\mcl{G}$ \emph{and} $\mcl{H}$ satisfies that $(v_1, v_2)\in \R$. 
We also denote $\R : V_1\times V_2$ as $\R: V_1\to V_2$, and 
call $V_1$ the \emph{domain} and $V_2$ the \emph{codomain} of relation $\R$. 
Given $\R: V_1\to V_2$, 
for any set $A\subseteq V_1$ and $B\subseteq V_2$, we define %$\R(\mcl{A})$ as the set of nodes in $\mcl{H}$ that are bisimilar to the nodes in $\mcl{A}$, or formally 
$\R(A)\dddef \{X\ |\ X\in V_2, \exists y\in A. (y, X)\in \R\}$;
and define %$\R^{-1}(\mcl{B})$ 
%as the set of nodes in $\mcl{G}$ bisimilar to the nodes in $\mcl{B}$, or formally 
$\R^{-1}(B)\dddef \{x\ |\ x\in V_1, \exists Y\in B. (x, Y)\in \R\}$. 
A bisimulation relation $\R: V_1\to V_2$ is called a \emph{bisimulation function} if $\R$ is a function from $V_1$ to $V_2$.  
We often use $\theta$ to represent a bisimulation function. 

%Let $\ran(\R) \dddef \{X\ |\ \exists Y. (X, Y)\in \R\}$. 

Two nodes $X$ and $Y$ are \emph{bisimilar}, denoted by $X\sim Y$, if there is a bisimulation relation $\R$ such that $(X, Y)\in \R$. 
Two charts $\mcl{G}$ and $\mcl{H}$ are said bisimilar, denoted by $\mcl{G} \sim \mcl{H}$, if there is a bisimulation relation $\R$ between $\mcl{G}$ and $\mcl{H}$. 
The \emph{bisimulation equivalence} $e_1 = e_2$ between two 1-free regular expressions $e_1, e_2\in \IFSExp$, is defined 
if $\mcl{C}(e_1)\sim \mcl{C}(e_2)$.

\paragraph{Proof System \MilIfree\ (*)}
Proof system \MilIfree\ for 1-free regular expressions modulo bisimulation equivalence $=$ is given in Table~\ref{table:1-free proof system for 1-free regular expressions} as a set of rules on equations, where (A1) - (A9) are axioms, (R1) is an inference rule. 
%(R1) and (R2) are inference rules. 
%The congruence rule (R2) was in fact inherently used in previous work and for better understanding we explicitly give it here.  
%$e[e_i]$ ($i\in \{1,2\}$) means that $e_i$ is a sub-expression appeared in expression $e$. 
Since our work in this paper will not concern this proof system, 
we will not explain the meaning of each rule. One can refer to~\cite{Grabmayer20} for more details. 

\begin{table}[tbph]
	\centering
	\begin{tabular}{c}
		\toprule
        $
        {e_1 + e_2 = e_2 + e_1}\ (A1)
        $
	\ \ \ 
        $
	{(e_1 + e_2) + e_3 = e_1 + (e_2 + e_3)}\ (A2)
        $	
    \ \ \ 
    $
    {e + e = e}\ (A3)
    $
    \\
    \\
    $
    {(e_1 + e_2)\cdot e_3 = e_1\cdot e_3 + e_2\cdot e_3}\ (A4)
    $
	\ \ \ 
    $
    {(e_1\cdot e_2)\cdot e_3 = e_1\cdot (e_2\cdot e_3)}\ (A5)
    $	
  \ \ \ 
  $
  {e + 0 = e}\ (A6)
  $
  \\
  \\
  $
  {0\cdot e = 0}\ (A7)
  $
  \ \ \ 
  $
  {e_1\Star e_2 = e_1\cdot (e_1\Star e_2) + e_2}\ (A8)
  $
  \ \ \ 
  $
  {(e_1\Star e_2)\cdot e_3 = e_1\Star (e_2\cdot e_3)}\ (A9)
  $
   \\
    \\
    $
    \infer[(R1)]
    {
        e = e_1\Star e_2
    }
    {
        e = e_1\cdot e + e_2
    }
    $
    \\
		\bottomrule
	\end{tabular}
	\caption{Proof system \MilIfree\ for 1-free regular expressions}
	\label{table:1-free proof system for 1-free regular expressions}
\end{table}

\ifx
\begin{table}[tbph]
	\centering
	\begin{tabular}{c}
		\toprule
        $
        \infer[(A1)]
        {e_1 + e_2 = e_2 + e_1}
        {}
        $
	\ \ \ 
        $\infer[(A2)]
	   {(e_1 + e_2) + e_3 = e_1 + (e_2 + e_3)}{}$	
    \ \ \ 
    $
    \infer[(A3)]
	   {e + e = e}{}
    $
    \\
    \\
    $\infer[(A4)]
    {(e_1 + e_2)\cdot e_3 = e_1\cdot e_3 + e_2\cdot e_3}{}$
	\ \ \ 
    $\infer[(A5)]
    {(e_1\cdot e_2)\cdot e_3 = e_1\cdot (e_2\cdot e_3)}{}$	
  \ \ \ 
  $\infer[(A6)]
  {e + 0 = e}
  {}
  $
  \\
  \\
  $
  \infer[(A7)]
  {0\cdot e = 0}{}
  $
  \ \ \ 
  $
  \infer[(A8)]
  {e_1\Star e_2 = e_1\cdot (e_1\Star e_2) + e_2}{}
  $
  \ \ \ 
  $
  \infer[(A9)]
  {(e_1\Star e_2)\cdot e_3 = e_1\Star (e_2\cdot e_3)}{}
  $
   \\
    \\
    $
    \infer[(R1)]
    {
        e = e^*_1\cdot e_2
    }
    {
        e = e_1\cdot e + e_2
    }
    $
    \ \ \ 
    $
    \infer[(R2)]
    {e[e_1] = e[e_2]}
    {e_1 = e_2}
    $
    \\
		\bottomrule
	\end{tabular}
	\caption{Proof system \MilIfree\ for 1-free regular expressions}
	\label{table:1-free proof system for 1-free regular expressions}
\end{table}
\fi

Given an equation $e_1 = e_2$, we write $\IFvdash e_1 = e_2$ to mean that $e_1 = e_2$ can be derived by applying the rules in Table~\ref{table:1-free proof system for 1-free regular expressions} in a finite number of steps. 
Proof system \MilIfree\ is sound, that is,  $\IFvdash e_1 = e_2$ implies $e_1 = e_2$ for any $e_1, e_2\in \IFSExp$. 
The completeness of \MilIfree\ was first obtained as the main result in~\cite{Grabmayer20}, which means that for every semantically-correct equation $e_1 = e_2$ with $e_1, e_2\in \IFSExp$, $\IFvdash e_1 = e_2$. 

%Until the recent work~\cite{???}, which announced its solution, the completeness of \Mil\  had remained an open problem. 

\paragraph{Chart and 1-Free Regular Expression Equations (*)}
An \emph{evaluation} $s: \mbb{V}\to \IFSExp$ defines a function that maps each node to a 1-free regular expression. 
Given an evaluation $s$, a chart $\mcl{G}=\la \{X_i\}_{i\in [1,n]}, A, X_1, \tran, \surd\}\ra$ corresponds to 
a set of equations w.r.t. $s$:
$$s(\mcl{G})\dddef \left\{s(X_i) = \sum^n_{j=1}e_{ij}\cdot s(X_j) + \sum^m_{k=1}f_{ik} \right\}_{i\in [1,n]},$$ where in each equation $s(X_i) = \sum^n_{j=1}e_{ij}\cdot s(X_j) + \sum^m_{k=1}f_{ik}$, term $e_{ij}\cdot s(X_j)$ ($e_{ij}\in A$) appears iff $X_i\xrightarrow{e_{ij}}X_j$ is a transition in $\tran$; 
term $f_{ik}$ appears iff $X_i\xrightarrow{f_{ik}} \surd$ is a transition in $\tran$. 
%summand $1$ appears iff $X_i\ter$. 
\ifx
We sometimes call $s(X_i)$ a \emph{node expression} and call $s(X_i) = \sum^n_{j=1}e_{ij}\cdot s(X_j) + \sum^m_{k=1}f_{ik}$ the \emph{equation of node $X_i$ (w.r.t. $s$)}. 
\fi
%An evaluation $s$ is called a \emph{solution} of $\mcl{G}$, if the set of equations $s(\mcl{G})$ are true under the bisimulation equivalence $=$ given above. 
An evaluation $s$ is a \emph{provable solution} of $\mcl{G}$, if each equation of $s(\mcl{G})$ can be derived according to proof system \MilIfree\ in Table~\ref{table:1-free proof system for 1-free regular expressions}, denoted by $\IFvdash s(\mcl{G})$. 
Call $s(X_1)$ a \emph{primary solution} of $\mcl{G}$.

\subsection{Loop Charts, LEE Charts and LLEE Charts~\cite{Grabmayer20}}
\label{section:LEE-LLEE Charts}

\paragraph{Loop Chart}
A \emph{loop chart} $\mcl{A} = \la V, A, X, \tran, \surd\ra$ is a chart with the initial node $X$ called \emph{start node} and satisfying the following three conditions:
\begin{enumerate}
	\item[(L1)] There is an infinite path from $X$ in $\mcl{A}$;
	\item[(L2)] In $\mcl{A}$ every infinite path from $X$ returns to $X$ after a finite number of transitions;
	\item[(L3)] $\surd$ is not in $\mcl{A}$. 
%Any node in $\mcl{A}$ is not $\surd$, and does not reach $\surd$ before reaching $X$. In other words, $Y\not\xrightarrow[\not\equiv X]{}^*\surd$ for any node $Y\not\equiv X$ in $\mclul{A}$. 
\end{enumerate}
The set of transitions starting from $X$ is called \emph{loop-entry transitions}. 
Other transitions are called \emph{body transitions}. 
%each of which belongs to an infinite path of $\mclul{A}$.
Call $V_\mcl{A} \setminus \{X\}$ the \emph{body} of $\mcl{A}$. 
%A loop chart as a sub-chart of a chart $\mcl{H}$ is called a \emph{loop sub-chart} of $\mcl{H}$. 

For example,
in Fig.~\ref{figure:A simple example illustrating our idea}, chart $\mcl{H}$ is a loop chart. Chart $\mcl{G}$ with the start node $x$ is not a loop chart as it violates (L2). 
%Chart $\mcl{G}_{x'}: \la \{x'\}, \{a, b\}, x', \{x'\xrightarrow{a}x'\}, \surd\ra$ is a loop chart, but $\mcl{G}$ is not (violating (L2)).   
%Chart $\la \{x,x'\}, \{a, b\}, x, \{x\xrightarrow{a} x', x\xrightarrow{b} x', x'\xrightarrow{b} x\}, \surd\ra$ 
%is a loop sub-chart of $\mcl{G}$, but is not a loop chart. 

%\paragraph{$\la X, \E\ra$-Generated Sub-Chart}
In a chart $\mcl{G}=\la V, A, v, \tran, \surd\ra$, 
a \emph{$\la X, \E\ra$-generated chart} is a sub-chart $\mcl{C} = \la V_1, A, X, \tran_1, \surd\ra$ of $\mcl{G}$ consisting of all the transitions each of which 
is on a path that starts from a transition $X\xrightarrow{} Y$ in $\E$, 
and continues with other transitions of $\mcl{G}$ until reaching $X$ again. 
Formally, 
$\tran_1 \dddef \E\cup \{Y\xrightarrow{} Z\ |\ X\xrightarrow[\not\eqvn X]{} X_1\xrightarrow[\not\eqvn X]{}^* Y\xrightarrow{} Z, (X\xrightarrow{} X_1)\in \E\}\cup \{Y\xrightarrow{} \surd\ |\ X\xrightarrow[\not\eqvn X]{} X_1\xrightarrow[\not\eqvn X]{}^* Y\xrightarrow{} \surd, (X\xrightarrow{} X_1)\in \E\}$, $V_1$ is the set of nodes appearing in $\tran_1$. 
%$V_1 = \{X\ |\ X\xrightarrow{} \cdot \in \to_1 \mbox{ or }\cdot \rightarrow{} X\in \to_1\}$.  

A $\la X, \E\ra$-generated chart of $\mcl{G}$ is called a \emph{loop sub-chart} of $\mcl{G}$ if 
it is a loop chart. 

In Fig.~\ref{figure:A simple example illustrating our idea}, chart $\mcl{H}$,
as the $\la X, \{X\xrightarrow{a} X, X\xrightarrow{b} X\}\ra$-generated chart,
is a loop sub-chart of itself. 
Chart $\mcl{G}_{x'}: \la \{x'\}, \{a, b\}, x', \{x'\xrightarrow{a}x'\}, \surd\ra$, as the $\la x', \{x'\xrightarrow{a} x'\}\ra$-generated chart, is a loop sub-chart of $\mcl{G}$.

\paragraph{LEE Chart}
An \emph{elimination of a loop chart} $\mclul{A}$ startinig from $X$ is a transformation process in which we first remove all of its loop-entry transitions from $X$, then remove all the nodes and transitions that become unreachable.  
A chart $\mclul{G}$ is said to satisfy the \emph{loop existence and elimination} (LEE) property, if there exists a transformation process in which each loop sub-chart of $\mcl{G}$ can be eliminated step by step, so that after 
the process, $\mclul{G}$ results in a chart without an infinite path. 
A chart is called an \emph{LEE chart} if it satisfies the LEE property. 

\begin{figure}[tbhp]
    \centering
    %\scalebox{1}{
    \noindent\makebox[\textwidth]{ %move the figure to the right of the whole page
    \begin{tabular}{c}
    \begin{tikzpicture}[->,>=stealth', node distance=1.7cm]
     %\node[state4,label=west:\mbox{$[p]\phi$}, fill=blue] (s1) {};
     %\coordinate (x) at (0,0);

     \coordinate (x) at (0,0);
     \coordinate (xi) at ($(x) + (0, 1.25cm)$);
     \coordinate (G) at ($(x) + (0,-1cm)$);
     
    \node[circle,fill,inner sep=1.25pt, outer sep=1pt, label=south:\mbox{$\underline{x}$}] at (x) (nx) {};
    \node[circle,fill,inner sep=1.25pt, outer sep=1pt, label={[label distance = 0.1cm]east:\mbox{$\underline{x'}$}}] at (xi) (nxi) {};
    \node[draw=none] at (G) (nG) {$\mcl{G}$};
    
     \path
     (nx) edge [out=180,in=180] node[left,yshift=0.28cm] {} node[left, yshift=0cm,black] {$a$} (nxi)
     (nx) edge [out=135,in=235] node[right, yshift=0.25cm,xshift=0.1cm] {} node [right, xshift=-0.1cm,black,yshift=0cm] {$b$} (nxi)
     (nxi) edge [out=-15,in=15] node[right, xshift=0cm,black] {$b$} (nx)
     (nxi) edge [out=45,in=135,loop, looseness=10, thick, red] node[right,xshift=0.1cm] {} node[above, xshift=0cm,black] {$a$} (nxi)
    ;

    \coordinate (Ix) at ($(x) + (3.5cm,0)$);
     \coordinate (Ixi) at ($(Ix) + (0, 1.25cm)$);
     \coordinate (IG) at ($(Ix) + (0,-1cm)$);
     
    \node[circle,fill,inner sep=1.25pt, outer sep=1pt, label=south:\mbox{$\underline{x}$}] at (Ix) (nIx) {};
    \node[circle,fill,inner sep=1.25pt, outer sep=1pt, label={[label distance = 0.1cm]east:\mbox{$\underline{x'}$}}] at (Ixi) (nIxi) {};
    \node[draw=none] at (IG) (nIG) {$\mcl{G}'$};
    
     \path
     (nIx) edge [out=180,in=180, thick, red] node[left,yshift=0.28cm] {} node[left, yshift=0cm,black] {$a$} (nIxi)
     (nIx) edge [out=135,in=235, thick, red] node[right, yshift=0.25cm,xshift=0.1cm] {} node [right, xshift=-0.1cm,black,yshift=0cm] {$b$} (nIxi)
     (nIxi) edge [out=-15,in=15] node[right, xshift=0cm,black] {$b$} (nIx)
     %(nIxi) edge [out=45,in=135,loop, looseness=10, thick, red] node[right,xshift=0.1cm] {} node[above, xshift=0cm,black] {$a$} (nIxi)
    ;

    \coordinate (IIx) at ($(Ix) + (3.5cm,0)$);
    %\coordinate (IIxi) at ($(IIx) + (0, 1.25cm)$);
    \coordinate (IIG) at ($(IIx) + (0,-1cm)$);
     
    \node[circle,fill,inner sep=1.25pt, outer sep=1pt, label=south:\mbox{$\underline{x}$}] at (IIx) (nIIx) {};
    %\node[circle,fill,inner sep=1.25pt, outer sep=1pt, label={[label distance = 0.1cm]east:\mbox{$\underline{x'}$}}] at (IIxi) (nIIxi) {};
    \node[draw=none] at (IIG) (nIIG) {$\mcl{G}''$};
    
     \path
     %(nIIx) edge [out=180,in=180, thick, red] node[left,yshift=0.28cm] {} node[left, yshift=0cm,black] {$a$} (nIIxi)
     %(nIIx) edge [out=135,in=235, thick, red] node[right, yshift=0.25cm,xshift=0.1cm] {} node [right, xshift=-0.1cm,black,yshift=0cm] {$b$} (nIIxi)
     %(nIIxi) edge [out=-15,in=15] node[right, xshift=0cm,black] {$b$} (nIIx)
     %(nIxi) edge [out=45,in=135,loop, looseness=10, thick, red] node[right,xshift=0.1cm] {} node[above, xshift=0cm,black] {$a$} (nIxi)
    ;

    \coordinate (P1) at ($(x) + (0.5cm,-0.25cm)$);
    \coordinate (P2) at ($(Ix) + (-0.5cm,-0.25cm)$);
    \coordinate (P3) at ($(Ix) + (0.5cm,-0.25cm)$);
    \coordinate (P4) at ($(IIx) + (-0.5cm,-0.25cm)$);

    \draw [thick, -To, line width=1.5pt] (P1) to node[above] {transform} (P2);
    \draw [thick, -To, line width=1.5pt] (P3) to node[above] {transform} (P4);

    \coordinate (Cx) at ($(x) + (-3cm,0)$);
     \coordinate (Cxi) at ($(Cx) + (0, 1.25cm)$);
     \coordinate (CG) at ($(Cx) + (0,-1cm)$);
     
    \node[circle,fill,inner sep=1.25pt, outer sep=1pt, label=south:\mbox{$\underline{x}$}] at (Cx) (nCx) {};
    \node[circle,fill,inner sep=1.25pt, outer sep=1pt, label={[label distance = 0.1cm]east:\mbox{$\underline{x'}$}}] at (Cxi) (nCxi) {};
    \node[draw=none] at (CG) (nCG) {$\hat{\mcl{G}}/\mcl{G}$};
    
     \path
     (nCx) edge [out=180,in=180,thick,red] node[left,yshift=0.28cm] {$[2]$} node[left, yshift=0cm,black] {$a$} (nCxi)
     (nCx) edge [out=135,in=235, thick,red] node[right, yshift=0.25cm,xshift=0.1cm] {$[2]$} node [right, xshift=-0.1cm,black,yshift=0cm] {$b$} (nCxi)
     (nCxi) edge [out=-15,in=15] node[right, xshift=0cm,black] {$b$} (nCx)
     (nCxi) edge [out=45,in=135,loop, looseness=10, thick, red] node[right,xshift=0.1cm] {$[1]$} node[above, xshift=0cm,black] {$a$} (nCxi)
    ;

    \end{tikzpicture}
    \\
    \end{tabular}
    } %end of makebox
    %}%end scalebox
        %\captionsetup{font=footnotesize}
        \caption{An example of elimination processes}
        \label{figure:An example of elimination processes}
\end{figure}
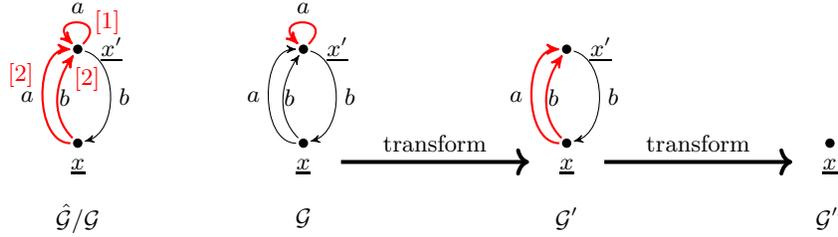

Fig.~\ref{figure:An example of elimination processes} shows how the LEE chart $\mcl{G}$ of Fig.~\ref{figure:A simple example illustrating our idea} can be eliminated in two steps: 
firstly the loop sub-chart $\mcl{G}_{x'}$ is eliminated by removing the loop-entry transition $x'\xrightarrow{a} x'$; after that the resulted chart $\mcl{G}'$, which becomes a loop sub-chart of itself, is then eliminated by removing two loop-entry transitions $x\xrightarrow{a} x'$ and $x\xrightarrow{b} x'$, and then removing node $x'$ and transition $x'\xrightarrow{b} x$ that become unreachable.
This leaves the resulted chart $\mcl{G}''$ having no infinite paths.

As illustrated in the example above, 
the process of loop elimination is actually conducted in an inside-out manner, in the sense that 
a sub-chart $\mcl{A}$ of $\mclul{G}$ can only become a loop sub-chart after all the loop sub-charts of $\mcl{A}$ have been eliminated previously. 
This is an important intuition of our method purposed in Sect.~\ref{section:Restoration for LLEE Charts}. 

\ifx
Given an elimination process on $\mcl{G}$, 
we call a sub-chart $\mcl{A}$ of $\mcl{G}$ a \emph{loop sub-chart} (w.r.t. this process), if $\mcl{A}$ is or can become a loop chart during this process.  
We use $\mcl{A}(X, \E_X)$ (or simply $\mcl{A}_X$ by omitting $\E_X$) to represent 
a loop sub-chart of $\mcl{G}$, with $X$ its start node and $\E_X$ the set of its loop-entry transitions starting from $X$. 
\fi

For an elimination process on an LEE chart $\mcl{G}$, a function $\hat{\mcl{G}}$, called an \emph{LEE witness}, is used to indicate the order of removing the loop-entry transitions in $\mcl{G}$. 
$\hat{\mcl{G}}$ maps the transitions of $\mcl{G}$ to a down-closed set of natural numbers, we call \emph{order numbers}. 
It maps a loop-entry transition $X\xrightarrow{} Y$ to a number $n > 0$, denoted by $X\xrightarrow{}_{[n]} Y$, if $X\xrightarrow{} Y$ is removed at the $n$th step of the process; and maps body transitions to $0$, denoted by $\cdot \xrightarrow{}_{[0]} \cdot$ or $\cdot \rightarrow_{\bo} \cdot$. 
Write an LEE chart $\mcl{G}$ as $\hat{\mcl{G}}/\mcl{G}$. 

%In chart $\mcl{G}$ of Fig.~\ref{figure:A simple example illustrating our idea}, 
%both $\mcl{G}_{x'}$ and $\mcl{G}_{x}$ (i.e. $\mcl{G}$ itself) are the loop sub-charts of $\mcl{G}$ w.r.t. the elimination process described in Fig.~\ref{figure:An example of elimination processes}. 
In chart $\mcl{G}$ of Fig.~\ref{figure:An example of elimination processes}, 
the tagged numbers on arrows indicate an LEE witness $\hat{\mcl{G}}$ of the elimination process in Fig.~\ref{figure:An example of elimination processes}, 
where $\hat{\mcl{G}}(x'\xrightarrow{a} x') \dddef 1$, $\hat{\mcl{G}}(x\xrightarrow{a} x') \dddef 2$, $\hat{\mcl{G}}(x\xrightarrow{b} x') \dddef 2$, $\hat{\mcl{G}}(x'\xrightarrow{b} x)\dddef 0$.

\paragraph{LLEE Chart}
The elimination process of an LEE chart is not necessarily unique. 
If in an LEE chart there exists an elimination process in which 
no loop-entry transitions are removed from a node in the body of a previously eliminated loop sub-chart, then say the LEE chart satisfies the \emph{layered loop existence and elimination} (LLEE) property. The LEE chart is then called an \emph{LLEE chart}. 
A witness $\hat{\mcl{G}}$ of an LLEE chart $\mcl{G}$ is thus called an \emph{LLEE witness}. 

It is easy to see that chart $\mcl{G}$ of Fig.~\ref{figure:An example of elimination processes} is also an LLEE chart.

\ifx
In an LLEE chart, a relation $X\pincycle Y$, called \emph{in-loop} relation in this paper, is defined if 
there is a path starting with a loop-entry transition from $X$ to $Y$, that is, $X\xrightarrow{}_{[m]}\cdot \xrightarrow{}^*_{\textit{bo}} Y$. 
$X\incycle Y$ is defined if $X\equiv Y$ or $X\pincycle Y$. 
%$\incycle$ was proved to be a partial-order relation, as stated in the following property. 

\begin{prop}[\cite{???}]
    Relation $\incycle$ in an LLEE chart $\mcl{G}$ is a partially-ordered relation. 
\end{prop}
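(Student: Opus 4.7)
The plan is to verify the three defining properties of a partial order for $\incycle$, exploiting the layered elimination order supplied by any LLEE witness $\hat{\mcl{G}}$.

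Reflexivity is immediate from the definition, which stipulates $X \incycle Y$ whenever $X \equiv Y$.

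For antisymmetry, suppose $X \pincycle Y$ and $Y \pincycle X$ with $X \not\equiv Y$. Each witnessing path begins with a loop-entry transition, whose containing loop sub-charts $\mcl{A}_X$ and $\mcl{A}_Y$ have $X$ and $Y$ as their respective start nodes, and have, respectively, $Y$ and $X$ in their bodies. Compare the order numbers $m,n$ that $\hat{\mcl{G}}$ assigns to these loop-entries and, without loss of generality, assume $m \le n$, so $\mcl{A}_X$ is eliminated no later than $\mcl{A}_Y$. The defining layered condition of LLEE forbids removing any loop-entry from a body node of $\mcl{A}_X$ after $\mcl{A}_X$ has been eliminated; since $Y$ lies in that body, this forces $n \le m$. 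Hence $m = n$, so both loop-entries disappear at the same step and must belong to the same loop sub-chart, which would force its start node to be both $X$ and $Y$, contradicting $X \not\equiv Y$.

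For transitivity, suppose $X \pincycle Y \pincycle Z$, witnessed by loop-entries $X \xrightarrow{}_{[m]} X_1$ with $X_1 \xrightarrow{}^*_{\bo} Y$, and $Y \xrightarrow{}_{[n]} Y_1$ with $Y_1 \xrightarrow{}^*_{\bo} Z$. The same LLEE argument as in antisymmetry shows $n < m$, so $\mcl{A}_Y$ is eliminated strictly earlier than $\mcl{A}_X$. The core structural claim is then that loop sub-charts nest: $\mcl{A}_Y$ is a sub-chart of $\mcl{A}_X$. Granting this, the body of $\mcl{A}_Y$ is contained in the body of $\mcl{A}_X$, so $Z$ lies in the latter; concatenating the body-level sub-paths that connect $X_1$ to $Y$ and then $Y$ to $Z$ inside $\mcl{A}_X$ yields a witnessing path for $X \pincycle Z$.

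The main obstacle is the nesting claim --- that if $Y$ sits in the body of $\mcl{A}_X$ and itself starts a loop sub-chart $\mcl{A}_Y$, then every node and transition of $\mcl{A}_Y$ stays inside $\mcl{A}_X$. I would argue this by induction on the order number assigned by $\hat{\mcl{G}}$: any path realising membership in $\mcl{A}_Y$ returns to $Y$ using transitions whose loop-entries, by the layered LLEE condition, must have already been eliminated before $\mcl{A}_X$ is; hence they remain part of the chart while $\mcl{A}_X$ is still present, and the definition of a $\la X, \E_X\ra$-generated chart then forces them into $\mcl{A}_X$. This is the delicate step; once it is established, the rest of the argument is bookkeeping against the order numbers supplied by $\hat{\mcl{G}}$.
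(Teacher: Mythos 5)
Your reflexivity and antisymmetry arguments are fine; the antisymmetry step, comparing the order numbers of the two loop-entry transitions and invoking the layered condition, is essentially the standard argument (the paper itself does not prove facts of this kind but defers to~\cite{Grabmayer20}). The genuine gap is in transitivity, and it cannot be repaired as written: the concatenated path $X\xrightarrow{}_{[m]} X_1\xrightarrow{}^*_{\bo} Y\xrightarrow{}_{[n]} Y_1\xrightarrow{}^*_{\bo} Z$ contains the loop-entry transition $Y\xrightarrow{}_{[n]} Y_1$ in its interior, so it is \emph{not} of the form $X\xrightarrow{}_{[m]}\cdot\xrightarrow{}^*_{\bo} Z$ required by the definition of $\pincycle$. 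Your nesting claim only places $Z$ inside the $\la X,\E_X\ra$-generated chart, whose defining paths are allowed to continue through arbitrary transitions, including nested loop-entries; membership there does not produce a body-transition-only path from $X_1$ to $Z$, which is what $X\pincycle Z$ demands.

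In fact the single-step relation is genuinely non-transitive, so the statement is false under the literal reading of $\incycle$ as the reflexive closure of $\pincycle$. In the chart $\hat{\CII}/\CII$ of Fig.~\ref{figure:An example of images} one has $x\pincycle z''$ (via the loop-entry $x\xrightarrow{b_1}_{[2]} z''$) and $z''\pincycle y$ (via $z''\xrightarrow{a_3}_{[1]} y$), but not $x\pincycle y$: the only loop-entry from $x$ leads to $z''$, and the only body transitions leaving $z''$, namely $z''\xrightarrow{a_2} x$ and $z''\xrightarrow{a_4} k\xrightarrow{d_2} x$, return to $x$ without ever reaching $y$. Since also $x\not\eqvn y$, transitivity of $\incycle$ fails. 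The result that actually holds --- and the one the paper records as Prop.~\ref{prop:looping-back relation}, again without proof --- concerns the \emph{transitive closure} $\lpb^+$, which is a well-founded strict partial order. Your order-number and nesting machinery is the right tool for that version, but the partial order must be taken on $\lpb^+$ (equivalently, on the reflexive-transitive closure of $\pincycle$), not on $\pincycle$ itself.
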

\fi

Not all LEE witnesses are LLEE witnesses. 
Considering the chart $\hat{\CI}/\CI$ in Fig.~\ref{figure:An example of a non-LLEE LEE witness}, 
from its elimination process we see that $\hat{\CI}$ is an LEE witness. 
However, it is not an LLEE witness,
because when performing its elimination at step 3, loop-entry transition $X\xrightarrow{b_1}_{[3]} Z$ is from the body of the loop sub-chart $\la \{Z, X\}, \{a_1, a_2, a_3, a_4, b_1, d_1, d_2\}, Z, \{Z\xrightarrow{a_2} X, X\xrightarrow{b_1} Z\}, \surd\ra$ of $\CI'$ after step 1.  

\begin{figure}[tbhp]
    \centering
    %\scalebox{1}{
    \noindent\makebox[\textwidth]{ %move the figure to the right of the whole page
    \begin{tabular}{c}
    \begin{tikzpicture}[->,>=stealth', node distance=1.7cm]
     %\node[state4,label=west:\mbox{$[p]\phi$}, fill=blue] (s1) {};
     %\coordinate (x) at (0,0);

    \coordinate (Z) at (0,0);
    \coordinate (X) at ($(Z) + (-1.25cm,1.25cm)$);
    \coordinate (Y) at ($(Z) + (1.25cm,1.25cm)$);
    \coordinate (K) at ($(Z) + (0,1.25cm)$);
    \coordinate (CI) at ($(Z) + (0,-1.5cm)$);

    \node[circle,fill,inner sep=1.25pt, outer sep=1pt, label = {[label distance=0.15cm]-135:$\underline{Z}$}] at (Z) (nZ) {};
    \node[circle,fill,inner sep=1.25pt, outer sep=1pt, label = north:$\underline{X}$] at (X) (nX) {};
    \node[circle,fill,inner sep=1.25pt, outer sep=1pt, label = north:$\underline{Y}$] at (Y) (nY) {}; 
    \node[circle,fill,inner sep=1.25pt, outer sep=1pt, label = north:$\underline{K}$] at (K) (nK) {}; 
    \node[draw=none] at (CI) (nCI) {$\hat{\CI}/\CI$}; 

    \path
     %(s1) edge [edge3] node [below,black] {$p$} (s2)
     (nZ) edge [out=180, in=-90, thick, red] node[below, yshift=-0.1cm] {$[2]$} node[left, yshift=0cm,black] {$a_2$} (nX)
     (nX) edge [thick, red] node[above, yshift=0.1cm] {$[3]$} node[right, yshift=0cm,black] {$b_1$} (nZ)
     (nZ) edge [out=0, in=-90, thick, red] node[yshift=-0.35cm] {$[1]$} node[right, yshift=0cm,black] {$a_3$} (nY)
     (nY) edge [] node[right, yshift=0cm,black] {$d_1$} (nZ)
     (nZ) edge [out=-45,in=-135,loop, thick, red] node[right,xshift=0.1cm] {$[1]$} node[below, black] {$a_1$} (nZ)
     (nZ) edge [] node[right] {$a_4$} (nK)
     (nK) edge [] node[above] {$d_2$} (nX)
     ;

    \coordinate (IZ) at ($(Z) + (4cm,0)$);
    \coordinate (IX) at ($(IZ) + (-1.25cm,1.25cm)$);
    %\coordinate (IY) at ($(IZ) + (1.25cm,1.25cm)$);
    \coordinate (IK) at ($(IZ) + (0,1.25cm)$);
    \coordinate (ICI) at ($(IZ) + (-0.625cm,-1.5cm)$);

    \node[circle,fill,inner sep=1.25pt, outer sep=1pt, label = {[label distance=0.15cm]-135:$\underline{Z}$}] at (IZ) (nIZ) {};
    \node[circle,fill,inner sep=1.25pt, outer sep=1pt, label = north:$\underline{X}$] at (IX) (nIX) {};
    %\node[circle,fill,inner sep=1.25pt, outer sep=1pt, label = north:$\underline{Y}$] at (IY) (nIY) {}; 
    \node[circle,fill,inner sep=1.25pt, outer sep=1pt, label = north:$\underline{K}$] at (IK) (nIK) {}; 
    \node[draw=none] at (ICI) (nICI) {$\CI'$}; 

    \path
     %(s1) edge [edge3] node [below,black] {$p$} (s2)
     (nIZ) edge [out=180, in=-90, thick, red] node[below, yshift=-0.1cm] {$[2]$} node[left, yshift=0cm,black] {$a_2$} (nIX)
     (nIX) edge [thick, red] node[above, yshift=0.1cm] {$[3]$} node[right, yshift=0cm,black] {$b_1$} (nIZ)
     %(nIZ) edge [out=0, in=-90, thick, red] node[yshift=-0.35cm] {$[1]$} node[right, yshift=0cm,black] {$a_3$} (nIY)
     %(nIY) edge [] node[right, yshift=0cm,black] {$d_1$} (nIZ)
     %(nIZ) edge [out=-45,in=-135,loop, thick, red] node[right,xshift=0.1cm] {$[1]$} node[below] {$a_1$} (nIZ)
     (nIZ) edge [] node[right] {$a_4$} (nIK)
     (nIK) edge [] node[above] {$d_2$} (nIX)
     ;

    \coordinate (IIZ) at ($(IZ) + (3cm,0)$);
    \coordinate (IIX) at ($(IIZ) + (-1.25cm,1.25cm)$);
    %\coordinate (IY) at ($(IZ) + (1.25cm,1.25cm)$);
    \coordinate (IIK) at ($(IIZ) + (0,1.25cm)$);
    \coordinate (IICI) at ($(IIZ) + (-0.625cm,-1.5cm)$);

    \node[circle,fill,inner sep=1.25pt, outer sep=1pt, label = {[label distance=0.15cm]-135:$\underline{Z}$}] at (IIZ) (nIIZ) {};
    \node[circle,fill,inner sep=1.25pt, outer sep=1pt, label = north:$\underline{X}$] at (IIX) (nIIX) {};
    %\node[circle,fill,inner sep=1.25pt, outer sep=1pt, label = north:$\underline{Y}$] at (IY) (nIY) {}; 
    \node[circle,fill,inner sep=1.25pt, outer sep=1pt, label = north:$\underline{K}$] at (IIK) (nIIK) {}; 
    \node[draw=none] at (IICI) (nIICI) {$\CI''$}; 

    \path
     %(s1) edge [edge3] node [below,black] {$p$} (s2)
     %(nIIZ) edge [out=180, in=-90, thick, red] node[below, yshift=-0.1cm] {$[2]$} node[left, yshift=0cm,black] {$a_2$} (nIIX)
     (nIIX) edge [thick, red] node[above, yshift=0.1cm] {$[3]$} node[right, yshift=0cm,black] {$b_1$} (nIIZ)
     %(nIZ) edge [out=0, in=-90, thick, red] node[yshift=-0.35cm] {$[1]$} node[right, yshift=0cm,black] {$a_3$} (nIY)
     %(nIY) edge [] node[right, yshift=0cm,black] {$d_1$} (nIZ)
     %(nIZ) edge [out=-45,in=-135,loop, thick, red] node[right,xshift=0.1cm] {$[1]$} node[below] {$a_1$} (nIZ)
     (nIIZ) edge [] node[right] {$a_4$} (nIIK)
     (nIIK) edge [] node[above] {$d_2$} (nIIX)
     ;

     \coordinate (IIIZ) at ($(IIZ) + (3.5cm,0)$);
     \coordinate (IIIX) at ($(IIIZ) + (-1.25cm,1.25cm)$);
     \coordinate (IIICI) at ($(IIIZ) + (-1.25cm,-1.5cm)$);
    
    \node[circle,fill,inner sep=1.25pt, outer sep=1pt, label = north:$\underline{X}$] at (IIIX) (nIIIX) {};
    \node[draw=none] at (IIICI) (nIIICI) {$\CI^{(3)}$};
    
    \path
     ;

    \coordinate (IVZ) at ($(Z) + (-4cm,0)$);
    \coordinate (IVX) at ($(IVZ) + (-1.25cm,1.25cm)$);
    \coordinate (IVY) at ($(IVZ) + (1.25cm,1.25cm)$);
    \coordinate (IVK) at ($(IVZ) + (0,1.25cm)$);
    \coordinate (IVCI) at ($(IVZ) + (0,-1.5cm)$);

    \node[circle,fill,inner sep=1.25pt, outer sep=1pt, label = {[label distance=0.15cm]-135:$\underline{Z}$}] at (IVZ) (nIVZ) {};
    \node[circle,fill,inner sep=1.25pt, outer sep=1pt, label = north:$\underline{X}$] at (IVX) (nIVX) {};
    \node[circle,fill,inner sep=1.25pt, outer sep=1pt, label = north:$\underline{Y}$] at (IVY) (nIVY) {}; 
    \node[circle,fill,inner sep=1.25pt, outer sep=1pt, label = north:$\underline{K}$] at (IVK) (nIVK) {}; 
    \node[draw=none] at (IVCI) (nIVCI) {$\hat{\CI}'/\CI$}; 

    \path
     %(s1) edge [edge3] node [below,black] {$p$} (s2)
     (nIVZ) edge [out=180, in=-90, thick, red] node[below, yshift=-0.1cm] {$[2]$} node[left, yshift=0cm,black] {$a_2$} (nIVX)
     (nIVX) edge [] node[above, yshift=0.1cm] {} node[right, yshift=0cm,black] {$b_1$} (nIVZ)
     (nIVZ) edge [out=0, in=-90, thick, red] node[yshift=-0.35cm] {$[1]$} node[right, yshift=0cm,black] {$a_3$} (nIVY)
     (nIVY) edge [] node[right, yshift=0cm,black] {$d_1$} (nIVZ)
     (nIVZ) edge [out=-45,in=-135,loop, thick, red] node[right,xshift=0.1cm] {$[1]$} node[below, black] {$a_1$} (nIVZ)
     (nIVZ) edge [thick,red] node[above right] {$[3]$} node[right,black] {$a_4$} (nIVK)
     (nIVK) edge [] node[above] {$d_2$} (nIVX)
     ;

    \coordinate (P1) at ($(Z) + (1cm, -0.75cm)$);
    \coordinate (P2) at ($(IZ) + (-1cm, -0.75cm)$);
    \coordinate (P3) at ($(IZ) + (0.5cm, -0.75cm)$);
    \coordinate (P4) at ($(IIZ) + (-1cm, -0.75cm)$);
    \coordinate (P5) at ($(IIZ) + (0.5cm, -0.75cm)$);
    \coordinate (P6) at ($(IIIZ) + (-1.75cm, -0.75cm)$);
    %\coordinate (P7) at ($(Z) + (-1cm, -0.75cm)$);
    %\coordinate (P8) at ($(IVZ) + (1cm, -0.75cm)$);

    \draw [thick, -To, line width=1.5pt, out=-30,in=210] (P1) to node[below] {step 1} 
    (P2);
    \draw [thick, -To, line width=1.5pt, out=-30,in=210] (P3) to node[below] {step 2} (P4);
    \draw [thick, -To, line width=1.5pt, out=-30,in=210] (P5) to node[below] {step 3} (P6);
    
    \end{tikzpicture}
    \\
    \end{tabular}
    } %end of makebox
    %}%end scalebox
        %\captionsetup{font=footnotesize}
        \caption{An example of a non-LLEE LEE witness}
        \label{figure:An example of a non-LLEE LEE witness}
\end{figure}
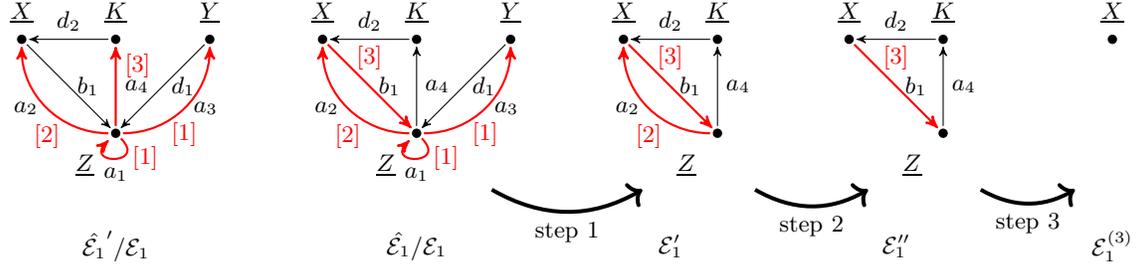

In an LLEE chart, a relation $X\lpb Y$
%called \emph{looping-back} relation in this paper, 
is defined if 
there exists a path from $X$ to $Y$ that begins with a loop-entry transition and continues through subsequent body transitions without reaching $X$ again. 
Or formally, 
$X\xrightarrow[\not\eqvn X]{}_{[m]} \cdot \xrightarrow[\not\eqvn X]{}^*_{\bo} Y$ for some $m > 0$. 
%We define $X\incycle Y$ if $X\equiv Y$ or $X\pincycle Y$. 
%We use $\incycleb(X, \E)$ (simply $\incycleb(X)$)
%    to denote the largest loop sub-1-chart starting with $X$ and owning the set of loop-entry transitions as $\E$. 
The transitive closure $X\lpb^+ Y$ is defined s.t. (i) $X\lpb Y$, or (ii) $X\lpb^+ Z$ and $Z\lpb Y$ for some node $Z$. 
%$X\lpb^+ Y$ actually means that $Y$ is in the body of the $\la X, \{X\xrightarrow{}_{[m]} \cdot\}\ra$-generated chart. 

The following property on relation $\lpb$ directly comes from the LLEE property. Refer to~\cite{Grabmayer20} for more details. 

\begin{proposition}
\label{prop:looping-back relation}
    Relation $\lpb^+$ in an LLEE chart $\mcl{G}$ is a well-founded, strict partially-ordered relation. %And if $\mcl{G}$ is connected, then it has the smallest start node w.r.t. relation $\incycle$. 
\end{proposition}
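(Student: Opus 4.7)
The plan is to build a natural-number valued ``rank'' on nodes from any fixed LLEE witness $\hat{\mcl{G}}$ of $\mcl{G}$ and show that $\lpb$ strictly decreases this rank; well-foundedness and irreflexivity of $\lpb^+$ then follow at once, while transitivity is definitional. Concretely, for each node $X$ I set
\[
d(X) \;\dddef\; \max\{\, m \,:\, X\xrightarrow{}_{[m]}\cdot \text{ is a loop-entry transition of } \hat{\mcl{G}} \,\},
\]
with $d(X)\dddef 0$ if no loop-entry transition emanates from $X$. The whole argument will then reduce to showing $X\lpb Y \Rightarrow d(Y) < d(X)$ and invoking well-foundedness of $(\mbb{N},<)$.

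First I would fix a witnessing path $X\xrightarrow{}_{[m]} X_1 \xrightarrow{}^*_{\bo} Y$ (with all intermediate nodes distinct from $X$) given by $X\lpb Y$, and identify the loop sub-chart $\mcl{A}_X$ of $\mcl{G}$ whose loop-entry transitions at $X$ are exactly those of step $m$. Since $Y$ is reachable from $X$ through $X\xrightarrow{}_{[m]} X_1$ followed by transitions of $\mcl{G}$ without revisiting $X$, the definition of $\la X,\E\ra$-generated chart places $Y$ in the body of $\mcl{A}_X$. This step is routine but needs to be stated carefully so that the LLEE hypothesis can be applied.

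Next I would apply the LLEE property to $\hat{\mcl{G}}$: since $\mcl{A}_X$ is eliminated at step $m$ and $Y$ lies in its body, no loop-entry transition from $Y$ can be removed at any later step; and no loop-entry transition from $Y$ is removed at step $m$ itself, because step $m$ removes only loop-entry transitions issuing from $X\neq Y$. Hence every loop-entry transition from $Y$ has a strictly smaller index, giving $d(Y)<m\leq d(X)$. A straightforward induction on the length of a $\lpb^+$-chain then upgrades this to $X\lpb^+ Y \Rightarrow d(Y) < d(X)$, so any infinite descending $\lpb^+$-chain would produce an infinite strictly decreasing sequence in $\mbb{N}$, contradiction; in particular $X\lpb^+ X$ is impossible, yielding irreflexivity. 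Transitivity is by construction of $\lpb^+$, so the relation is a well-founded strict partial order.

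The main obstacle I anticipate is pinning down the interaction between the LLEE witness and the informally described elimination process, specifically checking that (i) at step $m$ the loop-entry transitions removed are precisely those of a single loop sub-chart rooted at a single node, and (ii) ``$Y$ is a body node of the previously eliminated $\mcl{A}_X$'' is exactly the situation forbidden by the LLEE clause ``no loop-entry transitions are removed from a node in the body of a previously eliminated loop sub-chart.'' Once these bookkeeping points are cleanly extracted from the definition in Section~\ref{section:LEE-LLEE Charts}, the rank argument goes through without further subtlety.
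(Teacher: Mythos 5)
Your rank-function argument is correct, but note that the paper does not actually prove this proposition: it remarks that the property ``directly comes from the LLEE property'' and defers entirely to~\cite{Grabmayer20}, so there is no in-house proof to compare against. Your argument supplies exactly the missing justification, and it does so in the most economical way: the LLEE witness already encodes a ranking via the order numbers of loop-entry transitions, and the LLEE clause ``no loop-entry transition is removed from a node in the body of a previously eliminated loop sub-chart'' is precisely the statement that this ranking strictly decreases along $\lpb$. The two bookkeeping points you flag are genuine but harmless. First, each elimination step removes the loop-entry transitions of a single loop sub-chart issuing from a single start node, so the step-$m$ transitions all emanate from $X$ and none from $Y\not\eqvn X$; this rules out $d(Y)=m$. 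Second, the witnessing path $X\xrightarrow{}_{[m]}\cdot\xrightarrow{}^*_{\bo} Y$ consists, after its first transition, only of body transitions, which are never deleted by earlier elimination steps (and whose sources remain reachable through this very path), so the path survives intact to step $m$ and places $Y$ in the body of the loop sub-chart eliminated there --- which is what lets the LLEE clause forbid any label $k>m$ on a loop-entry transition from $Y$. With those two observations made explicit you get $d(Y)<m\le d(X)$, and well-foundedness, irreflexivity and transitivity of $\lpb^+$ follow as you say (in a finite chart, irreflexivity plus transitivity already yields well-foundedness, so the rank function is doing double duty). The argument is sound and self-contained, which is more than the paper offers for this statement.
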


\paragraph{Sovability of LLEE Charts (*)}
As stated in Sect.~\ref{section:Introduction}, 
LLEE charts characterize regular expressions modulo bisimulation equivalence $=$ and 
possess the capability of deriving unique 1-free-expression solutions under \MilIfree. 
We list the following results obtained from~\cite{Grabmayer20}. 
They correspond to the facts (F1) - (F4) respectively. 
As they are irrelevant to stage (II) and so are not concerned with our proof method proposed in this paper, we omit their proofs and interesting readers can refer to~\cite{Grabmayer20} for more details. 

\ifx
\begin{prop}[cf. Lemma 5.5 in~\cite{???}]
\label{prop:Derivability of LLEE-1-chart}
    Given an LLEE 1-chart $\mcl{H}$ and a loop sub-1-chart $\mcl{A}$ of $\mcl{H}$ starting from node $X$, if $X\pincycle Y$ for some $Y\in \mcl{A}$, then from $\vdash s(\mcl{A})$, we can derive $\vdash s(Y) = e\cdot s(X)$, where $e$ is a regular expression independent from $s$. 
    %Moreover, if from $Y$, $X$ is not reachable, then $\vdash s(Y) = e\cdot 0$. 
\end{prop}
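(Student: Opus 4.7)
The plan is to derive $\vdash s(Y) = e \cdot s(X)$ by well-founded induction on the body nodes of $\mcl{A}$, with the well-founded order coming from a structural fact about loop charts. The preliminary observation I would establish first is that the restriction of $\to$ to transitions both of whose endpoints lie in the body $V_\mcl{A} \setminus \{X\}$ is acyclic: any directed cycle among body nodes could be prefixed by a loop-entry transition from $X$ to yield an infinite path from $X$ never returning to $X$, contradicting condition L2 of a loop chart. Combined with the finiteness of $\mcl{A}$, this gives a well-founded strict partial order $\prec$ on body nodes (the transitive closure of the body-to-body successor relation).

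In the inductive step, let $Y$ be a body node with $X \pincycle Y$ (in particular $Y \not\eqvn X$). The premise $\vdash s(\mcl{A})$ supplies the derivable equation $\vdash s(Y) = \sum_j a_j \cdot s(Z_j)$; there are no terminal $\surd$-summands by L3. Each successor $Z_j$ is either $X$ itself or a body node strictly below $Y$ in $\prec$. For the latter case, the induction hypothesis yields a 1-free regular expression $e_{Z_j}$ independent of $s$ with $\vdash s(Z_j) = e_{Z_j} \cdot s(X)$, and axiom (A5) rewrites $a_j \cdot s(Z_j)$ as $(a_j \cdot e_{Z_j}) \cdot s(X)$. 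Applying right-distributivity (A4) across the sum then gives
\[
\vdash s(Y) \;=\; \Big( \sum_{j:\, Z_j \eqvn X} a_j \;+\; \sum_{j:\, Z_j \text{ in body}} a_j \cdot e_{Z_j} \Big) \cdot s(X),
\]
so setting $e_Y$ to this coefficient expression closes the induction. Since $e_Y$ is assembled purely from action labels of $\mcl{A}$ and from the already-constructed $e_{Z_j}$, it depends only on the graph of $\mcl{A}$ and not on the evaluation $s$.

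The main obstacle I anticipate is the structural step: proving body-to-body acyclicity carefully and packaging it into an order the induction can actually use. One must distinguish \emph{body transitions} in the LLEE sense (those tagged $0$, i.e., not loop-entry) from \emph{body-to-body} transitions (both endpoints in the body, excluding transitions back to $X$), and verify that the second notion is what induces the well-founded order on which the induction on $s(Y)$ relies. A small edge case to dispatch is a body node with no outgoing transitions at all: its equation collapses to $\vdash s(Y) = 0$, and axiom (A7) ($0 \cdot s(X) = 0$) yields $\vdash s(Y) = 0 \cdot s(X)$, so $e_Y \equiv 0$ works. Such a dead-end node in fact cannot occur, because the $\langle X, \E \rangle$-generated construction of $\mcl{A}$ ensures every body node lies on a path back to $X$, but the induction goes through uniformly regardless.
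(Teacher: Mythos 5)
You should know at the outset that the paper contains no proof of this proposition to compare against: in the source it sits inside a disabled conditional block, is attributed to an external ``Lemma~5.5'' with an unresolved citation, and the notions it invokes (LLEE 1-charts, the relation $\pincycle$, the derivability judgement $\vdash$) are never defined in the compiled text. So I can only judge your argument on its own terms, reading the statement in the 1-free setting that the paper actually develops.

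On those terms your argument is sound, and it is the standard ``loop extraction'' argument one would expect. The acyclicity observation is correct and is the right engine: every body node of a $\la X,\E\ra$-generated chart is reachable from $X$ by a path that avoids $X$ after its first transition, so a directed cycle among body nodes extends to an infinite path from $X$ whose tail never returns to $X$, contradicting (L2); finiteness then makes the transitive closure a well-founded order, and the induction using (A5), (A4), (A7) and the absence of $\surd$-summands from (L3) goes through, with $e_Y$ visibly built only from the action labels of $\mcl{A}$ and hence independent of $s$. Two caveats. First, the statement is about loop sub-\emph{1}-charts, and your proof silently assumes every transition carries an action label $a_j$; a $1$-labelled transition contributes a summand of a different shape, and although the same associativity-and-right-distributivity factoring formally survives, you should either address it or state explicitly that you are working in the 1-free fragment, which is all this paper defines. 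Second, your parenthetical claim that dead-end body nodes ``cannot occur because the generated construction ensures every body node lies on a path back to $X$'' is not what the construction guarantees --- it only collects transitions reachable from the loop-entry transitions without passing through $X$, and (L2) constrains infinite paths only --- but since you dispatch the empty-sum case with (A7) anyway, nothing in the proof breaks.
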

\fi

%In the propositions below, assume two charts $\mcl{G}=\la V_1, A_1, s_1, \tran_1, \surd\ra$ and 
%$\mcl{H}=\la V_2, A_2, s_2, \tran_2, \surd\ra$. 

\begin{proposition}[Corresponding to (F4)]
Given two charts $\mcl{G}=\la V_1, A_1, v_1, \tran_1, \surd\ra$, 
$\mcl{H}=\la V_2, A_2, v_2, \tran_2, \surd\ra$ and a bisimulation function $\theta: V_1\to V_2$ between them, 
    if $s: V_2\to \IFSExp$ is a provable solution of $\mcl{H}$, then $s\circ \theta: V_1\to \IFSExp$ is a provable solution of $\mcl{G}$.  
\end{proposition}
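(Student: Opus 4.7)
The plan is to show that for each node $X \in V_1$, the equation for $X$ in $s\circ\theta$ applied to $\mcl{G}$ is derivable in \MilIfree\ from the equations $s(\mcl{H})$. Since $s$ is by assumption a provable solution of $\mcl{H}$, every equation in $s(\mcl{H})$ is derivable, and it therefore suffices to derive, for each $X \in V_1$, the equation
$$
s(\theta(X))\ =\ \sum_{X \xrightarrow{a} X' \in \tran_1} a \cdot s(\theta(X'))\ +\ \sum_{X \xrightarrow{a} \surd \in \tran_1} a
$$
from the single equation of $s(\mcl{H})$ corresponding to $\theta(X)$, namely
$$
s(\theta(X))\ =\ \sum_{\theta(X) \xrightarrow{a} Y' \in \tran_2} a \cdot s(Y')\ +\ \sum_{\theta(X) \xrightarrow{a} \surd \in \tran_2} a.
$$

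The key step is to use the bisimulation conditions to relate the RHS of these two equations. Condition (1) of bisimulation gives a well-defined map sending each transition $X \xrightarrow{a} X'$ in $\tran_1$ to the transition $\theta(X) \xrightarrow{a} \theta(X')$ in $\tran_2$; condition (2) ensures this map is surjective onto the set of non-terminal transitions from $\theta(X)$ in $\tran_2$. Condition (3) gives a bijection between terminal transitions out of $X$ in $\tran_1$ and terminal transitions out of $\theta(X)$ in $\tran_2$ (preserving action labels, so the terminal summands agree on the nose up to commutativity and associativity).

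For the non-terminal part, I would proceed by grouping $\tran_1$-transitions out of $X$ according to the preimage of each $\tran_2$-transition out of $\theta(X)$: if $\theta(X) \xrightarrow{a} Y'$ has $\tran_1$-preimages $X \xrightarrow{a} X'_1, \dots, X \xrightarrow{a} X'_k$ (so $\theta(X'_i) = Y'$ for each $i$), then the single summand $a \cdot s(Y')$ equals each of $a\cdot s(\theta(X'_1)), \dots, a\cdot s(\theta(X'_k))$ syntactically. Repeated application of axiom (A3) in the form $e = e + e$, combined with (A1) and (A2) to rearrange the sum, yields the replacement of every $\mcl{H}$-summand $a\cdot s(Y')$ by the sum of its preimages $\sum_{i=1}^{k} a \cdot s(\theta(X'_i))$. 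Doing this in parallel for all $\tran_2$-transitions out of $\theta(X)$, the RHS is transformed into exactly the RHS of the desired $\mcl{G}$-equation for $X$.

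There is no serious obstacle; the argument is essentially a structural bookkeeping exercise relying only on axioms (A1), (A2), (A3). The one delicate point to state carefully is the role of (A3) used ``backwards'' to duplicate summands when several $\tran_1$-transitions collapse to a single $\tran_2$-transition under $\theta$, and to observe that the special case of an empty non-terminal or terminal sum (when $X$ or $\theta(X)$ has no outgoing non-terminal or terminal transitions) is handled by (A6): $e + 0 = e$, together with the convention that the empty sum is $0$.
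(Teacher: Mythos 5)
Your argument is correct, and it is essentially the standard one: transport each node equation of $\mcl{H}$ along $\theta$, use conditions (1)--(2) of the bisimulation function to see that the non-terminal summands of the two right-hand sides coincide up to multiplicity (with (A3) duplicating a summand whenever several transitions of $\mcl{G}$ collapse onto one transition of $\mcl{H}$), condition (3) for the terminal summands, and (A1), (A2), (A6) for rearrangement and empty sums. Note that the paper itself gives no proof of this proposition --- it is stated with the remark that the proofs of the facts (F1)--(F4) are omitted and deferred to \cite{Grabmayer20} --- so there is nothing in the paper to compare against; your derivation matches the argument in that cited source.
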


\begin{proposition}[Corresponding to (F2)]
\label{prop:from 1-free regular expressions to LLEE charts}
	For every 1-free regular expression $e$, there is an LLEE chart $\hat{\mclul{G}}/\mclul{G}$ with an initial node $X$ such that $\mcl{G}$ has a provable solution $s$ with $s(X)\dddef e$.  
\end{proposition}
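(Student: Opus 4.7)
\medskip

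\noindent\textbf{Proof proposal.} The plan is to proceed by structural induction on the 1-free regular expression $e$, constructing an LLEE chart $\hat{\mcl{G}}/\mcl{G}$ with initial node $X$ together with a provable evaluation $s$ with $s(X) \equiv e$ in each case, so that the LLEE witness $\hat{\mcl{G}}$ and the solution $s$ are built up simultaneously.

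For the base cases $e \equiv a$ and $e \equiv 0$, I would take $\mcl{G}$ to be the one-node chart with initial $X$ and (respectively) the single terminal transition $X \xrightarrow{a} \surd$, or no transitions at all. Setting $s(X) \dddef e$ gives the equation $s(X) = a$ or $s(X) = 0$, each derivable. Both charts contain no loops, so $\hat{\mcl{G}}$ assigns $0$ to all transitions and the LLEE property holds vacuously.

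For the inductive cases, assume LLEE charts $\hat{\mcl{G}}_i/\mcl{G}_i$ with initial nodes $X_i$ and provable solutions $s_i$ with $s_i(X_i)\equiv e_i$ ($i=1,2$), and take $\mcl{G}_1,\mcl{G}_2$ to be node-disjoint. For $e_1 + e_2$: introduce a fresh initial node $X$ and, for every transition out of $X_1$ or $X_2$, add a parallel transition from $X$ with the same label and target (with $\surd$ preserved). Body/loop-entry labels on these copied transitions inherit those of the originals. Set $s(X)\dddef e_1+e_2$ and $s(Y)\dddef s_i(Y)$ for $Y$ in $\mcl{G}_i$; derivability of $s(\mcl{G})$ follows from axioms (A1)--(A4) applied to the two copied sums. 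For $e_1\cdot e_2$: form the disjoint union, redirect every terminal transition $Y\xrightarrow{a}\surd$ of $\mcl{G}_1$ to $Y\xrightarrow{a}X_2$, take $X\dddef X_1$, and set $s(Y)\dddef s_1(Y)\cdot e_2$ for $Y$ in $\mcl{G}_1$ and $s(Y)\dddef s_2(Y)$ for $Y$ in $\mcl{G}_2$; equations for $Y$ in $\mcl{G}_1$ are obtained by multiplying those of $s_1(\mcl{G}_1)$ on the right by $e_2$ using (A4), (A5), (A7), and the equations for $\mcl{G}_2$ are unchanged. In both cases the LLEE witness is the disjoint union of $\hat{\mcl{G}}_1$ and $\hat{\mcl{G}}_2$, extended by $0$ on the newly added transitions (which are body transitions); the elimination orders inside $\mcl{G}_1,\mcl{G}_2$ are preserved, so the layered property is trivially maintained.

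The main obstacle is the iteration case $e_1\Star e_2$, since this is the only construction that creates a new loop. I would introduce a fresh initial node $X$ and, taking the disjoint union of $\mcl{G}_1$ and $\mcl{G}_2$ (minus $X_1$, which is absorbed into $X$), install the following transitions out of $X$: for each $X_1\xrightarrow{a}Y$ with $Y\in\mcl{G}_1$, the transition $X\xrightarrow{a}Y$; for each $X_1\xrightarrow{a}\surd$, the self-loop-back $X\xrightarrow{a}X$; and for each transition out of $X_2$, a copy out of $X$ (preserving $\surd$). Additionally every transition $Y\xrightarrow{a}\surd$ of $\mcl{G}_1$ with $Y\neq X_1$ is redirected to $Y\xrightarrow{a}X$. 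Define $s(Y)\dddef s_1(Y)\cdot (e_1\Star e_2)$ for $Y\in\mcl{G}_1\setminus\{X_1\}$, $s(Y)\dddef s_2(Y)$ for $Y\in\mcl{G}_2$, and $s(X)\dddef e_1\Star e_2$. The equation for $X$ has the form $s(X)=e_1\cdot s(X)+e_2$, reached by applying (A4)--(A7) to collect the $\mcl{G}_1$-derived terms and the $\mcl{G}_2$-derived terms; this is exactly the premise of rule (R1), which yields $s(X) = e_1\Star e_2$. For the LLEE witness, I would first shift all order numbers in $\hat{\mcl{G}}_1,\hat{\mcl{G}}_2$ upward so that they remain below a fresh common value $N$, keep body transitions (including the redirected terminal transitions of $\mcl{G}_1$, which become body transitions of the outer loop) at $0$, and tag every newly installed transition out of $X$ with $N$. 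The induced elimination process first eliminates all inner loops of $\mcl{G}_1$ and $\mcl{G}_2$ in their original orders, so that the subchart rooted at $X$ becomes a loop sub-chart (satisfying (L1)--(L3) because all infinite paths must return through $X$ by construction), then eliminates it at step $N$; the layered condition holds because the new loop-entry transitions originate only at $X$, which does not lie in the body of any previously eliminated loop. Verifying this last point carefully, together with the correctness of $s$ on $\mcl{G}_1$'s redirected transitions, is the technically most delicate step of the induction.
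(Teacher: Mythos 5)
The paper does not actually prove this proposition itself: it is imported from~\cite{Grabmayer20} (where it is obtained via the chart interpretation of Table~\ref{table:Operational Semantics of 1-free Regular Expressions} and an extraction procedure for provable solutions), so your construction can only be judged on its own terms. Your overall plan --- a structural induction that builds the chart, the LLEE witness and the provable evaluation simultaneously --- is the right one and is essentially a bottom-up reformulation of the standard argument. The base cases, the sum case and the concatenation case go through as you describe, modulo two small points: nodes that become unreachable from the new initial node (e.g.\ $X_1$, $X_2$ in the sum case, or all of $\mcl{G}_2$ when $e_1$ has no terminating run in the concatenation case) must be deleted to respect the paper's reachability convention; and your sum case is internally inconsistent, first letting the copied transitions out of the fresh $X$ ``inherit'' the order numbers of the originals and later, correctly, assigning them $0$ --- a transition leaving a node that no transition enters cannot be a loop-entry transition of any loop sub-chart, so only the second version is admissible.

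The genuine gap is in the iteration case $e_1\Star e_2$. You tag \emph{every} newly installed transition out of $X$ with the fresh order number $N$, i.e.\ you declare all of them loop-entry transitions of one loop sub-chart at $X$. But the $\la X,\E\ra$-generated chart for this $\E$ contains, by definition, every transition lying on a path that starts with a transition in $\E$ and has not yet returned to $X$; taking the copies of $X_2$'s transitions into $\E$ therefore pulls the whole of $\mcl{G}_2$, including its terminal transitions $W\xrightarrow{}\surd$, into the generated chart. Whenever $e_2$ admits termination (already for $e_2\equiv b$) this places $\surd$ inside the generated chart, so (L3) fails and the chart you schedule for elimination at step $N$ is not a loop sub-chart; the proposed $\hat{\mcl{G}}$ is then not even an LEE witness. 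The repair is to put into $\E$ only the $e_1$-derived transitions out of $X$ (the copies of $X_1\xrightarrow{a}Y$ and the self-loops replacing $X_1\xrightarrow{a}\surd$) and to leave the copies of $X_2$'s transitions as body transitions of order $0$: the generated chart then stays inside the modified $\mcl{G}_1$, the redirected terminal transitions return it to $X$, and (L1)--(L3) hold once the inner loops are eliminated. You must also handle the degenerate case where $e_1$ has no terminating run (then there is no loop through $X$ at all and every transition out of $X$ must be a body transition), and you should carry through the induction the invariant that the initial node of each constructed chart never lies in the \emph{body} of one of its loop sub-charts --- this is exactly what legitimises the $N$-labelled loop-entry transitions at $X$ for the layered condition when $e_1$ is itself an iteration, and it is the point you flag as delicate without actually discharging.
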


\begin{proposition}[Corresponding to (F1) and (F3)]
\label{prop:from LLEE charts to 1-free regular expressions}
	Every LLEE chart $\hat{\mclul{G}}/\mcl{G}$ has a provable solution $s$. 
	For any provable solutions $s_1$ and $s_2$ of $\mcl{G}$, $\IFvdash s_1(X) = s_2(X)$ for any node $X$ of $\mcl{G}$. 
\end{proposition}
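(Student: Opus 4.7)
The plan is to prove both parts simultaneously by well-founded induction along the strict partial order $\lpb^+$ on nodes of the LLEE chart $\hat{\mcl{G}}/\mcl{G}$, exploiting Proposition~\ref{prop:looping-back relation}. Concretely, the LLEE witness $\hat{\mcl{G}}$ decomposes $\mcl{G}$ into a hierarchy of loop sub-charts, each with a start node $X$ and a body consisting of nodes $Y$ with $X \lpb^+ Y$. The induction is on the depth of this hierarchy; the base case is a chart with no loop sub-chart (equivalently, no infinite path), where every equation is a pure sum of action-prefixed variables and terminal summands, and a solution is built by reverse-topological substitution, with uniqueness immediate from the shape of the equations.

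For the existence part, at the inductive step I pick a loop sub-chart $\mcl{A}$ whose start node $X$ is $\lpb^+$-minimal among loop starts, so $\mcl{A}$'s body contains no further loop-starts at the outermost layer. I perform an internal Gauss-elimination inside $\mcl{A}$: by a nested induction on $\lpb^+$ restricted to $\mcl{A}$, for every body node $Y$ I derivably express a candidate $s(Y) = e_Y \cdot s(X) + g_Y$, where $e_Y$ and $g_Y$ do not mention any $s$-variable for nodes of $\mcl{A}$. Substituting these into the equation for $X$ yields a fixed-point shape $s(X) = e \cdot s(X) + g$, and rule (R1) lets me \emph{define} $s(X) \dddef e \Star g$. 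I then pass to the smaller LLEE chart obtained by deleting $\mcl{A}$'s body (the entry into $\mcl{A}$ being replaced by the newly defined expression for $s(X)$) and invoke the outer induction hypothesis.

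For the uniqueness part, suppose $s_1, s_2$ are both provable solutions, and assume inductively that $\IFvdash s_1(Y) = s_2(Y)$ for every node $Y$ outside the currently considered innermost loop sub-chart $\mcl{A}$. Applying the same Gauss-elimination recipe to both solutions, each $s_i(X)$ derivably satisfies $s_i(X) = e \cdot s_i(X) + g$ with the \emph{same} $e$ and $g$ (using the induction hypothesis to identify all occurrences of $s_1(Y)$ and $s_2(Y)$ for $Y$ outside $\mcl{A}$ inside these coefficients). Rule (R1) then forces $\IFvdash s_1(X) = e \Star g$ and $\IFvdash s_2(X) = e \Star g$, whence $\IFvdash s_1(X) = s_2(X)$; equality on the remaining body nodes follows from the derived expressions $s_i(Y) = e_Y \cdot s_i(X) + g_Y$ combined with the equality at $X$.

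The principal obstacle is the Gauss-elimination inside each loop sub-chart: one must verify that the recursively computed coefficients $e_Y, g_Y$ never reintroduce an already eliminated variable, which is guaranteed precisely by the LLEE \emph{layering} property (a loop-entry transition may not leave the body of an earlier-eliminated loop). A secondary subtlety, relevant to both parts, is that the equation system for a loop sub-chart is in general not triangular, so the elimination must itself be performed by a nested induction on $\lpb^+$ restricted to $\mcl{A}$'s body, so that by the time one reaches the outermost equation for $s(X)$ it has been reduced to the single-variable fixed-point form required by (R1). This nested use of (R1), together with the axioms (A4), (A5), (A8), (A9) for distributing and reshaping star terms, is essentially Milner's Gauss-plus-star-introduction technique adapted to the 1-free fragment \MilIfree.
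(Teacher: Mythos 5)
The paper does not actually prove this proposition: it is one of the facts (F1)--(F4) imported from \cite{Grabmayer20}, and the authors explicitly omit the proofs as irrelevant to their contribution (which concerns stage (II) only). So there is no in-paper proof to compare against; the relevant benchmark is the argument in \cite{Grabmayer20}, and your sketch reconstructs essentially that argument --- inside-out extraction of a solution through the loop-sub-chart hierarchy of the LLEE witness for existence, and reduction of each start node's equation to the one-variable fixed-point form $s(X) = e\cdot s(X) + g$ followed by (R1) for uniqueness. Two imprecisions are worth repairing. First, (R1) does not ``let you define'' $s(X)\dddef e\Star g$ in the existence part: (R1) is the uniqueness rule, and what certifies that the defined expression satisfies its equation is axiom (A8) (plus (A4), (A5), (A9) to undo the elimination); you only need (R1) in the uniqueness half. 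Second, $\lpb^+$ restricted to the body of a loop sub-chart is the wrong measure for the internal Gauss elimination: for an innermost loop sub-chart the body contains no loop-entry transitions, so this restriction is empty and the ``nested induction'' is vacuous. What actually makes the body system triangular is that, by (L2), the body of a loop chart is acyclic once the start node is removed, so the inner elimination should proceed by reverse-reachability (topological) order on the body; $\lpb^+$ only organizes the outer induction across layers of the hierarchy. With these two repairs your sketch is the standard proof from \cite{Grabmayer20}.
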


\section{Looping-back Charts and Images}
\label{section:Looping-back Charts and well-structured Images}

As illustrated in Fig.~\ref{figure:Minimization strategy} of Sect.~\ref{section:Introduction}, our 
proof method focuses on stage (II') of the minimization strategy, which concerns 
how to acquire an LLEE bisimulation collapse $\mcl{H}$ from one of two 
bisimulation functions: $\theta_1$ from $\mcl{G}_{e_1}$ to $\mcl{H}$ and 
$\theta_2$ from $\mcl{G}_{e_2}$ to $\mcl{H}$. 
In this section, we introduce the basic concepts needed for the proofs  given in Sect.~\ref{section:Restoration for LLEE Charts}. 

\subsection{Some Conventions of Notations}
\label{section:Some Conventions of Notations}
In the rest of the paper, for convenience, we use the conventions of notations stated as follows.

Given a chart $\mcl{G}$, we often use $V_\mcl{G}$, $A_\mcl{G}$, $s_\mcl{G}$ and $\tran_{\mcl{G}}$ to represent its components. 
We usually ignore the initial state $s_\mcl{G}\in V_{\mcl{G}}$ and simply understand 
a chart $\la V_\mcl{G}, A_\mcl{G}, s_\mcl{G}, \tran_\mcl{G}, \surd\ra$ as a 4-tuple: $\la V_\mcl{G}, A_\mcl{G}, \tran_\mcl{G}, \surd\ra$.
This is because in our work, 
we do not need to distinguish the initial nodes to relate to the concepts like
chart interpretation and primary solution. 
%concern the primary 
%the initial node does not matter. 

Given a set of nodes $N\subseteq V_\mcl{G}$ of a chart $\mcl{G}$, 
$\gra_{\mcl{G}}(N)$ (or simply $\gra(N)$ when $\mcl{G}$ is clear in the context) denotes the \emph{chart of a set $N$ of nodes} 
which consists of all nodes and all transitions between these nodes. 
%which consists of all nodes and all transitions between these nodes and all terminal transitions from these nodes. 
Formally, 
$\gra_{\mcl{G}}(N)\dddef \la N, A_\mcl{G}, \tran, \surd\ra$,  
where 
$\tran = \{X\xrightarrow{a} Y\ |\ X, Y\in N, a\in A_\mcl{G}\}$. 
\ifx
where 
$\tran = \{X\xrightarrow{a} Y\ |\ X, Y\in N, a\in A_\mcl{G}\}
\cup \{X\xrightarrow{a}\surd\ |\ X\in N, a\in A_\mcl{G}\}$. 
\fi
$\gra(\cdot)$ is often used as an abbreviation of a sub-chart by its set of nodes.

%Given $\mcl{A}_1 = \la V_1, A_1, \tran_1, \surd\ra$ and $\mcl{A}_2 = \la V_2, A_2, \tran_2, \surd\ra$, 
We define $\mcl{A}_1\unionc \mcl{A}_2 \dddef \gra_{\mcl{G}}(V_{\mcl{A}_1}\cup V_{\mcl{A}_2})$ as the chart of the union of the nodes of $\mcl{A}_1$ and $\mcl{A}_2$, provided that $\mcl{A}_1$ and $\mcl{A}_2$ are sub-charts of a chart $\mcl{G}$.  
%we define $\mcl{A}_1\unionc \mcl{A}_2 \dddef \la V_1\cup V_2, A_1\cup A_2, \tran_1\cup \tran_2, \surd\ra$ as the union chart of $\mcl{A}_1$ and $\mcl{A}_2$ (note that we omit the discussion of the initial node here, rigorously, ). provided that $\mcl{A}_1$ and $\mcl{A}_2$ are sub-charts of a chart $\mcl{G}$.   
We write $\mcl{G}\subc \mcl{H}$ (resp. $\mcl{G}\psubc \mcl{H}$) if 
$\mcl{G}$ is a sub-chart (resp. proper sub-chart) of $\mcl{H}$,  
and write $\mcl{G}\eqvc \mcl{H}$ if charts $\mcl{G}$ and $\mcl{H}$ are identical. 

Given a bisimulation function $\theta: V_{\mcl{G}}\to V_{\mcl{H}}$ from chart $\mcl{G}$ to $\mcl{H}$ and a sub-chart $\mcl{A}$ of $\mcl{G}$, 
we use $\theta(\mcl{A})$ to represent the sub-chart $\gra_{\mcl{H}}(\theta(V_{\mcl{A}}))$ of the set $\theta(V_{\mcl{A}})$ of nodes of $\mcl{H}$.  

We usually use small alphabets like $x, y, z, ...$ to represent nodes in a domain $V_\mcl{G}$, while using capital alphabets like $X, Y, Z, ...$ to express nodes in a co-domain $V_\mcl{H}$, of a bisimulation relation $\R:V_\mcl{G}\to V_\mcl{H}$ from $\mcl{G}$ to $\mcl{H}$. And we use the small and capital alphabets with the same name and subscript, but different superscripts to express bisimilar nodes in these two graphs. For example, we can write bisimilar-nodes as: 
$x, x', x'', x^{(3)}$ in $\mcl{G}$, and write bisimilar nodes $X, X', X'', X^{(3)}$ in $\mcl{H}$, which are bisimilar to $x$.

\subsection{Looping-back Charts}
We first introduce an LLEE sub-structure called \emph{looping-back charts}. 
It together with its images defined in Sect.~\ref{section:well-structured Images} plays the central role in our method.

\begin{definition}[Looping-back Chart]
\label{def:Looping-back Chart}
\ifx
A looping-back chart $\LC_x$ of an LLEE chart $\hat{\mcl{G}}/\mcl{G}$ w.r.t. a node $x$ is defined as the 
$\la x, \E_x\ra$-generated chart, where 
$\E_x = \{x\xrightarrow{}_{[m]}y\ |\ y\in V_{\mcl{G}}, m > 0\}$ is the set of all loop-entry transitions starting from $x$. 

 Also call $x$ the ``start node'' of $\LC_x$. 
%Denote $\LC_x$ by $\LC(x, \E_x)$ with $\E_x$ recording the set of loop-entry transitions from $x$. 
\fi
The looping-back chart of an LLEE chart $\hat{\mcl{G}}/\mcl{G}$ w.r.t. a node $x$, denoted by $\LC_x$, is a chart defined as
$\LC_x\dddef \gra_{\mcl{G}}(\{x\}\cup \{y\ |\ x\lpb^+ y\})$ and 
satisfying that $\LC_x$ contains at least one loop. 

\ifx
where 
$V=(\{x\}\cup \{y\ |\ x\lpb^+ y\})\subseteq V_\mcl{G}$, 
and 
$\tran = \{X\xrightarrow{a} Y\ |\ X, Y\in V, a\in A_\mcl{G}\}\cup \{Y\xrightarrow{a} \surd\ |\ Y\in V/\{x\}\}$, and $\to\neq \emptyset$.   
\fi

Call $x$ the ``start node'' of $\LC_x$. 
%Denote $\LC_x$ by $\LC(x, \E_x)$ with $\E_x$ recording the set of loop-entry transitions from $x$. 

\end{definition}

Note that the stipulation that $\LC_x$ must contain at least one loop forbids trivial charts of the form: $\la \{x\}, A, x, \emptyset, \surd\ra$ as looping-back charts, which contain only one node $x$ without any transitions. 

%Note that we stipulate that a looping-back chart contains at least one transition. In other words, a chart $\la \{x\}, A, x, \emptyset, \surd\ra$ containing only one node $x$ without any transitions is not a looping-back chart. 

A sub-chart $\mcl{B}$ of a looping-back chart $\mcl{A}$ is also called a \emph{looping-back sub-chart} if it is a looping-back chart. 
%Define $V^{\ip}_{\LC_x}\dddef V_{\LC_x}/\{x\}$ the \emph{body} of looping-back chart $\LC$. 
Call $V_{\LC_x}\setminus\{x\}$ the \emph{body} of looping-back chart $\LC_x$. 

%By the definition of relation $\lpb$ (Sect.~\ref{section:LEE-LLEE Charts}), a loop sub-chart is also a looping-back chart. 

For instance,
in the LLEE chart $\hat{\CII}/\CII$ of Fig.~\ref{figure:An example of images}, charts $\gra(\{z',x'\})_{z'}$, $\gra(\{z'', y\})_{z''}$, $\gra(\{x, z'', k, y\})_x$
and the whole chart $(\CII)_z$ are looping-back charts. Among them note that charts $\gra(\{z',x'\})_{z'}$ and $\gra(\{z'', y\})_{z''}$ are also loop sub-charts of $\CII$.

\begin{figure}[tbhp]
    \centering
    %\scalebox{1}{
    \noindent\makebox[\textwidth]{ %move the figure to the right of the whole page
    \begin{tabular}{c}
    \begin{tikzpicture}[->,>=stealth', node distance=1.7cm]
     %\node[state4,label=west:\mbox{$[p]\phi$}, fill=blue] (s1) {};
     %\coordinate (x) at (0,0);

     \coordinate (z) at (0,0);
     \coordinate (x) at ($(z) + (1cm,1.25cm)$);
     \coordinate (zi) at ($(z) + (-1cm,1.25cm)$);
     \coordinate (xi) at ($(zi) + (0, 1.25cm)$);
     \coordinate (k) at ($(z) + (0,2.5cm)$);
     \coordinate (zii) at ($(x) + (0,1.25cm)$);
     \coordinate (y) at ($(zii) + (0,1.25cm)$);
     \coordinate (CII) at ($(z) + (0,-1.5cm)$);
     
    \node[circle,fill,inner sep=1.25pt, outer sep=1pt, label = south:$\underline{z}$] at (z) (nz) {};
    \node[circle,fill,inner sep=1.25pt, outer sep=1pt, label = east:$\underline{x}$] at (x) (nx) {};
    \node[circle,fill,inner sep=1.25pt, outer sep=1pt, label = west:$\underline{z'}$] at (zi) (nzi) {};
    \node[circle,fill,inner sep=1.25pt, outer sep=1pt, label = west:$\underline{x'}$] at (xi) (nxi) {};
    \node[circle,fill,inner sep=1.25pt, outer sep=1pt, label = north:$\underline{k}$] at (k) (nk) {}; 
    \node[circle,fill,inner sep=1.25pt, outer sep=1pt, label = {[label distance=0.2cm]-5:$\underline{z''}$}] at (zii) (nzii) {};
    \node[circle,fill,inner sep=1.25pt, outer sep=1pt, label = north:$\underline{y}$] at (y) (ny) {};
    \node[draw=none] at (CII) (nCII) {$\hat{\CII}/\CII$};

     \path
     %(s1) edge [edge3] node [below,black] {$p$} (s2)
     (nz) edge [out=0, in=-90, thick, red] node[below, yshift=-0.1cm] {$[3]$} node[right, yshift=0cm,black] {$a_2$} (nx)
     (nz) edge [out=180, in=-90, thick, red] node[left] {$[2]$} node[below right, yshift=-0.2cm,black] {$a_1$} (nzi)
     (nx) edge [] node[above right, xshift=-0.3cm, yshift=0.1cm, black] {$b_1$} (nz)
     (nzi) edge [] node[right, yshift=0.2cm, xshift=-0.25cm, black] {$a_1$} (nz)
     (nzi) edge [out=135, in=225, thick, red] node[right,xshift=-0.1cm] {$[1]$} node[left, yshift=0cm,black, xshift=0.1cm] {$a_2$} (nxi)
     (nxi) edge [out=-45, in=45] node[right, xshift=-0.1cm,black] {$b_1$} (nzi)
     (nx) edge [out=135, in=225, thick, red] node[right,xshift=-0.1cm] {$[2]$} node[left, yshift=0.4cm, xshift=0.2cm, black] {$b_1$} (nzii)
     (nzii) edge [out=-45, in=45] node[right, yshift=-0.15cm, xshift=-0.1cm, black] {$a_2$} (nx)
     (nzii) edge [out=135, in=225, thick, red] node[right,xshift=-0.1cm] {$[1]$} node[left, yshift=0cm,black,xshift=0.1cm] {$a_3$} (ny)
     (ny) edge [out=-45, in=45] node[right, yshift=0cm,black,xshift=-0.1cm, yshift=0.25cm] {$d_1$} (nzii)
     %(nz) edge[out=-45,in=-135,loop] node[below,black] {$a_1$} (nz)
     %(nzi) edge[out=0,in=180, xshift=-0.5cm, looseness=1.5] node[above,black,xshift=-0.15cm] {$a_3$} (ny)
     (nz) edge[out=-30,in=0,looseness=1.2, thick, red] node[right, yshift=1cm, xshift=0.1cm] {$[3]$} node[right,black,xshift=-0.1cm] {$a_3$} (ny)
     (nzii) edge[out=30,in=-30,loop, thick, red] node[above, yshift=0.15cm, xshift=0.25cm] {$[1]$} node[right,black, yshift=0.1cm] {$a_1$} (nzii)
     %(nx) edge [out=195,in=-15] node[above, yshift=0cm,black] {$b_1$} (nzi)
     (nzii) edge[] node[above,yshift=-0.1cm] {$a_4$} (nk)
     (nk) edge[out=-90,in=180] node[right,xshift=-0.1cm,yshift=0.1cm] {$d_2$} (nx)
     (nz) edge[out=105,in=-110, thick, red] node[right,yshift=-0.5cm] {$[3]$} node[right,xshift=-0.1cm,black] {$a_4$} (nk)
    ;
    \draw (nzi) to [out=0,in=-90] ($(k) + (-0.5cm,0)$) to [out=90,in=180] node[above] {$a_3$} (ny);
    \draw (nzi) to [out=0,in=-90] ($(k) + (-0.35cm,-0.625cm)$) to [out=90,in=180] node[left, xshift=0.3cm, yshift=0.45cm] {$a_4$} (nk);

    \coordinate (Z) at ($(z) + (-5cm,0cm)$);
    \coordinate (X) at ($(Z) + (-1.25cm,1.25cm)$);
    \coordinate (Y) at ($(Z) + (1.25cm,1.25cm)$);
    \coordinate (K) at ($(Z) + (0,1.25cm)$);
    \coordinate (CI) at ($(Z) + (0,-1.5cm)$);

    \node[circle,fill,inner sep=1.25pt, outer sep=1pt, label = {[label distance=0.15cm]-135:$\underline{Z}$}] at (Z) (nZ) {};
    \node[circle,fill,inner sep=1.25pt, outer sep=1pt, label = north:$\underline{X}$] at (X) (nX) {};
    \node[circle,fill,inner sep=1.25pt, outer sep=1pt, label = north:$\underline{Y}$] at (Y) (nY) {}; 
    \node[circle,fill,inner sep=1.25pt, outer sep=1pt, label = north:$\underline{K}$] at (K) (nK) {}; 
    \node[draw=none] at (CI) (nCI) {$\CI$}; 

    \path
     %(s1) edge [edge3] node [below,black] {$p$} (s2)
     (nZ) edge [out=180, in=-90] node[below, yshift=-0.1cm] {} node[left, yshift=0cm,black] {$a_2$} (nX)
     (nX) edge [] node[above, yshift=0.1cm] {} node[right, yshift=0cm,black] {$b_1$} (nZ)
     (nZ) edge [out=0, in=-90] node[yshift=-0.35cm] {} node[right, yshift=0cm,black] {$a_3$} (nY)
     (nY) edge [] node[right, yshift=0cm,black] {$d_1$} (nZ)
     (nZ) edge [out=-45,in=-135,loop] node[right,xshift=0.1cm] {} node[below] {$a_1$} (nZ)
     (nZ) edge [] node[right] {$a_4$} (nK)
     (nK) edge [] node[above] {$d_2$} (nX)
     ;
     
    \coordinate (P1) at ($(z) + (-1cm,-0.5cm)$);
    \coordinate (P2) at ($(Z) + (1cm,-0.5cm)$);
    
    \draw[-To,dashed, very thick,out=210,in=-30] (P1) to node[above] {$\theta$} (P2); 

    \ifx
    \coordinate (Imgi) at ($(Z) + (-4cm,0)$);
    \coordinate (Imgii) at ($(Imgi) + (-1.25cm, 1.25cm)$);
    \coordinate (Imgiii) at ($(Imgi) + (1.25cm, 1.25cm)$);
    \coordinate (eoi) at ($(Imgi) + (-0.625cm,0.625cm)$);

    \node[draw=none] at (Imgi) (nImgi) {$\CI$};
    \node[draw=none] at (Imgii) (nImgii) {$\{Z, X\}$};
    \node[draw=none] at (Imgiii) (nImgiii) {$\{Z, Y\}$};
    \node[draw=none, rotate=-45] at (eoi) (neoi) {$\eo$};
    \fi
    %\draw[-, dashed] (nImgi) to node[right] {$\eo$} (nImgii);  

    \ifx
    \coordinate (order) at ($(Z) + (-4cm,0.625cm)$);

    \node[draw=none] at (order) (norder) {$\CI\eo \{Z, X\}\eo \{Z, Y\}$};
    \fi

    \coordinate (Zi) at ($(Z) + (-5cm,0cm)$);
    \coordinate (Xi) at ($(Zi) + (-1.25cm,1.25cm)$);
    \coordinate (Yi) at ($(Zi) + (1.25cm,1.25cm)$);
    \coordinate (Ki) at ($(Zi) + (0,1.25cm)$);
    \coordinate (CIi) at ($(Zi) + (0,-1.5cm)$);

    \node[circle,fill,inner sep=1.25pt, outer sep=1pt, label = {[label distance=0.15cm]-135:$\underline{Z}$}] at (Zi) (nZi) {};
    \node[circle,fill,inner sep=1.25pt, outer sep=1pt, label = north:$\underline{X}$] at (Xi) (nXi) {};
    \node[circle,fill,inner sep=1.25pt, outer sep=1pt, label = north:$\underline{Y}$] at (Yi) (nYi) {}; 
    \node[circle,fill,inner sep=1.25pt, outer sep=1pt, label = north:$\underline{K}$] at (Ki) (nKi) {}; 
    \node[draw=none] at (CIi) (nCIi) {$\hat{\CI}/\CI$}; 

    \path
     (nZi) edge [out=180, in=-90, thick, red] node[below, yshift=-0.1cm] {$[2]$} node[left, yshift=0cm,black] {$a_2$} (nXi)
     (nXi) edge [thick, red] node[above, yshift=0.1cm] {$[3]$} node[right, yshift=0cm,black] {$b_1$} (nZi)
     (nZi) edge [out=0, in=-90, thick, red] node[yshift=-0.35cm] {$[1]$} node[right, yshift=0cm,black] {$a_3$} (nYi)
     (nYi) edge [] node[right, yshift=0cm,black] {$d_1$} (nZi)
     (nZi) edge [out=-45,in=-135,loop, thick, red] node[right,xshift=0.1cm] {$[1]$} node[below, black] {$a_1$} (nZi)
     (nZi) edge [] node[right] {$a_4$} (nKi)
     (nKi) edge [] node[above] {$d_2$} (nXi)
     ;

    \end{tikzpicture}
    \\
    \end{tabular}
    } %end of makebox
    %}%end scalebox
        %\captionsetup{font=footnotesize}
        \caption{An example of images}
        \label{figure:An example of images}
\end{figure}
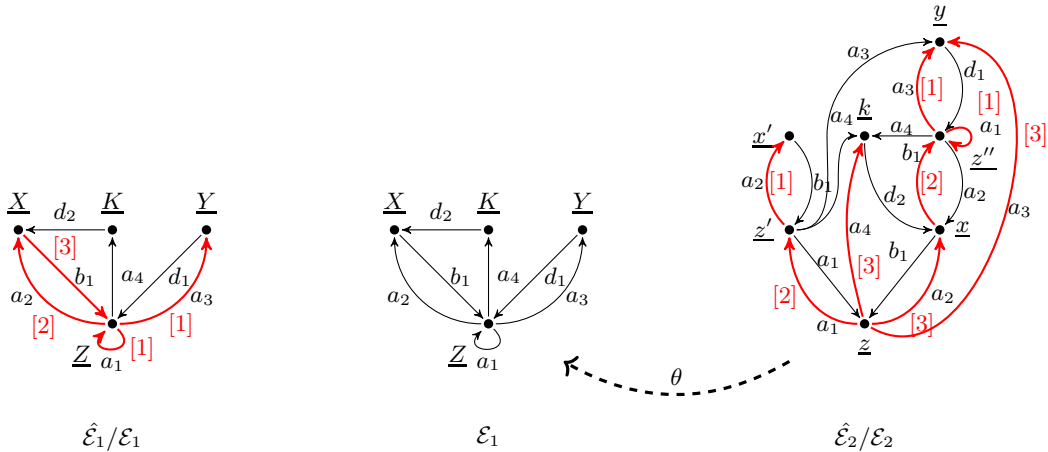

The following properties about looping-back charts are straightforward 
from LLEE charts. 

\begin{proposition}
\label{prop:looping-back chart}
    Given a looping-back chart $\mcl{A}_X$ of an LLEE chart $\hat{\mcl{G}}/\mcl{G}$, the following propositions hold:
    
\begin{enumerate}[(i)]
    %\item 
    %\label{item: looping-back property - lpb}
    %For any node $Y$, 
    %$Y\in V_{\mcl{A}_X}/\{X\}$ iff $X\lpb^+ Y$;
    
    %\item 
    %\label{item: looping-back property - loop sub-chart}
    %for every loop sub-chart $\mcl{B}$ of $\mcl{G}$ starting from $X$, 
    %$\mcl{B}\subc \mcl{A}$;
    
    \item 
    \label{item: looping-back property - looping-back sub-chart}
    For any node $Y\in V_{\mcl{A}_X}\setminus\{X\}$ and the looping-back chart $\mcl{B}_Y$ (if there is one), 
$\mcl{B}_Y\psubc \mcl{A}_X$;
    
    %every other looping-back sub-chart $\mcl{B}(Y, \E_Y)$ of $\mcl{A}(X, \E_X)$ satisfies that
    %$\mcl{B}(Y, \E_Y)\psubc \mcl{A}(X, \E_X)$; 

    \item 
    \label{item: looping-back property - path}
    There is no transition $Y\xrightarrow{} Z$ such that $Y\in V_{\mcl{A}_X}\setminus \{X\}$ and $Z\notin V_{\mcl{A}_X}$. 
    %Starting from node $X$, each path if does not terminate, stays in $\mcl{A}_X$ until returning to $X$;

    \item 
    \label{item: looping-back property - termination}
    %$\surd$ is not in $\mcl{A}_X$;
    Any node $Y\in V_{\mcl{A}_X}\setminus\{X\}$ does not reach $\surd$ before returning to $X$. 
    
    %\item 
    %\label{item: looping-back property - loop}
    %Each loop of $\mcl{A}_X$ either passes through $X$ or 
    %is in some looping-back proper sub-chart of $\mcl{A}_X$. 
\end{enumerate}
    
\end{proposition}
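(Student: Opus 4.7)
The plan is to derive all three items from two ingredients: the strict partial order of $\lpb^+$ (Proposition~\ref{prop:looping-back relation}) and the explicit path witness $X_{n-1}\xrightarrow[\not\eqvn X_{n-1}]{}_{[m]} V_1\xrightarrow[\not\eqvn X_{n-1}]{}^*_{\bo} Y$ that accompanies each $\lpb$-step. Throughout, I will decode membership in $V_{\mcl{A}_X}$ via $Y\in V_{\mcl{A}_X}\setminus\{X\}\iff X\lpb^+ Y$, and unfold $X\lpb^+ Y$ into a chain $X\eqvn X_0\lpb\cdots\lpb X_n\eqvn Y$ whose final link supplies the witness path above.

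For (i), I expand $V_{\mcl{B}_Y}=\{Y\}\cup\{z\mid Y\lpb^+ z\}$ and push every $Y\lpb^+ z$ through transitivity with $X\lpb^+ Y$ to obtain $V_{\mcl{B}_Y}\subseteq V_{\mcl{A}_X}$. Properness follows from $X\notin V_{\mcl{B}_Y}$: $X\eqvn Y$ is excluded, and $Y\lpb^+ X$ together with $X\lpb^+ Y$ would break strict antisymmetry. Since $\gra_{\mcl{G}}(\cdot)$ canonically picks up all transitions of $\mcl{G}$ between its argument nodes, the vertex inclusion automatically lifts to $\mcl{B}_Y\psubc\mcl{A}_X$.

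For (ii), I case-split on whether $Y\to Z$ is a loop-entry or a body transition. The loop-entry case is immediate: $Y\xrightarrow{}_{[k]} Z$ with $Z\not\eqvn Y$ directly witnesses $Y\lpb Z$, hence $X\lpb^+ Z$; the self-loop case $Z\eqvn Y$ is trivial. The body case is where the real work lies: I append $Y\xrightarrow{}_{\bo} Z$ to the witness path of $X_{n-1}\lpb Y$. If $Z\eqvn X_{n-1}$, then $Z$ already sits on the chain from $X$ and so lies in $V_{\mcl{A}_X}$; otherwise the extended path witnesses $X_{n-1}\lpb Z$, which extends the chain to give $X\lpb^+ Z$. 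I expect the main obstacle to be verifying the ``no intermediate node equals $X_{n-1}$'' side condition after the extension; this works only because the original witness already avoids $X_{n-1}$ and the added node $Z$ was assumed distinct from $X_{n-1}$.

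For (iii), the body case of (ii) reruns verbatim with target $\surd$: since $\surd\not\eqvn X_{n-1}$ and $\surd\notin V_{\mcl{A}_X}$, the extended-path argument rules out any direct body-to-$\surd$ step from a node in $V_{\mcl{A}_X}\setminus\{X\}$. Then, along a hypothetical path $Y\to Y_1\to\cdots\to Y_k\to\surd$ that avoids $X$, iterating (ii) forces every $Y_i\in V_{\mcl{A}_X}\setminus\{X\}$, so $Y_k$ is itself a body node and its terminal step is precisely the already-forbidden body-to-$\surd$ transition; hence no such path exists, which gives (iii).
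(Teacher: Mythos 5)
Your items (i) and (ii) are correct and essentially coincide with the paper's argument: (i) is transitivity of $\lpb^+$ plus strictness to exclude $X$ from $V_{\mcl{B}_Y}$, and (ii) is the paper's contradiction argument recast constructively, with the side condition on the extended witness path (that it still avoids $X_{n-1}$) checked more carefully than the paper bothers to. The reduction of (iii) to ``iterate (ii) along the path, then forbid the final terminal step'' is also a legitimate alternative skeleton to the paper's direct induction on $\lpb^+$.

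The gap is in the base claim of (iii): the assertion that the body case of (ii) ``reruns verbatim with target $\surd$'' is false. In (ii) the contradiction comes from the fact that the extended path witnesses $X_{n-1}\lpb Z$, which by the \emph{definition} of $V_{\mcl{A}_X}=\{X\}\cup\{y\mid X\lpb^+ y\}$ forces $Z\in V_{\mcl{A}_X}$, contradicting $Z\notin V_{\mcl{A}_X}$. When the target is $\surd$, no such contradiction arises: $\surd$ is excluded from $V_{\mcl{A}_X}$ for the trivial reason that $\gra_{\mcl{G}}$ only collects nodes of $V_{\mcl{G}}$, not because of any reachability constraint, and nothing in the bare definition of $\lpb$ forbids a path that starts with a loop-entry transition of $X_{n-1}$ and ends in $\surd$. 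Indeed, a chart with transitions $X\xrightarrow{}_{[1]} U$, $U\xrightarrow{}_{\bo} X$ and $U\xrightarrow{}\surd$ satisfies every purely $\lpb$-theoretic property you use; what rules it out as an LLEE chart is condition (L3), because the $\la X,\{X\xrightarrow{}U\}\ra$-generated chart would then contain a terminal transition. So the impossibility of a terminal transition from a body node genuinely requires invoking (L3) (as the paper does: $U$ lies in the body of a sub-chart that becomes a loop sub-chart starting from $X_{n-1}$ during the elimination, and (L3) forbids $\surd$ there), and your proof never appeals to (L1)--(L3) at any point. Once that single-step claim is established via (L3), the rest of your (iii) goes through.
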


%Intuitively, Prop.~\ref{prop:looping-back chart} (\ref{item: looping-back property - path}) (\ref{item: looping-back property - termination}) 
%together mean that a path starting from $X$ stays in $\mcl{A}_X$ until returning to $X$.  
 
\begin{proof}
%(\ref{item: looping-back property - lpb}):
%It is straightforward by Def.~\ref{def:Looping-back Chart} and the definiton of $\lpb$ in Sect.~\ref{section:LEE-LLEE Charts}. 

\ifx
(\ref{item: looping-back property - loop sub-chart}):
By the definition of loop sub-chart in Sect.~\ref{section:LEE-LLEE Charts}, 
each node $Y$ in $\mcl{B}^{\ip}$ satisfies $X\lpb Y$. 
So $Y\in V_{\mcl{A}_X}$ by Def.~\ref{def:Looping-back Chart}. 
Since $V_\mcl{B} \subseteq V_{\mcl{A}_X}$, by Def.~\ref{def:Looping-back Chart} it is not hard to see $\mcl{B}\subc \mcl{A}_X$. 
\fi

(\ref{item: looping-back property - looping-back sub-chart}): 
By the transitivity of $\lpb^+$ indicated by Prop.~\ref{prop:looping-back relation}, 
for any $Z$ such that $Y\lpb^+ Z$, since $X\lpb^+ Y$, $X\lpb^+ Z$. 
So $V_{\mcl{B}_Y}\subseteq V_{\mcl{A}_X}$. 
By Def.~\ref{def:Looping-back Chart}, $\mcl{B}_Y\subc \mcl{A}_X$. 
    If $\mcl{B}_Y \eqvc \mcl{A}_X$, since $Y\not\!\!\eqvn X$,  
Prop.~\ref{prop:looping-back relation} is violated. 
Because we have both $X\lpb^+ Y$ and $Y\lpb^+ X$. 
Hence $\mcl{B}_Y\psubc \mcl{A}_X$. 
%LLEE property is violated. 
%Because any loop-entry transitions $X\xrightarrow{}_{[\cdot ]}\cdot$ in $\E_X$ is from the body of $\mcl{B}$, and vice versa. 

(\ref{item: looping-back property - path}): 
Assume a transition $Y\xrightarrow{} Z$ with $Y\in V_{\mcl{A}_X}\setminus \{X\}$ but $Z\notin V_{\mcl{A}_X}$. 
Let $X\lpb^+ K\lpb Y$ for some node $K$. 
%If $Y\in V_{\mcl{A}_X}/\{X\}$ but $Z\notin V_{\mcl{A}_X}$, 
Then by the definition of relation $\lpb$ (Sect.~\ref{section:LEE-LLEE Charts}), 
depending on whether $Y\xrightarrow{} Z$ is a loop-entry or a body transition, 
we have either $K\lpb Y\lpb Z$ or $K\lpb Z$. 
So $X\lpb^+ K\lpb^+ Z$. 
But by Def.~\ref{prop:looping-back chart}, $Z\in V_{\mcl{A}_X}$. This leads to contradiction. 

(\ref{item: looping-back property - termination}):
%By Def.~\ref{def:Looping-back Chart} and (\ref{item: looping-back property - path}) above, 
%it is sufficient to prove that for any $Y$ such that $X\lpb^+ Y$, 
%$Y$ cannot reach $\surd$ before returning to $X$. 
By Def.~\ref{def:Looping-back Chart}, for any node $Y\in V_{\mcl{A}_X}\setminus \{X\}$, 
$X\lpb^+ Y$. 
Proceed by induction on relation $\lpb^+$. 

\textit{Base case:}
Consider a node $Y$ with $X\lpb Y$. 
By the definition of relation $\lpb$ and the LEE property (Sect.~\ref{section:LEE-LLEE Charts}), $Y$ is in the body of a sub-chart that is or will become a loop sub-chart starting from $X$ during the elimination process indicated by $\hat{\mcl{G}}$. 
By (L3), $Y$ cannot reach $\surd$ before returning to $X$. 

\textit{Step case:}
Consider a node $Y$ such that $X\lpb^+ K\lpb Y$ for some node $K$. 
Similarly, by the definition of relation $\lpb$ and the LEE property, 
$Y$ is in the body of a sub-chart that is or will become a loop sub-chart starting from $K$ during the elimination process indicated by $\hat{\mcl{G}}$.  
By (L3), $Y$ cannot reach $\surd$ before returning to $K$. 
By induction hypothesis, $K$ cannot reach $\surd$ before returning to $X$. 
Therefore, $Y$ cannot reach $\surd$ before returning to $X$. 
\qed

\ifx 
By Def.~\ref{def:Looping-back Chart} and (\ref{item: looping-back property - path}) above, 
it is sufficient to prove that for any $Y$ such that $X\lpb^+ Y$, 
$Y$ cannot reach $\surd$ before returning to $X$. 
Proceed by induction on the structure pf $\mcl{A}_X$ indicated by sub-chart relations. 
In the base case, let $\mcl{A}_X$ be a looping-back chart without any looping-back proper sub-charts. For any $Y\in V_{\mcl{A}_X}/\{X\}$, it must be $X\lpb Y$. 
By the definition of $\lpb$ (Sect.~\ref{section:LEE-LLEE Charts}), it is easy to see that $\mcl{A}_X$ is in fact a loop sub-chart of $\mcl{G}$. 
So by (L3), $Y$ cannot reach $\surd$ before returning to $X$. 
In the step case, consider an arbitrary $\mcl{A}_X$. 
By inductive hypothesis, for any node $Y$ in a looping-back proper sub-chart $\mcl{B}_Z$ of $\mcl{A}_X$ such that $Z\lpb^+ Y$, $Y$ cannot reach $\surd$ before returning to $Z$. 

In the base case, consider a node $Y$ with $X\lpb Y$. 
Since $Y$ is in the body of a loop sub-chart starting from $X$, 
by (L3), $Y$ cannot reach $\surd$ before returning to $X$.
In the step case, consider a node $Y$ such that $X\lpb^+ K\lpb Y$. 
Then $Y$ is in the body of a loop sub-chart starting from $K$. 
By (L3) $Y$ cannot reach $\surd$ before returning to $K$. 
By induction hypothesis, $K$ cannot reach $\surd$ before returning to $X$. 
Therefore, we obtain the result.  
\fi

\ifx
By Def.~\ref{def:Looping-back Chart} and (\ref{item: looping-back property - path}) above, 
it is sufficient to prove that for any $Y$ such that $X\lpb^+ Y$, 
$Y$ cannot reach $\surd$ before returning to $X$. 
Proceed by induction on relation $\lpb^+$. 
In the base case, consider a node $Y$ with $X\lpb Y$. 
Since $Y$ is in the body of a loop sub-chart starting from $X$, 
by (L3), $Y$ cannot reach $\surd$ before returning to $X$.
In the step case, consider a node $Y$ such that $X\lpb^+ K\lpb Y$. 
Then $Y$ is in the body of a loop sub-chart starting from $K$. 
By (L3) $Y$ cannot reach $\surd$ before returning to $K$. 
By induction hypothesis, $K$ cannot reach $\surd$ before returning to $X$. 
Therefore, we obtain the result.  
\fi

\ifx
(\ref{item: looping-back property - loop}): 
   If a loop $\mcl{C}$ of $\mcl{A}_X$ does not pass 
$X$,  
let $Y\xrightarrow{}_{[m]}\cdot $ be the loop-entry transition of $\mcl{C}$ with $m > 0$ and $Y\not\!\!\eqvn X$. 
By Def.~\ref{def:Looping-back Chart}, $\mcl{C}$ is in the looping-back chart $\mcl{B}_Y$. 
Since $Y\in V_{\mcl{A}}/\{X\}$, by (\ref{item: looping-back property - looping-back sub-chart}) $\mcl{B}_Y\psubc \mcl{A}_X$. 
Hence $\mcl{C}\subc\mcl{B}_Y\psubc \mcl{A}_X$. 
%This immediately violates the assumption about $\mcl{C}$. 
\fi

\ifx
according to the LLEE property, 
$\mcl{C}$ is in some loop sub-chart $\mcl{B}$ of $\mcl{G}$, with some start node $Y$ in $\mcl{C}$. 
Since $\mcl{C}$ does not pass $X$, $Y\not\!\!\eqvn X$ and so $Y\in \mcl{A}^\ip$. 
By (\ref{item: looping-back property - loop sub-chart}) and (\ref{item: looping-back property - looping-back sub-chart}) above, 
$\mcl{B}$ is a sub-chart of the looping-back chart $\mcl{C}_Y$, which is a proper sub-chart of $\mcl{A}_X$. i.e., $\mcl{B}\subc \mcl{C}_Y\psubc \mcl{A}_X$. This immediately violates the assumption about $\mcl{C}$. 
\fi

\ifx
Because $\mcl{B}$ is not a sub-chart of $\mcl{A}$ (otherwise by (\ref{item: looping-back property - loop sub-chart}) above it would be a sub-chart of looping-back sub-chart of $\mcl{A}$, contradicting the assumption), to leave $\mcl{A}$, 
by (L2), there is a path $Y\xrightarrow{}_{[\cdot]}Z \xrightarrow{}^* X\xrightarrow{}...$ in $\mcl{B}$ starting from $Y$ and passing $X$.
But this means that $Y\lpb^+ X$, contradicting Prop.~\ref{prop:looping-back relation}. 
\fi

\ifx
But this immediately causes that $\mcl{G}$ violates the LLEE property. 
Because on one hand, any loop-entry transition in $\E_X$ is from the body of $\mcl{B}$; while on the other hand, the loop-entry transition $Y\xrightarrow{}_{[\cdot]}Z$ in $\E_Y$ is from the body of $\mcl{A}$. 
\fi
\end{proof}

\ifx
\begin{prop}
\label{prop:loop sub-chart 2}
    Given a looping-back chart $\mcl{A}(X, \E_X)$ of an LLEE chart $\hat{\mcl{G}}/\mcl{G}$, 
    each loop of $\mcl{A}(X, \E_X)$ either passes through $X$ or 
    is in some looping-back sub-chart of $\mcl{A}(X, \E_X)$. 
\end{prop}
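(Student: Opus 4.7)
The plan is to use Proposition~\ref{prop:looping-back chart} together with the LEE property in order to locate, for any loop $\mcl{C}$ of $\mcl{A}_X$ that avoids $X$, a proper looping-back sub-chart of $\mcl{A}_X$ containing $\mcl{C}$. Concretely, I would exhibit a node $Y \in V_{\mcl{A}_X} \setminus \{X\}$ such that the looping-back chart $\mcl{B}_Y$ satisfies both $\mcl{B}_Y \psubc \mcl{A}_X$ and $\mcl{C} \subc \mcl{B}_Y$.

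First, I would observe that $\mcl{C}$ must contain at least one loop-entry transition $Y \xrightarrow{}_{[m]} Z$ with $m > 0$. This follows from the LEE property indicated by the LLEE witness $\hat{\mcl{G}}$: the elimination process removes successive loop sub-charts and terminates in a chart without infinite paths, so if $\mcl{C}$ consisted entirely of body transitions it would survive the whole process, a contradiction. Since $\mcl{C}$ does not pass through $X$, the start node $Y$ of this loop-entry transition satisfies $Y \not\equiv X$, and from Definition~\ref{def:Looping-back Chart} we obtain $X \lpb^+ Y$, placing $Y$ in the body of $\mcl{A}_X$.

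Next, I would verify that $\mcl{B}_Y$ is a genuine looping-back chart and a proper sub-chart of $\mcl{A}_X$. The former holds because $Y \xrightarrow{}_{[m]} Z$ is the loop-entry transition of a loop sub-chart starting at $Y$, which is entirely contained in $\mcl{B}_Y$, so $\mcl{B}_Y$ contains a loop as required by Definition~\ref{def:Looping-back Chart}. The latter, $\mcl{B}_Y \psubc \mcl{A}_X$, is immediate from Proposition~\ref{prop:looping-back chart}(\ref{item: looping-back property - looping-back sub-chart}), since $Y \in V_{\mcl{A}_X} \setminus \{X\}$.

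Finally, I would argue $\mcl{C} \subc \mcl{B}_Y$ by tracing $\mcl{C}$ forward from $Y$. Orient $\mcl{C}$ so it begins with the transition $Y \xrightarrow{}_{[m]} Z$; then $Z \in V_{\mcl{B}_Y}$ because $Y \lpb Z$. Every subsequent node on $\mcl{C}$ before the loop returns to $Y$ lies in the body of $\mcl{B}_Y$: by Proposition~\ref{prop:looping-back chart}(\ref{item: looping-back property - path}), no transition out of a body node of $\mcl{B}_Y$ can leave $V_{\mcl{B}_Y}$, so $\mcl{C}$ cannot escape $\mcl{B}_Y$ before closing back to $Y$. All nodes of $\mcl{C}$ thus lie in $V_{\mcl{B}_Y}$, and since $\mcl{B}_Y$ is defined as the induced sub-chart on its node set, all transitions of $\mcl{C}$ lie in $\mcl{B}_Y$ as well. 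The only real obstacle is the very first step, extracting the loop-entry transition from LEE in a clean way; once this is in hand the containment argument is a direct application of the confinement property of looping-back charts.
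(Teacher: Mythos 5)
Your proof is correct and takes essentially the same route as the paper's: extract from $\mcl{C}$ the loop-entry transition $Y\xrightarrow{}_{[m]}\cdot$ with $m>0$ and $Y\not\eqvn X$ (which must exist since $\mcl{G}$ is an LLEE chart), conclude $\mcl{C}\subc \mcl{B}_Y$ from Def.~\ref{def:Looping-back Chart}, and then $\mcl{B}_Y\psubc \mcl{A}_X$ from Prop.~\ref{prop:looping-back chart}(\ref{item: looping-back property - looping-back sub-chart}). The two steps you elaborate --- the elimination-process argument for why a loop cannot consist solely of body transitions, and the use of the confinement property (\ref{item: looping-back property - path}) to keep $\mcl{C}$ inside $\mcl{B}_Y$ --- merely make explicit what the paper's version leaves implicit.
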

\fi
%Prop.~\ref{prop:loop sub-chart 2} is also by LLEE property and the definition of loop charts (laws (L1) - (L3)). 

\ifx
\begin{proof}
   Assume there exists a loop $\mcl{C}$ of $\mcl{A}(X, \E_X)$ that neither passes 
$X$ nor is in any looping-back sub-chart of $\mcl{A}$. 
Since $\mcl{G}$ is an LLEE chart, 
$\mcl{C}$ is in some loop sub-chart $\mcl{B}(Y, \E_Y)$ of $\mcl{G}$, with some start node $Y$ of $\mcl{C}$. 
Since $\mcl{C}$ does not pass $X$, $Y\not\equiv X$ and so $Y\in \mcl{A}^\ip$. 
Because $\mcl{B}$ is not a sub-chart of $\mcl{A}$, 
by (L2), there is a path $Y\xrightarrow{}_{[\cdot]}Z \xrightarrow{}...$ in $\mcl{B}$ starting from $Y$ which must pass $X$ to leave $\mcl{A}$. 
But this immediately causes that $\mcl{G}$ violates the LLEE property. 
Because on one hand, any loop-entry transition in $\E_X$ is from the body of $\mcl{B}$; while on the other hand, the loop-entry transition $Y\xrightarrow{}_{[\cdot]}Z$ in $\E_Y$ is from the body of $\mcl{A}$. 

\end{proof}
\fi

\subsection{Images and Well-structured Pre-images}
\label{section:well-structured Images}

\begin{definition}[Image]
	\label{def:Image on LLEE Charts}
	Given a bisimulation function $\theta: V_{\hat{\mcl{G}}/\mcl{G}} \to V_\mcl{H}$ from an LLEE chart $\hat{\mcl{G}}/\mcl{G}$ to a bismulation collapse $\mcl{H}$, 
 a sub-chart $\mcl{I}$ of $\mcl{H}$ is called an `image', 
 if there exists a looping-back chart $\LC_x$ in $\mcl{G}$ such that $\mcl{I}\eqvc\theta(\LC_x) = \gra_{\mcl{H}}(\theta(V_{\LC_x}))$. 

Also call $\mcl{I}$ ``the image of $\LC_x$'' for each 
such $\LC_x$, and call $\LC_x$ a `pre-image' of $\mcl{I}$. 

% Often denote by $\theta(\LC_x)$ the image $\mcl{G}(\theta(V_{\LC_x}))$. 

 Denote by $\mcl{I}_\theta(\mcl{H})$ the set of all images on $\mcl{H}$ (w.r.t. $\theta$). 
 
\end{definition}

Call $V_{\theta(\LC_x)}\setminus \{\theta(x)\}$ the 
\emph{body} of image $\theta(\LC_x)$ w.r.t. a pre-image $\LC_x$. 

An image can have more than one pre-image. 
Different images can overlap each other in the sense that they 
share same nodes and transitions. 

For example, Fig.~\ref{figure:An example of images} shows the bisimulation function $\theta: V_{\hat{\CII}/{\CII}}\to V_{\CI}$ from an LLEE chart $\hat{\CII}/\CII$ to its bisimulation collapse $\CI$. 
According to Sect.~\ref{section:Some Conventions of Notations}, 
the nodes with the same alphabets (e.g. $x, x'$ and $X$) are bisimilar to each other. 
In this example, 
there are totally 3 images on $\CI$:
$\CI$, $\gra(\{Z, X\})$ and $\gra(\{Z, Y\})$.
Among them image $\CI$ has both looping-back charts $({\CII})_z$ and $\gra(\{x, z'', k, y\})_x$ as its pre-images; 
image $\gra(\{Z, X\})$ has the pre-image $\gra(\{z', x'\})_{z'}$; 
image $\gra(\{Z, Y\})$ has the pre-image $\gra(\{z'', y\})_{z''}$. 
Images $\gra(\{Z, X\})$ and $\gra(\{Z, Y\})$ are proper sub-images of image $\CI$. 
$\gra(\{Z, X\})$ and $\gra(\{Z, Y\})$ are not sub-images of each other but overlap each other. %by sharing node $Z$ and transition $Z\xrightarrow{a_1} Z$. 

\ifx
$\theta(\CII)$, $\theta(\{z, x'\}_{z'})$, $\theta(\{x, z'', k, y\}_{x})$ 
and $\theta(\{z'', y\}_{z''})$. 
Image $\theta(\{x, z'', k, y\}_{x})$ of the looping-back chart $\{x, z'', k, y\}_{x}$ and image $\theta(\{z', x'\}_{z'})$ of the looping-back chart $\{z', x'\}_{z'}$ overlap each other (with $\theta(\{x, z'', k, y\}) = \CI$ and $\theta(\{z', x'\}) = \{Z, X\}$). 
While image $\theta(\CII)$ of the looping-back chart $\CII$ is the same as the image $\theta(\{x, z'', k, y\}_{x})$ (with $\theta(\CII) = \theta(\{x, z'', k, y\}_{x}) = \CI$). 
\fi

A bisimulation function is monotonic w.r.t. sub-chart relation in the following sense. 

\begin{proposition}
\label{prop:monotonic of bisimlation function}
    A bisimulation function $\theta: V_{\hat{\mcl{G}}/\mcl{G}}\to V_\mcl{H}$ 
    satisfies that $\mcl{A}_1\subc \mcl{A}_2$ implies $\theta(\mcl{A}_1)\subc \theta(\mcl{A}_2)$ for any sub-charts $\mcl{A}_1$ and $\mcl{A}_2$ of $\mcl{G}$. 
\end{proposition}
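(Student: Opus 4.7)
The plan is a short unpacking of the definitions, since this is really a monotonicity lemma about set-images followed by a monotonicity lemma about $\gra_\mcl{H}(\cdot)$. First I would note that, by Def.~\ref{def:Image on LLEE Charts}, $\theta(\mcl{A}_i) = \gra_\mcl{H}(\theta(V_{\mcl{A}_i}))$ for $i=1,2$, so it suffices to prove $\gra_\mcl{H}(\theta(V_{\mcl{A}_1})) \subc \gra_\mcl{H}(\theta(V_{\mcl{A}_2}))$.

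I would then decompose the verification of $\subc$ into the two components (node set and transition set). From $\mcl{A}_1 \subc \mcl{A}_2$ we read off $V_{\mcl{A}_1} \subseteq V_{\mcl{A}_2}$, and since $\theta$ is a (total) function on $V_\mcl{G}$, set-image preserves inclusion, giving $\theta(V_{\mcl{A}_1}) \subseteq \theta(V_{\mcl{A}_2})$. It remains to show that $\gra_\mcl{H}$ is monotonic: if $N_1 \subseteq N_2 \subseteq V_\mcl{H}$, then $\gra_\mcl{H}(N_1) \subc \gra_\mcl{H}(N_2)$. This is immediate from the definition $\gra_\mcl{H}(N) \dddef \la N, A_\mcl{H}, \{X\xrightarrow{a}Y \mid X,Y\in N, a\in A_\mcl{H}\}, \surd\ra$: every transition $X\xrightarrow{a}Y$ with $X,Y\in N_1$ automatically satisfies $X,Y\in N_2$, hence lies in $\gra_\mcl{H}(N_2)$. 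Combining the two monotonicities yields the claim.

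There is essentially no hard step here; the only thing to be a bit careful about is that $\subc$ for sub-charts must be interpreted as inclusion of both node sets and transition sets, which is why the argument invokes the explicit definition of $\gra_\mcl{H}(\cdot)$ rather than treating sub-chart containment as a black box. The result will be used later as a basic bookkeeping lemma when comparing images of nested looping-back charts.
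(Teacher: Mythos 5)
Your proof is correct and follows essentially the same route as the paper's own (one-line) proof sketch: read off $V_{\mcl{A}_1}\subseteq V_{\mcl{A}_2}$, apply $\theta$ as a set-image, and use monotonicity of $\gra_\mcl{H}(\cdot)$, which you merely spell out in more detail than the paper does. The only nitpick is that the identity $\theta(\mcl{A})=\gra_\mcl{H}(\theta(V_\mcl{A}))$ comes from the notational conventions of Sect.~\ref{section:Some Conventions of Notations} rather than from Def.~\ref{def:Image on LLEE Charts}.
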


%The proof is trivial by the definitions of $\theta$ and $\gra(\cdot)$. 
\begin{proof}[Proof Sketch]
    $\mcl{A}_1\subc \mcl{A}_2$ means $V_{\mcl{A}_1}\subseteq V_{\mcl{A}_2}$. 
    So $\theta(V_{\mcl{A}_1})\subseteq \theta(V_{\mcl{A}_2})$. 
    By the definition of $\gra(\cdot)$ in Sect.~\ref{section:Some Conventions of Notations}, clearly $\gra(\theta(V_{\mcl{A}_1}))\subc \gra(\theta(V_{\mcl{A}_2}))$. 
    \qed
\end{proof}

%The relations between images can be quite complex, since images of different looping-back charts can overlap each other in different ways. 

\ifx
In this paper, we propose a way to manipulate these images so that we can build a strictly-partial order between them, according to which we thus manage to directly determine the LEE structure of the bisimulation collapse through a structure, we call \emph{region}.  
Below (of this sub-section and in Sect.~\ref{section:Elimination Order and Region}) we clarify our idea by introducing several critical concepts. 
\fi

Below we propose a critical property for images, called \emph{well-structuredness}, based on which we will show in Sect.~\ref{section:Restoration for LLEE Charts} that we manage to directly determine an LEE structure on images. %paccording to the structure of these images. 
\ifx
\begin{mydef}[Well-structured Image]
\label{def:well-structured Image}
    An image $\theta(\LC_x)$ w.r.t. a $\theta: \hat{\mcl{G}}/\mcl{G}\to \mcl{H}$ and a bisimulation collapse $\mcl{H}$ is called ``well constructed'', if 
    for any $\LC_y\subset \LC_x$ in $\mcl{G}$,
    $\theta(\LC_y)\subset \theta(\LC_x)$. 

    Denote by $\mcl{I}_\theta(\mcl{H})$ the set of all well-structured images on $\mcl{H}$ w.r.t. $\theta$. 
\end{mydef}
\fi

\ifx
The image $\CI$ of the looping-back chart $\CII$ of Fig.~\ref{figure:An example of images} is not well constructed, since $\{x, z'', k, y\}\subset \CII$ but 
$\theta(\{x, z'', k, y\}) = \CI = \theta(\CII)$. 
On the other hand, the (same) image $\CI$ of the looping-back chart $\{x, z'', k, y\}_x$ is well constructed. 
\fi

\begin{proposition}[Well-structuredness]
\label{prop:well-structured Image}
    Given a bisimulation function $\theta: V_{\hat{\mcl{G}}/\mcl{G}}\to V_\mcl{H}$ from an LLEE chart $\hat{\mcl{G}}/\mcl{G}$ to its bisimulation collapse $\mcl{H}$, 
    for any image $\mcl{I}$ of $\mcl{I}_\theta(\mcl{H})$,
    there exists a looping-back chart $\LC_x$ as a pre-image of  
    $\mcl{I}$ such that
    for any $\LC_y\psubc \LC_x$, 
    $\theta(\LC_y)\psubc \theta(\LC_x) \eqvc \mcl{I}$. 

    We say $\LC_x$ is `well structured' (w.r.t. $\theta$), and call $\LC_x$ a ``well-structured pre-image'' of image $\mcl{I}$. 
\end{proposition}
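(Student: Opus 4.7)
The plan is to prove this by a minimality argument. Given an image $\mcl{I}\in\mcl{I}_\theta(\mcl{H})$, let $\mcl{P}(\mcl{I})\dddef \{\LC_x \mid \LC_x \mbox{ is a looping-back chart of } \mcl{G} \mbox{ with } \theta(\LC_x)\eqvc \mcl{I}\}$. By Definition~\ref{def:Image on LLEE Charts}, $\mcl{P}(\mcl{I})\neq \emptyset$. Since $V_\mcl{G}$ is finite, $\mcl{G}$ has only finitely many looping-back sub-charts, so $\mcl{P}(\mcl{I})$ is finite and admits a $\subc$-minimal element $\LC_x$. I claim that this $\LC_x$ is the desired well-structured pre-image.

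To verify the claim, pick an arbitrary looping-back chart $\LC_y\psubc \LC_x$ of $\mcl{G}$. By Proposition~\ref{prop:monotonic of bisimlation function} (monotonicity of bisimulation functions), $\theta(\LC_y)\subc \theta(\LC_x)\eqvc \mcl{I}$. It remains to exclude the equality $\theta(\LC_y)\eqvc \theta(\LC_x)$. Suppose for contradiction that this equality holds; then by unfolding $\theta(\cdot) = \gra_\mcl{H}(\theta(V_{(\cdot)}))$ as defined in Section~\ref{section:Some Conventions of Notations}, we would have $\theta(V_{\LC_y}) = \theta(V_{\LC_x})$, and hence $\LC_y\in \mcl{P}(\mcl{I})$. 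But $\LC_y\psubc \LC_x$ contradicts the minimality of $\LC_x$ in $\mcl{P}(\mcl{I})$. Therefore $\theta(\LC_y)\psubc \theta(\LC_x)\eqvc \mcl{I}$, as required.

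There is essentially no technical obstacle beyond recognising that ``well-structuredness'' is really a minimality condition on the pre-image, and that Proposition~\ref{prop:monotonic of bisimlation function} supplies the one-way inclusion for free. The only minor subtlety is the conversion between $\psubc$ on charts and $\subsetneq$ on node sets: because $\gra_\mcl{H}(\cdot)$ restores every transition between the chosen nodes, $\theta(\LC_y)\eqvc\theta(\LC_x)$ holds iff $\theta(V_{\LC_y})=\theta(V_{\LC_x})$, which is exactly what is needed to make the minimality argument go through. Note also that the argument does not use the LLEE witness structure directly; it only relies on the finiteness of $\mcl{G}$, on $\mcl{P}(\mcl{I})\neq\emptyset$ (given by Definition~\ref{def:Image on LLEE Charts}), and on Proposition~\ref{prop:monotonic of bisimlation function}.
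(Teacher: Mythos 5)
Your proof is correct and is essentially the same argument as the paper's: the paper also observes that a non-well-structured pre-image admits a strictly smaller pre-image and concludes by well-foundedness of $\subc$, which is just the descent form of your "take a $\subc$-minimal element of the finite, non-empty set of pre-images" argument. Your write-up merely makes explicit the use of Prop.~\ref{prop:monotonic of bisimlation function} for the inclusion $\theta(\LC_y)\subc\theta(\LC_x)$ and the equivalence between chart identity and equality of node sets, both of which the paper leaves implicit.
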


%The proof of Prop.~\ref{prop:well-structured Image} is trivial according to 
%the condition in Prop.~\ref{prop:well-structured Image}. 
%We only give a sketch of the proof as follows. 
\ifx
\begin{proof}[Proof of Prop.~\ref{prop:well-structured Image} (Sketch)]
If a pre-image $\LC_x$ of $\mcl{I}$ is not well structured, 
which means there exists a proper looping-back sub-chart $\LC_y$ of $\LC_x$ such that $\theta(\LC_y)\eqvc \theta(\LC_x) \eqvc \mcl{I}$, 
then $\LC_y$ itself is a pre-image of $\mcl{I}$. 
By that $\subc$ is a well-founded relation, we can always obtain a well-structured pre-image for $\mcl{I}$.  
\qed
\end{proof}
\fi
%The proof of Prop.~\ref{prop:well-structured Image} is trivial according to 
%the condition in Prop.~\ref{prop:well-structured Image}. 

\begin{proof}
If a pre-image $\LC_x$ of $\mcl{I}$ is not well structured, 
which means there exists a proper looping-back sub-chart $\LC_y$ of $\LC_x$ such that $\theta(\LC_y)\eqvc \theta(\LC_x) \eqvc \mcl{I}$, 
then $\LC_y$ itself is a pre-image of $\mcl{I}$. 
By that $\subc$ is a well-founded relation, we can always obtain a well-structured pre-image for $\mcl{I}$.  
\qed
\end{proof}

In Fig.~\ref{figure:An example of images}, the looping-back chart $({\CII})_z$, as a pre-image of image $\CI$, is not well structured, since 
$\gra(\{x, z'', k, y\})_x\psubc (\CII)_z$ but 
$\theta(\gra(\{x, z'', k, y\})_x) \eqvc \CI \eqvc \theta((\CII)_z)$. 
However, chart $\gra(\{x, z'', k, y\})_x$ is a well-structured pre-image of image $\CI$ w.r.t. $\theta$. 

Prop.~\ref{prop:well-structured Image} guides us to consider 
the relation between an image and its well-structured pre-images as looping-back charts.

%Prop.~\ref{prop:well-structured Image} guides us that to consider all the images, it is enough to consider those well-structured ones. 

\section{Image Reflection on LLEE Charts}
\label{section:Restoration for LLEE Charts}

\ifx
\subsection{Preparations of Main Proofs}

Before introducing and proving the main theorems Lemma~\ref{lemma:Image Reflection on LLEE Charts --- critical condition} and 
Theorem~\ref{theo:restoration for LLEE charts} of this paper, 
we first consider some facts about LEE/LLEE charts and bisimulation relations between charts. 
They are not (or not fully) addressed in previous work but are crucial for our work. 

%By the layered elimination structure indicated by $\hat{\mcl{G}}$, 
An LLEE chart satisfies properties Prop.~\ref{prop:loop sub-chart} and Prop.~\ref{prop:loop sub-chart 2} as follows. 
They are direct results by the LLEE property and the definition of loop charts in Sect.~\ref{section:LEE-LLEE Charts}. 
(Note that although they can also be obtained as corollaries in early work~\cite{???}, here we give our own proofs without using auxiliary propositions. )

%a property that is critical to our later proposed methodology in Sect.~\ref{section:Restoration for LLEE Charts}. 
\begin{prop}
\label{prop:loop sub-chart}
    Given a looping-back chart $\mcl{A}(X, \E_X)$ of an LLEE chart $\hat{\mcl{G}}/\mcl{G}$, every other looping-back chart $\mcl{B}(Y, \E_Y)$ of $\mcl{A}(X, \E_X)$ satisfies that
    $\mcl{B}(Y, \E_Y)\subset \mcl{A}(X, \E_X)$. 
\end{prop}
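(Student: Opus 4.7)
The plan is to read off both the inclusion and its strictness directly from Definition~\ref{def:Looping-back Chart} together with Proposition~\ref{prop:looping-back relation}, which states that $\lpb^+$ is a well-founded strict partial order on an LLEE chart. Since a looping-back chart is fully determined by its start node, the whole statement boils down to showing that once $Y\not\eqvn X$ lies in $\mcl{A}(X,\E_X)$, the set $V_{\mcl{B}(Y,\E_Y)} = \{Y\}\cup\{Z\mid Y\lpb^+ Z\}$ is strictly contained in $V_{\mcl{A}(X,\E_X)} = \{X\}\cup\{Z\mid X\lpb^+ Z\}$, i.e.\ that passing from $\mcl{A}$ to an inner looping-back $\mcl{B}$ forces a strict descent in the $\lpb^+$-hierarchy.

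For the inclusion $\mcl{B}(Y,\E_Y)\subc\mcl{A}(X,\E_X)$, I would pick an arbitrary $Z\in V_{\mcl{B}(Y,\E_Y)}$. Either $Z\eqvn Y$, in which case $Z\in V_{\mcl{A}(X,\E_X)}$ because $Y\in V_{\mcl{A}(X,\E_X)}$; or $Y\lpb^+ Z$, and then transitivity of $\lpb^+$ combined with $X\lpb^+ Y$ (which itself follows from $Y\in V_{\mcl{A}(X,\E_X)}\setminus\{X\}$ via Definition~\ref{def:Looping-back Chart}) gives $X\lpb^+ Z$, so again $Z\in V_{\mcl{A}(X,\E_X)}$. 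For the strictness I would argue by contradiction: if $\mcl{B}(Y,\E_Y)\eqvc\mcl{A}(X,\E_X)$ then $X\in V_{\mcl{B}(Y,\E_Y)}$, and since $X\not\eqvn Y$ Definition~\ref{def:Looping-back Chart} forces $Y\lpb^+ X$. Combined with the already established $X\lpb^+ Y$, this yields $X\lpb^+ X$, contradicting the strict-order (irreflexivity) part of Proposition~\ref{prop:looping-back relation}. Hence $\mcl{B}(Y,\E_Y)\psubc\mcl{A}(X,\E_X)$.

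The main obstacle, indeed essentially the only point needing care, is pinning down what ``every other looping-back chart $\mcl{B}(Y,\E_Y)$ of $\mcl{A}(X,\E_X)$'' is meant to range over. The natural reading, forced by consistency with Proposition~\ref{prop:looping-back chart}(i) and with the way looping-back sub-charts emerge in the LLEE elimination process, is that $Y$ ranges over $V_{\mcl{A}(X,\E_X)}\setminus\{X\}$ with $\E_Y$ the corresponding set of loop-entry transitions leaving $Y$. Once this reading is fixed, the whole argument is a two-line application of transitivity plus irreflexivity of $\lpb^+$; no inductive construction, graph-transformation, or appeal to the concrete LLEE witness $\hat{\mcl{G}}$ is needed, and in particular this proposition can be slotted in as a clean preparatory lemma before the image-reflection machinery of Sect.~\ref{section:Looping-back Charts and well-structured Images}.
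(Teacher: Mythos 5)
Your proof is correct and takes essentially the same route as the paper's own argument (stated there as Prop.~\ref{prop:looping-back chart}(\ref{item: looping-back property - looping-back sub-chart})): the inclusion follows from $X\lpb^+ Y$ and transitivity of $\lpb^+$, and strictness follows because $\mcl{B}_Y\eqvc\mcl{A}_X$ with $Y\not\eqvn X$ would give both $X\lpb^+ Y$ and $Y\lpb^+ X$, contradicting the strict partial order of Prop.~\ref{prop:looping-back relation}. Your reading of ``every other looping-back chart of $\mcl{A}$'' as ranging over start nodes $Y\in V_{\mcl{A}}\setminus\{X\}$ is the intended one.
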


%Prop.~\ref{prop:loop sub-chart} says all loop sub-charts of a loop sub-chart $\mcl{A}$ are proper sub-charts. 
\begin{proof}
    If $\mcl{B}(Y, \E_Y) = \mcl{A}(X, \E_E)$, then  
$\mcl{B}$ must be $\mcl{A}$ itself (i.e., $Y\equiv X$ and $\E_Y = \E_X$). 
Otherwise, LLEE property is violated. 
Because any loop-entry transitions $X\xrightarrow{}_{[\cdot ]}\cdot$ in $\E_X$ is from the body of $\mcl{B}$, and vice versa. 
\end{proof}

\begin{prop}
\label{prop:loop sub-chart 2}
    Given a loop sub-chart $\mcl{A}(X, \E_X)$ of an LLEE chart $\hat{\mcl{G}}/\mcl{G}$, 
    each loop of $\mcl{A}(X, \E_X)$ either passes through $X$ or 
    is in some loop sub-chart of $\mcl{A}(X, \E_X)$. 
\end{prop}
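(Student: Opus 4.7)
The plan is to produce, for a loop $\mcl{C}$ in $\mcl{A}(X,\E_X)$ that avoids $X$, a concrete loop sub-chart $\mcl{B}(Y,\E_Y)$ of $\mcl{A}$ containing $\mcl{C}$, extracted from the LLEE witness of $\mcl{G}$. Since $\mcl{G}$ satisfies LEE, every cycle is eventually broken by the removal of a loop-entry, so $\mcl{C}$ must contain at least one loop-entry transition. I would pick any such transition $Y\xrightarrow{}_{[m]}Z$ on $\mcl{C}$ and take $\mcl{B}$ to be the loop sub-chart that this transition belongs to in the witness --- namely the $\la Y,\E_Y\ra$-generated sub-chart, where $\E_Y$ collects all loop-entries from $Y$ of order $m$. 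Note $Y\neq X$ since $\mcl{C}$ avoids $X$, so $Y$ lies in the body of $\mcl{A}$.

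The containment $\mcl{C}\subc \mcl{B}$ then follows by walking around $\mcl{C}$ starting at $Y$: the first transition $Y\xrightarrow{}_{[m]}Z$ sits in $\E_Y$, and every subsequent transition $W\xrightarrow{}W'$ on $\mcl{C}$ lies on a path from $Y$ that passes through this loop-entry and does not revisit $Y$ before closing the loop, so it is included in $\tran_{\mcl{B}}$ by the very definition of a $\la Y,\E_Y\ra$-generated chart.

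The real work is showing $\mcl{B}\subc \mcl{A}$, which I would split into two stages. First, I would rule out $X\in V_{\mcl{B}}$ using Prop.~\ref{prop:looping-back relation}: membership of a node in the body of a loop sub-chart rooted at some $W$ entails $W\lpb^+(\text{node})$, because the defining path decomposes into maximal segments of a loop-entry followed by body transitions. Applied to $\mcl{A}$ this gives $X\lpb^+ Y$ (since $Y\in V_{\mcl{A}}\setminus\{X\}$), while $X\in V_{\mcl{B}}\setminus\{Y\}$ would dually give $Y\lpb^+ X$; together these force $Y\lpb^+ Y$, contradicting the strict, well-founded ordering of $\lpb^+$. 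Second, once $X\notin V_{\mcl{B}}$, every body node of $\mcl{B}$ is shown to lie in $V_{\mcl{A}}\setminus\{X\}$ by induction along the defining paths from $Y$, using the fact that body nodes of $\mcl{A}$ have all their outgoing transitions inside $\mcl{A}$ (a closure property of $\la X,\E_X\ra$-generated charts, analogous to Prop.~\ref{prop:looping-back chart}(ii)); since such paths cannot escape via $X$ either, they remain trapped in $\mcl{A}$.

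The hard part, I expect, will be the body-to-$\lpb^+$ translation inside the first stage. The definition of $\la Y,\E_Y\ra$-generated charts permits \emph{arbitrary} transitions after the initial loop-entry, whereas $\lpb$ insists on a loop-entry followed by body transitions only. So the translation needs a careful induction on path length, splitting the given path into maximal body-segments broken by subsequent loop-entries and verifying that the ``does not revisit $Y$'' constraint is preserved across each splitting point, so that each segment yields one $\lpb$-step and the concatenation yields the desired $\lpb^+$-relation.
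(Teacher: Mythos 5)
Your proof goes by explicit construction where the paper argues by contradiction: the paper simply \emph{asserts} that $\mcl{C}$ lies in some loop sub-chart $\mcl{B}(Y,\E_Y)$ of $\mcl{G}$ with $Y$ on $\mcl{C}$, notes $Y\not\eqvn X$ so $Y$ is in the body of $\mcl{A}$, and then derives the contradiction directly from layeredness --- if $\mcl{B}\not\subc\mcl{A}$, a path of $\mcl{B}$ must leave $\mcl{A}$ through $X$, so the loop-entries of $\E_X$ lie in the body of $\mcl{B}$ while those of $\E_Y$ lie in the body of $\mcl{A}$, which is impossible whichever is eliminated first. You instead build $\mcl{B}$ and prove $\mcl{C}\subc\mcl{B}\subc\mcl{A}$, routing the same layeredness through the strict well-founded order $\lpb^+$ of Prop.~\ref{prop:looping-back relation}. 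That is a legitimate and in one respect better route: it supplies the existence step the paper leaves implicit, and your body-to-$\lpb^+$ translation does go through (an induction along the defining path, extending the current witnessing segment on body transitions and starting a fresh $\lpb$-step at each loop-entry, handles the ``does not revisit'' side conditions without difficulty).

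The genuine gap is in how you identify $\mcl{B}$. First, ``pick any such transition'' is not enough: if $\mcl{C}$ carries loop-entries of several orders and you pick one of order $m$ that is not minimal, the loop is already broken at an earlier elimination step, so the loop sub-chart actually eliminated at step $m$ need not contain $\mcl{C}$; you must take the loop-entry on $\mcl{C}$ of \emph{smallest} order. Second, and independently, the $\la Y,\E_Y\ra$-generated chart taken in the \emph{original} $\mcl{G}$ is generally not a loop chart at all: it can contain lower-order loops in its body, so (L2) fails. Concretely, in $\hat{\CII}/\CII$ of Fig.~\ref{figure:An example of images} the $\la x,\{x\xrightarrow{b_1}_{[2]}z''\}\ra$-generated chart of $\CII$ contains the order-$1$ self-loop at $z''$, and the path $x\to z''\to z''\to\cdots$ never returns to $x$. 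The loop sub-chart ``that the transition belongs to in the witness'' is the generated chart of the \emph{partially eliminated} chart at stage $m-1$, which is a proper sub-chart of your $\mcl{B}$; with $m$ minimal on $\mcl{C}$, the loop survives to that stage and is contained in it, and your $\lpb^+$ argument (applied to the smaller chart, which it covers a fortiori) still yields containment in $\mcl{A}$. You should also close the small hole at the very start: a loop could in principle vanish during elimination by becoming unreachable rather than by losing a loop-entry, and ruling this out requires observing that a loop lying entirely in the body of the loop sub-chart being eliminated would already violate (L2) for that sub-chart. Both repairs are local; the substantive containment argument is sound.
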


%Prop.~\ref{prop:loop sub-chart 2} is also by LLEE property and the definition of loop charts (laws (L1) - (L3)). 

\begin{proof}
   Assume there exists a loop $\mcl{C}$ of $\mcl{A}(X, \E_X)$ that neither passes 
$X$ nor is in any loop sub-chart of $\mcl{A}$. 
Since $\mcl{G}$ is an LLEE chart, 
$\mcl{C}$ is in some loop sub-chart $\mcl{B}(Y, \E_Y)$ of $\mcl{G}$, with some start node $Y$ of $\mcl{C}$. 
Since $\mcl{C}$ does not pass $X$, $Y\not\equiv X$ and so $Y\in \mcl{A}^\ip$. 
Because $\mcl{B}$ is not a sub-chart of $\mcl{A}$, 
by (L2), there is a path $Y\xrightarrow{}_{[\cdot]}Z \xrightarrow{}...$ in $\mcl{B}$ starting from $Y$ which passes $X$ to leave $\mcl{A}$. 
But this immediately causes that $\mcl{G}$ violates the LLEE property. 
Because on one hand, any loop-entry transition in $\E_X$ is from the body of $\mcl{B}$; while on the other hand, the loop-entry transition $Y\xrightarrow{}_{[\cdot]}Z$ in $\E_Y$ is from the body of $\mcl{A}$. 
\end{proof}

\fi

\ifx
\begin{prop}
\label{prop:loop sub-chart}
    Given a loop sub-chart $\mcl{A}(X, \E_X)$ of an LLEE chart $\hat{\mcl{G}}/\mcl{G}$, for every loop sub-chart $\mcl{B}(Y, \E_Y)$ of $\mcl{G}$ where $Y\in \mcl{A}(X, \E_X)$, 
    $\mcl{B}(Y, \E_Y)\subset \mcl{A}(X, \E_X)$. 
\end{prop}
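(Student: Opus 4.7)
The plan is to mirror the strategy of the proof of Prop.~\ref{prop:looping-back chart}(\ref{item: looping-back property - looping-back sub-chart}), exploiting that $\lpb^+$ is a well-founded strict partial order (Prop.~\ref{prop:looping-back relation}) together with the observation that the body of a loop sub-chart $\mcl{A}(X,\E_X)$ is exactly the set of nodes $Z$ with $X \lpb^+ Z$ witnessed by a path starting with a transition in $\E_X$ and continuing through body transitions without revisiting $X$.

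First I would split into two cases according to whether $Y \eqvn X$ or $Y$ lies in the body of $\mcl{A}$. In the case $Y \not\eqvn X$, I have $X \lpb^+ Y$. For any $Z \in V_{\mcl{B}}$, either $Z \eqvn Y \in V_{\mcl{A}}$, or $Z$ lies in the body of $\mcl{B}$, in which case $Y \lpb^+ Z$; transitivity of $\lpb^+$ then yields $X \lpb^+ Z$, placing $Z$ in $V_{\mcl{A}}$. For the transitions of $\mcl{B}$, I would adapt the argument of Prop.~\ref{prop:looping-back chart}(\ref{item: looping-back property - path}): no body node of $\mcl{A}$ admits a transition escaping $\mcl{A}$, so every transition of $\mcl{B}$ is already a transition of $\mcl{A}$.

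The harder case is $Y \eqvn X$, where both loop sub-charts share the start node but may carry different loop-entry sets. My plan is to appeal directly to the LLEE witness $\hat{\mcl{G}}$: since $\hat{\mcl{G}}$ uniquely tags each transition from $X$ as loop-entry (positive order) or body (order $0$), any two loop sub-charts of $\mcl{G}$ rooted at $X$ and consistent with $\hat{\mcl{G}}$ must draw their loop-entry sets from the same pool of tagged transitions. Matching $\E_Y$ against $\E_X$ under the generation rule then gives $\mcl{B} \subseteq \mcl{A}$, with equality arising precisely when $\E_Y = \E_X$.

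The main obstacle I anticipate lies in this $Y \eqvn X$ case: any loop-entry transition in $\E_Y \setminus \E_X$ would have to be tagged simultaneously as a loop-entry for $\mcl{B}$ and as a body transition for $\mcl{A}$, which is incompatible with the layered structure of a single LLEE witness. Pinning this incompatibility down without slipping back to a purely LEE-level argument (which, as Fig.~\ref{figure:An example of a non-LLEE LEE witness} illustrates, is insufficient to forbid such overlaps) is the technical core of the proof.
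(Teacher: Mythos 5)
Your first case ($Y \not\eqvn X$) is sound in outline and takes essentially the same route as the proof the paper retained for the looping-back-chart analogue, Prop.~\ref{prop:looping-back chart}(\ref{item: looping-back property - looping-back sub-chart}): transitivity of $\lpb^+$ for the node sets, plus the no-escape property for the transitions. One detail you should not gloss over: $X \lpb^+ Z$ only places $Z$ in the full looping-back chart $\LC_X$, not automatically in the particular $\la X,\E_X\ra$-generated chart $\mcl{A}(X,\E_X)$, whose body is entered only through the transitions in $\E_X$. You need the additional observation that the path from $Y$ to $Z$ inside $\mcl{B}$ cannot revisit $X$ (otherwise $Y \lpb^+ X$ together with $X \lpb^+ Y$ contradicts Prop.~\ref{prop:looping-back relation}), after which closure of the generated chart under successors of its non-$X$ nodes finishes the argument. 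This is repairable and consistent with what the paper does.

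The genuine gap is your $Y \eqvn X$ case: there the claimed containment is simply false, and your argument for it does not hold. Two loop sub-charts rooted at the same node but generated by different loop-entry sets need not be comparable. In $\hat{\CI}'/\CI$ of Fig.~\ref{figure:An example of a non-LLEE LEE witness}, the $\la Z, \{Z\xrightarrow{a_1}Z,\ Z\xrightarrow{a_3}Y\}\ra$-generated chart (nodes $Z,Y$) and the $\la Z, \{Z\xrightarrow{a_2}X\}\ra$-generated chart (nodes $Z,X$) are both loop sub-charts arising in the elimination process of the LLEE witness, they share the start node $Z$, and neither contains the other. Your claim that a transition in $\E_Y\setminus\E_X$ would have to be ``tagged simultaneously as a loop-entry for $\mcl{B}$ and as a body transition for $\mcl{A}$'' misreads the generation rule: a loop-entry transition from $X$ that is outside $\E_X$ is simply not a transition of $\mcl{A}(X,\E_X)$ at all, so no tagging conflict arises and no containment follows. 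The paper never closes this case either --- its proof of this proposition is only a one-line sketch via the LLEE no-double-nesting argument --- and in the final text the proposition is abandoned in favour of Prop.~\ref{prop:looping-back chart}(\ref{item: looping-back property - looping-back sub-chart}), which works with looping-back charts $\LC_x$ (aggregating \emph{all} loop-entry transitions at each node) and assumes $Y \in V_{\mcl{A}_X}\setminus\{X\}$, precisely so that the same-root pathology cannot occur. If you want a correct statement at the level of individual loop sub-charts, you must either exclude $Y \eqvn X$ or weaken the conclusion.
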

\fi

\ifx
Prop.~\ref{prop:loop sub-chart} says that in an LLEE chart $\mcl{G}$, 
all loop-entry transitions that need to be removed before the elimination of a loop sub-chart $\mcl{A}$ are exactly the loop-entry transitions of all loop sub-charts of $\mcl{A}$. 
The proof of Prop.~\ref{prop:loop sub-chart} can be obtained by observing that 
$\mcl{B}(Y, \E_Y)\cap \mcl{A}(X, E_X) \neq \{X\}$, because otherwise it violates 
the LLEE property when eliminating $\mcl{A}$. 
\fi

%\subsection{Stage (II') in Minimization Strategy}

In this section, we give our proof of the completeness of \MilIfree. 
This is realized mainly by discussing the relation between images and their well-structured pre-images as looping-back charts. 
We call that 
an image can \emph{reflect} the LLEE structure of its well-structured pre-images, 
in the sense that the image has an LEE structure (called \emph{LEE sub-chart} as defined below) enforced by 
the LLEE sturcture of its well-structured pre-images.

The main result of this section is Theorem~\ref{theo:restoration for LLEE charts}, which relies on Prop.~\ref{prop:loop correspondence} and Lemma~\ref{lemma:Image Reflection on LLEE Charts --- critical condition}. 

%We first introduce a notion in charts. 

\begin{definition}[LEE Sub-Chart]
\label{def:LEE Sub-Chart}
    A sub-chart $\mcl{A}$ of a chart $\mcl{G}$ is 
called an ``LEE sub-chart'' if it is an LEE chart. 
\end{definition}

In a chart, by the definition of loop sub-chart (Sect.~\ref{section:LEE-LLEE Charts}), every loop sub-chart is an LEE sub-chart. 
Note that a chart has an LEE sub-chart does not mean itself is an LEE chart. 

Given the bisimulation collapse $\mcl{H}$ of an LLEE chart $\mcl{G}$ with a bisimulation function $\theta: V_{\hat{\mcl{G}}/\mcl{G}}\to V_\mcl{H}$,  
the main idea of our proof is that through $\theta$, the behaviour of each image $\theta(\LC_x)$ of $\mcl{I}_\theta(\mcl{H})$ is constrained by the behaviour of its well-structured pre-image $\LC_x$ %(i.e., in our words, $\theta(\LC_x)$ reflects the LLEE structure of $\LC_x$) 
so that $\theta(\LC_x)$ is in fact an LEE sub-chart of $\mcl{H}$: 
By the sub-image relation between images, we show that there exists an elimination process by which each image $\theta(\LC_x)$ will eventually become a loop sub-chart itself and thus can be 
eliminated further. 
%by eliminating this loop sub-chart there is no loops left in $\theta(\LC_x)$. 
By eliminating all the images as LEE sub-charts of $\mcl{H}$, we show that there would be no loops left in $\mcl{H}$ because the images of $\mcl{I}_\theta(\mcl{H})$ cover all the loops of $\mcl{H}$. 

We use an example to illustrate our idea.  
Consider the chart $\CI$ (Fig.~\ref{figure:An example of a non-LLEE LEE witness}, Fig.~\ref{figure:An example of images}). 
The set of its images is $\mcl{I}_\theta(\CI) = \{\CI, \gra(\{Z, X\}), \gra(\{Z, Y\})\}$. 
To prove $\CI$ is an LEE chart, we conduct the following two steps of reasoning in an inside-out manner w.r.t. the sub-image relation:
\begin{itemize}
    \item Firstly, for the minimum image $\gra(\{Z, Y\}) \eqvc \theta(\gra(\{z'',y\})_{z''})$, 
we see that $\gra(\{Z, Y\})$ is a loop sub-chart (so is an LEE sub-chart) and can be eliminated in $\CI$:
As constrained by its well-structured pre-image $\gra(\{z'', y\})_{z''}$, both loops 
$Z\xrightarrow{a_1} Z$ and $Z\xrightarrow{a_3} Y\xrightarrow{d_1} Z$ pass through $Z$ and each path in $\gra(\{Z, Y\})$ from $Z$ stays in $\gra(\{Z, Y\})$ before returning to $Z$. 
%So $\theta(\{z'',y\}_{z''})$ is a loop chart.  
Similarly, another minimum image $\gra(\{Z, X\})$ can also be eliminated because it is a loop sub-chart of $\CI$. 

\item Secondly, consider the image $\CI$ which has images $\gra(\{Z, Y\})$ and $\gra(\{Z, X\})$ as its two proper sub-images.  
After $\gra(\{Z, Y\})$ and $\gra(\{Z, X\})$ are eliminated (in an order for example first $\gra(\{Z, Y\})$ and then $\gra(\{Z, X\})$), 
the remnant chart: chart $\CI''$ of Fig.~\ref{figure:An example of a non-LLEE LEE witness}, becomes a loop sub-chart and thus can be eliminated. 
This is because its only loop $X\xrightarrow{b_1} Z\xrightarrow{a_4} K\xrightarrow{d_2} X$, constrained by the well-structured pre-image $\gra(\{x, z'', k, y\})_x$, passes through
$X$ and each path from $X$ stays in $\CI''$ before returning to $X$. 
%Therefore image $\theta(\{x, z'', k, y\}_{x})$ is a loop sub-chart of $\CI$.
%After eliminating $\CI''$, there is no loops left in $\CI''$.
Therefore, the image $\CI$ is an LEE sub-chart (of itself). 
\end{itemize}

Since $\mcl{I}_\theta(\CI)$ covers all the loops of $\CI$, 
after eliminating $\CI''$, we obtain the chart $\CI^{(3)}$ of Fig.~\ref{figure:An example of a non-LLEE LEE witness} where there is no loops. 
Therefore, the whole chart $\CI$ is an LEE chart.

Following the main idea, below we give our formal proofs. 

Prop.~\ref{prop:loop correspondence} indicates a correspondence between the loops 
of an LLEE chart $\mcl{G}$ and the loops of its bisimulation collapse $\mcl{H}$ given a bisimulation function $\theta: V_{\hat{\mcl{G}}/\mcl{G}}\to V_\mcl{H}$. 

\begin{proposition}
	\label{prop:loop correspondence}
	Given a bisimulation function $\theta: V_{\hat{\mcl{G}}/\mcl{G}}\to V_\mcl{H}$ from an LLEE chart $\mcl{G}$ to its bisimulation collapse $\mcl{H}$ and 
	a loop $\mcl{A}$ of $\mcl{H}$, 
	from a node $x$ of $\mcl{G}$ such that $\theta(x)$ is in $\mcl{A}$, there exists a path $\mcl{S}$ reaching a loop $\mcl{D}$ in $\mcl{G}$ satisfying that $\mcl{A} \eqvc \theta(\mcl{S}\unionc \mcl{D}) \eqvc \theta(\mcl{D})$. 
\end{proposition}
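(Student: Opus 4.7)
The plan is to lift the loop $\mcl{A}$ from $\mcl{H}$ back to $\mcl{G}$ by iterated application of the back-and-forth property of $\theta$, and then extract a cycle in $\mcl{G}$ by finiteness. I will fix a closed-walk representation $Y_0 \xrightarrow{a_0} Y_1 \xrightarrow{a_1} \cdots \xrightarrow{a_{\ell-1}} Y_\ell = Y_0$ of $\mcl{A}$ of length $\ell$, rotated so that $Y_0 = \theta(x)$. Setting $x_0 \dddef x$, I will build an infinite sequence $x_0, x_1, x_2, \ldots$ of nodes in $\mcl{G}$ together with transitions $x_i \xrightarrow{a_{i \bmod \ell}} x_{i+1}$, maintaining the invariant $\theta(x_i) = Y_{i \bmod \ell}$. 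The inductive step is a direct application of clause~(2) of the bisimulation definition: since $\theta(x_i) = Y_{i \bmod \ell}$ carries a transition to $Y_{(i+1) \bmod \ell}$ in $\mcl{H}$, there is an $x_{i+1}$ in $\mcl{G}$ with $x_i \xrightarrow{a_{i \bmod \ell}} x_{i+1}$ and $\theta(x_{i+1}) = Y_{(i+1) \bmod \ell}$ (using that $\theta$ is a function).

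Next, finiteness of $V_\mcl{G}$ applied to the infinite subsequence $x_0, x_\ell, x_{2\ell}, \ldots$ gives indices $p < q$ with $x_{p\ell} = x_{q\ell}$. I then define $\mcl{S}$ as the initial path $x_0 \xrightarrow{} x_1 \xrightarrow{} \cdots \xrightarrow{} x_{p\ell}$ and $\mcl{D}$ as the cycle $x_{p\ell} \xrightarrow{} x_{p\ell+1} \xrightarrow{} \cdots \xrightarrow{} x_{q\ell} = x_{p\ell}$. By construction, $\mcl{S}$ is a path from $x$ reaching the start node $x_{p\ell}$ of the loop $\mcl{D}$ in $\mcl{G}$. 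Because $\mcl{D}$ has length $(q-p)\ell$, a positive multiple of $\ell$, its $\theta$-image traces the closed walk $\mcl{A}$ exactly $q-p$ full times; hence $\theta(V_\mcl{D}) \supseteq V_\mcl{A}$, and the reverse inclusion is immediate from the invariant. Since every $\theta(x_i)$ lies in $V_\mcl{A}$, the same holds for $\theta(V_\mcl{S})$, yielding $\theta(V_\mcl{S} \cup V_\mcl{D}) = V_\mcl{A}$. Unfolding the definition of $\gra_\mcl{H}(\cdot)$ then gives $\theta(\mcl{S} \unionc \mcl{D}) \eqvc \gra_\mcl{H}(V_\mcl{A}) \eqvc \theta(\mcl{D})$, both of which coincide with $\mcl{A}$ under the paper's convention of identifying the loop with the chart of its node set.

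The main delicate point I anticipate is not the existence of some lifted loop in $\mcl{G}$ — that is guaranteed by finiteness alone — but the stronger equality $\theta(\mcl{D}) \eqvc \mcl{A}$. A carelessly chosen repetition $x_m = x_n$ could yield a $\mcl{D}$ whose $\theta$-image falls short of $V_\mcl{A}$. Restricting the pigeonhole argument to the subsequence $x_0, x_\ell, x_{2\ell}, \ldots$ forces the length of $\mcl{D}$ to be a positive multiple of $\ell$, which is precisely what makes the $\theta$-image wrap fully around $\mcl{A}$. I expect that no use of the LLEE witness $\hat{\mcl{G}}$ will be required here; the whole argument rests only on finiteness of $V_\mcl{G}$ and on the bisimulation property of $\theta$, with the LLEE structure entering only in later results such as Lemma~\ref{lemma:Image Reflection on LLEE Charts --- critical condition} and Theorem~\ref{theo:restoration for LLEE charts}.
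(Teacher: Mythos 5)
Your proposal is correct and follows essentially the same route as the paper's proof: both lift the loop by repeatedly applying the back-and-forth property of $\theta$, and both apply the pigeonhole argument only to the nodes mapped to the start node of $\mcl{A}$ (your $x_0, x_\ell, x_{2\ell},\ldots$ versus the paper's $x^{(1)}, x^{(2)},\ldots$), which is exactly what guarantees $\theta(\mcl{D})\eqvc\mcl{A}$. Your observation that the LLEE witness plays no role here is also consistent with the paper's argument.
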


\begin{proof}
    Assume loop $\mcl{A}$ is of the form: $X \xrightarrow{a_1}X_1\xrightarrow{a_2}...\xrightarrow{a_n}X_n\xrightarrow{a_{n+1}}X$ ($n \ge 0$), with $X \eqvn \theta(x)$. 
	From $x$, by the bisimulation relation $\theta$ between $\mcl{G}$ and $\mcl{H}$, there is a path of the form:
	$x \xrightarrow{a_1} x_{1}\xrightarrow{a_2}...\xrightarrow{a_n}x_{n}\xrightarrow{a_{n+1}} x^{(1)}\xrightarrow{a_1}...
	\xrightarrow{a_{n+1}} x^{(k)}\xrightarrow{a_1} x^{(k)}_{1}\xrightarrow{a_2}...\xrightarrow{a_n}x^{(k)}_{n}\xrightarrow{a_{n+1}}
	x^{(k+1)}\xrightarrow{a_1}...$, 
 where $X_i \eqvn \theta(x^{(k)}_{i})$ for all $k \ge 1$ and $1\le i\le n$, 
 and $X \eqvn \theta(x^{(k)})$ for all $k\ge 1$. 
	Since $\mcl{G}$ is finite, we can always find the node $x^{(m)}$ with the smallest $m > 0$ such that $x^{(m)}\eqvn x^{(j)}$ for some $1\le j < m$. 
	So from $x$ there is a path $\mcl{S}$: $x\xrightarrow{a_1}x_{1}\xrightarrow{a_2}...\xrightarrow{a_n}x_{n}\xrightarrow{a_{n+1}} x^{(1)}\xrightarrow{a_1}...\xrightarrow{a_{n+1}} x^{(j)}$ reaching the loop $\mcl{D}$: $x^{(j)} \xrightarrow{a_1}...\xrightarrow{a_{n+1}} x^{(m)} \eqvn x^{(j)}$. 
 And since $V_\mcl{A} = \theta(V_\mcl{S}\cup V_\mcl{D}) = \theta(V_\mcl{D})$, 
 clearly $\mcl{A} \eqvc \theta(\mcl{S}\unionc \mcl{D}) \eqvc \theta(\mcl{D})$. 
 \qed
\end{proof}

\begin{lemma}
\label{lemma:Image Reflection on LLEE Charts --- critical condition}
    Given the set of images $\mcl{I}_\theta(\mcl{H})$ w.r.t. 
    a bisimulation function $\theta: V_{\hat{\mcl{G}}/\mcl{G}}\to V_\mcl{H}$ from an LLEE chart $\hat{\mcl{G}}/\mcl{G}$ to its bisimulation collapse $\mcl{H}$,  
    the following conditions hold:
    \begin{enumerate}[(1)]
        \item 
        \label{item:lemma:Image Reflection on LLEE Charts --- critical condition - coverage}
        For each loop $\mcl{C}$ of $\mcl{H}$, there exists an image $\mcl{I}$ of $\mcl{I}_\theta(\mcl{H})$ such that $\mcl{C}\subc \mcl{I}$;
        
        \item 
        \label{item:Image Reflection on LLEE Charts --- critical condition - constrained behaviour}
        For each image $\theta(\LC_x)$ of $\mcl{I}_\theta(\mcl{H})$ with a well-structured pre-image $\LC_x$, 
        %if $\mcl{I}\eqvc \theta(\LC_x)$ and $\theta(\LC_x)$ is well structured for some $\LC_x$,
        in $\theta(\LC_x)$
        each loop either passes through node $\theta(x)$ or is in a proper sub-image of $\theta(\LC_x)$. 
        
        \item 
        \label{item:lemma:Image Reflection on LLEE Charts --- critical condition - path}
        For each image $\theta(\LC_x)$ of $\mcl{I}_\theta(\mcl{H})$ with a pre-image $\LC_x$, 
        there exists no transition $U\xrightarrow{}\surd$ such that $U\in V_{\theta(\LC_x)}\setminus\{\theta(x)\}$, and no transition $U\xrightarrow{} W$ satisfying that $U\in V_{\theta(\LC_x)}\setminus\{\theta(x)\}$ and $W\notin V_{\theta(\LC_x)}$. 
        %each path starting from $\theta(x)$ stays in $\theta(\LC_x)$ before returning to $\theta(x)$. 
        
        %in $\theta(\LC_x)$
        %there exists no transition $U\xrightarrow{a} W$ such that 
        %$U\in V_{\theta(\LC_x)}/\{\theta(x)\}$ but $W\notin V_{\theta(\LC_x)}$.  
    \end{enumerate}
\end{lemma}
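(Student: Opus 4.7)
The plan is to treat the three items independently; (3) and (1) are short consequences of earlier results, while (2) carries the real content. I would prove (3) first, then (1), and finally (2).

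For (3), both assertions reduce to lifting a transition in $\mcl{H}$ back to $\mcl{G}$ via the bisimulation $\theta$ and invoking Prop.~\ref{prop:looping-back chart}. If $U\in V_{\theta(\LC_x)}\setminus\{\theta(x)\}$, then $U=\theta(u)$ for some $u\in V_{\LC_x}$, and $U\neq\theta(x)$ forces $u\neq x$, so $u$ is in the body of $\LC_x$. A transition $U\xrightarrow{a}\surd$ lifts by the bisimulation clauses to $u\xrightarrow{a}\surd$, contradicting Prop.~\ref{prop:looping-back chart}(\ref{item: looping-back property - termination}); a transition $U\xrightarrow{a} W$ with $W\notin V_{\theta(\LC_x)}$ lifts to $u\xrightarrow{a} w$ with $w\notin V_{\LC_x}$ (otherwise $\theta(w)=W$ would lie in $V_{\theta(\LC_x)}$), contradicting Prop.~\ref{prop:looping-back chart}(\ref{item: looping-back property - path}).

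For (1), I would pick any node of $\mcl{C}$, lift it to $\mcl{G}$ via $\theta$, and apply Prop.~\ref{prop:loop correspondence} to obtain a loop $\mcl{D}$ of $\mcl{G}$ with $\theta(\mcl{D})\eqvc\mcl{C}$. The LLEE property of $\mcl{G}$ — namely, that $\mcl{D}$ is first broken during the elimination by removing a loop-entry transition of some loop sub-chart $\mcl{A}_y$, and this $\mcl{A}_y$ then contains $\mcl{D}$ and is itself contained in $\LC_y$ — yields $\mcl{D}\subc\LC_y$ for some $y$. By Prop.~\ref{prop:monotonic of bisimlation function} we conclude $\mcl{C}\eqvc\theta(\mcl{D})\subc\theta(\LC_y)\in\mcl{I}_\theta(\mcl{H})$.

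For (2) — the substantive step — suppose $\mcl{C}$ is a loop of $\theta(\LC_x)$ that does not pass through $\theta(x)$. Pick any node $U=\theta(u)$ of $\mcl{C}$ with $u\in V_{\LC_x}$; since $\mcl{H}$ is the bisimulation collapse of $\mcl{G}$, $U\neq\theta(x)$ forces $u$ not to be bisimilar to $x$, hence in particular $u\neq x$, so $u$ lies in the body of $\LC_x$. By Prop.~\ref{prop:loop correspondence}, from $u$ there is a path $\mcl{S}$ reaching a loop $\mcl{D}$ of $\mcl{G}$ with $\theta(\mcl{S}\unionc\mcl{D})\eqvc\theta(\mcl{D})\eqvc\mcl{C}$. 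Because every node on $\mcl{S}\unionc\mcl{D}$ maps into $\mcl{C}$ and $\mcl{C}$ avoids $\theta(x)$, no node on this path equals $x$; combined with Prop.~\ref{prop:looping-back chart}(\ref{item: looping-back property - path}), which forbids escape from the body, the entire path stays inside $V_{\LC_x}\setminus\{x\}$, so $\mcl{D}$ lies in the body of $\LC_x$. Applying the LLEE structure of $\mcl{G}$ again, $\mcl{D}\subc\LC_w$ for some $w$ on $\mcl{D}$, whence $w$ lies in the body of $\LC_x$, and Prop.~\ref{prop:looping-back chart}(\ref{item: looping-back property - looping-back sub-chart}) gives $\LC_w\psubc\LC_x$. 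Finally, the well-structuredness of $\LC_x$ (Prop.~\ref{prop:well-structured Image}) promotes this to $\theta(\LC_w)\psubc\theta(\LC_x)$, yielding the required proper sub-image that contains $\mcl{C}$.

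The hard part will be (2): the passage from "$\mcl{C}$ avoids $\theta(x)$" in $\mcl{H}$ to "its pullback path avoids $x$" in $\mcl{G}$ is where the bisimulation-collapse hypothesis is genuinely used (a mere bisimilar chart would not suffice), and the strict inclusion $\theta(\LC_w)\psubc\theta(\LC_x)$, rather than bare $\subc$, is precisely what forces the use of well-structuredness, which is the whole point of isolating that notion in Prop.~\ref{prop:well-structured Image}.
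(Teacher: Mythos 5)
Your proposal is correct and follows essentially the same route as the paper's proof: part (1) via Prop.~\ref{prop:loop correspondence} plus monotonicity, part (2) by pulling the loop back through $\theta$, trapping it in the body of $\LC_x$ via Prop.~\ref{prop:looping-back chart}, and then invoking well-structuredness to get the \emph{proper} sub-image, and part (3) by lifting the offending transition and contradicting Prop.~\ref{prop:looping-back chart}(\ref{item: looping-back property - path})/(\ref{item: looping-back property - termination}). The only quibble is your closing remark: the step ``$\mcl{C}$ avoids $\theta(x)$ implies the pullback avoids $x$'' needs only that $\theta$ is a function, not that $\mcl{H}$ is the bisimulation collapse.
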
 

Lemma~\ref{lemma:Image Reflection on LLEE Charts --- critical condition} (\ref{item:lemma:Image Reflection on LLEE Charts --- critical condition - coverage}) states the coverage of the loops of $\mcl{H}$ by $\mcl{I}_\theta(\mcl{H})$. 
Lemma~\ref{lemma:Image Reflection on LLEE Charts --- critical condition} (\ref{item:Image Reflection on LLEE Charts --- critical condition - constrained behaviour}) and (\ref{item:lemma:Image Reflection on LLEE Charts --- critical condition - path}) describe the constrained relation between the behaviour of an image and that of any of its well-structured pre-images. 
Lemma~\ref{lemma:Image Reflection on LLEE Charts --- critical condition} (\ref{item:lemma:Image Reflection on LLEE Charts --- critical condition - path}) intuitively means that each path starting from node $\theta(x)$ stays in $\theta(\LC_x)$ before returning to $\theta(x)$. 
%Lemma~\ref{lemma:Image Reflection on LLEE Charts --- critical condition} (\ref{item:lemma:Image Reflection on LLEE Charts --- critical condition - path}) intuitively means that each path starting from node $\theta(x)$ stays in $\theta(\LC_x)$ before returning to $\theta(x)$. 

\begin{proof}[Proof of Lemma~\ref{lemma:Image Reflection on LLEE Charts --- critical condition}]

(Proof of (1)).  
For each loop $\mcl{C}$ of $\mcl{H}$, by Prop.~\ref{prop:loop correspondence} there is a loop $\mcl{D}$ in $\mcl{G}$ such that $\mcl{C} \eqvc \theta(\mcl{D})$. 
Since $\mcl{G}$ is an LLEE chart, $\mcl{D}$ begins with a loop-entry transition in the form: $x\xrightarrow{}_{[m]}\cdot\xrightarrow{}^* x$, $m > 0$.  
%loop sub-chart $\mcl{A}$ with a start node $x$ in $\mcl{D}$. 
By Def.~\ref{def:Looping-back Chart}, $\mcl{D}$ is a sub-chart of
the looping-back chart $\LC_x$. 
%looping-back chart $\LC_x$ with the initial node $x$ being a node in $\mcl{D}$. 
%Let $X = \theta(x)$. 
By Prop.~\ref{prop:monotonic of bisimlation function}, $\mcl{C} \eqvc \theta(\mcl{D})\subc \theta(\LC_x)$. 

%o prove (2), 
%we proceed by induction on the sub-image relation on $\mcl{I}_\theta(\mcl{H})$.  

%(Proof of (2)).
\ifx
\textit{Base case:} 
$\theta(\LC_x)$ is a minimum image $\mcl{I}$ of $\mcl{I}_\theta(\mcl{H})$.
%By Prop.~\ref{def:well-structured Image}, let 
%$\mcl{I}\eqvc \theta(\LC_r)$ for some $\LC_r$, and $\theta(\LC_r)$ is well structured for $\LC_r$.  
Assume there is a loop $\mcl{C}$ of $\theta(\LC_x)$ that does not pass $\theta(x)$. 
By Prop.~\ref{prop:loop correspondence}, starting from a node $v$ in $\LC_x$ such that $\theta(v)$ is in $\mcl{C}$, there is a path $\mcl{S}$ reaching a loop $\mcl{D}$ such that $\mcl{C} \eqvc \theta(\mcl{S}\unionc \mcl{D})$ (note that this loop $\mcl{D}$ does not have to be a sub-chart of $\LC_x$). 
Since $\mcl{C}$ does not pass $\theta(x)$,  
chart $\mcl{S}\unionc \mcl{D}$ does not pass $x$. 
So by Prop.~\ref{prop:looping-back chart} (\ref{item: looping-back property - path}), loop $\mcl{D}$ is a sub-chart of $\LC_x$ and by Prop.~\ref{prop:looping-back chart} (\ref{item: looping-back property - loop})
is a sub-chart of a looping-back chart $\LC_k$ of $\LC_x$.
%And $\mcl{D}$ passes $k$. 
%But since $\mcl{H}$ is a bisimulation collapse, $\mcl{C}\subseteq \theta(\LC_k)$.
Since $\LC_k\psubc \LC_x$ (by Prop.~\ref{prop:looping-back chart} (\ref{item: looping-back property - looping-back sub-chart})) and $\theta(\LC_x)$ is well structured for $\LC_x$, it must be $\theta(\LC_k)\psubc \theta(\LC_x)$ by Prop.~\ref{prop:well-structured Image}. 
\ifx
By Prop.~\ref{def:well-structured Image} there exists a well-structured image $\theta(\LC_{k'})$ of $\mcl{I}_\theta(\mcl{H})$ such that $\theta(\LC_{k'}) = \theta(\LC_k)$. 
So $\theta(\LC_{k'})\subset \theta(\LC_r)$, 
\fi
This violates that $\theta(\LC_x)$ is a minimum image of $\mcl{I}_\theta(\mcl{H})$. 
\fi

(Proof of (2)).
%\textit{Step case:} 
%$\theta(\LC_x)$ is an arbitrary image of $\mcl{I}_\theta(\mcl{H})$.
%By Prop.~\ref{def:well-structured Image}, let 
%$\mcl{I}\eqvc \theta(\LC_s)$ for some $\LC_s$, and $\theta(\LC_s)$ is well structured for $\LC_s$.  
Let $\mcl{C}$ be a loop of $\theta(\LC_x)$.  
%that neither passes $\theta(x)$ nor is in any proper sub-image of $\theta(\LC_x)$. 
%Then in $\mcl{C}$ there exists a node, namely $X$, which is in the image $\theta(\LC_s)$ but not in any sub-images of $\theta(\LC_s)$.  
By Prop.~\ref{prop:loop correspondence}, starting from a node $y$ in $\LC_x$ such that $\theta(y)$ is in $\mcl{C}$, there is a path $\mcl{S}$ reaching a loop $\mcl{D}$ such that $\mcl{C} \eqvc \theta(\mcl{S}\unionc \mcl{D})$ (note that this loop $\mcl{D}$ does not have to be a sub-chart of $\LC_x$). 
If $\mcl{C}$ does not pass $\theta(x)$,  
then chart $\mcl{S}\unionc \mcl{D}$ does not pass $x$, and $y\in V_{\LC_x}\setminus\{x\}$.  
By Prop.~\ref{prop:looping-back chart} (\ref{item: looping-back property - path}) and (\ref{item: looping-back property - termination}), since $y\in V_{\LC_x}\setminus\{x\}$, loop $\mcl{D}$ stays in $\LC_x$ and is thus a sub-chart of $\LC_x$. 
By that $\mcl{G}$ is an LLEE chart, 
loop $\mcl{D}$ starts from a loop-entry transition $k\xrightarrow{}_{[m]} \cdot$ with $m > 0$.
By Def.~\ref{def:Looping-back Chart}, it is not hard to see that $\mcl{D}$ must be a sub-chart of the looping-back chart $\LC_k$. 
So $\mcl{D}\subc \LC_k\subc \LC_x$. 
Since $k\not\!\!\eqvn x$, by Prop.~\ref{prop:looping-back chart} (\ref{item: looping-back property - looping-back sub-chart}), 
$\LC_k\psubc \LC_x$. 
According to Prop.~\ref{prop:well-structured Image} and that $\LC_x$ is well structured, $\theta(\LC_k)\psubc \theta(\LC_x)$. 
\ifx
And by Prop.~\ref{prop:looping-back chart} (\ref{item: looping-back property - loop}), 
$\mcl{D}$ is a sub-chart of the looping-back chart $\LC_k$ of $\LC_x$.
%And $\mcl{D}$ passes $k$. 
%But since $\mcl{H}$ is a bisimulation collapse, $\mcl{C}\subseteq \theta(\LC_k)$.
Since $\LC_k\psubc \LC_x$ (by Prop.~\ref{prop:looping-back chart} (\ref{item: looping-back property - looping-back sub-chart})) and $\theta(\LC_x)$ is well structured for $\LC_x$, it must be $\theta(\LC_k)\psubc \theta(\LC_x)$ by Prop.~\ref{prop:well-structured Image}. 
\fi
\ifx
Similarly as in the base case, it is easy to see that chart $\mcl{S}\unionc\mcl{D}$ does not pass $x$ and 
so $\mcl{D}$ is in some looping-back sub-chart $\LC_k$ of $\LC_x$ (by Prop.~\ref{prop:looping-back chart} (\ref{item: looping-back property - path}) (\ref{item: looping-back property - loop})), moreover, with $\LC_k\psubc \LC_x$ (by Prop.~\ref{prop:looping-back chart} (\ref{item: looping-back property - looping-back sub-chart})). 
Therefore, $\mcl{C} \eqvc \theta(\mcl{S}\unionc \mcl{D}) \eqvc \theta(\mcl{D}) \subc\theta(\LC_k)\subc \theta(\LC_x)$ (by Prop.~\ref{prop:loop correspondence} and~\ref{prop:monotonic of bisimlation function}).  
By that $\theta(\LC_x)$ is well structured for $\LC_x$ and Prop.~\ref{prop:well-structured Image}, 
$\theta(\LC_k)\psubc \theta(\LC_x)$.  
\fi
\ifx
And by Prop.~\ref{def:well-structured Image} there exists a well-structured image $\theta(\LC_{k'})$ of $\mcl{I}_\theta(\mcl{H})$ such that $\theta(\LC_{k'}) = \theta(\LC_k)$. 
So $\mcl{C}\subseteq \theta(\LC_{k'})\subset \theta(\LC_s)$, violating the assumption about $\mcl{C}$. 
\fi
So $\mcl{C}\eqvc \theta(\mcl{S}\unionc \mcl{D})\eqvc \theta(\mcl{D})$ (by Prop.~\ref{prop:loop correspondence}) 
$\subc \theta(\LC_k)$ (by Prop.~\ref{prop:monotonic of bisimlation function})
$\psubc \theta(\LC_x)$. 
In other words, $\mcl{C}$ is in the proper sub-image $\theta(\LC_k)$ of $\theta(\LC_x)$. 
%But this violates the assumption about $\mcl{C}$. 
%and $\theta(\LC_s)\eo \theta(\LC_{k'})$ in $\reg(\mcl{H})$.  
%Hence $\mcl{C}$ is in the sub-region $\reg(\theta(\LC_{k'}))$ of $\reg(\theta(\LC_s))$, violating the assumption. 

(Proof of (3)).
%For an arbitrary image $\mcl{I}$ of $\mcl{I}_\theta(\mcl{H})$, 
%let $\mcl{I}\eqvc \theta(\LC_r)$ for some $\LC_r$. 
We prove by contradiction. 
Assume there exists a transition $U\xrightarrow{a}\surd$ such that 
$U\in V_{\theta(\LC_x)}\setminus\{\theta(x)\}$, or a transition $U\xrightarrow{a} W$ such that 
$U\in V_{\theta(\LC_x)}\setminus\{\theta(x)\}$ but $W\notin V_{\theta(\LC_x)}$. 
Let $u$ be the node in $\LC_x$ such that $\theta(u) \eqvn U$. 
Since $U\not\equiv \theta(x)$, $u\not\equiv x$. So $u\in V_{\LC_x}\setminus\{x\}$. 
For the case of $U\xrightarrow{a}\surd$, 
by Prop.~\ref{prop:looping-back chart} (\ref{item: looping-back property - termination}), $u\not\xrightarrow{a} \surd$. So this causes contradiction since $u\sim U$.
For the case of $U\xrightarrow{a} W$, 
there is a transition $u\xrightarrow{a} w$ such that $\theta(w) \eqvn W$ and $w\notin V_{\LC_x}$ (otherwise $W\in V_{\theta(\LC_x)}$). 
%Note that since $U\not\equiv \theta(x)$, $u\not\equiv x$. So $u\in V_{\LC_x}/\{x\}$. 
But the fact that $u\in V_{\LC_x}\setminus\{x\}$ and $w\notin V_{\LC_x}$ violates Prop.~\ref{prop:looping-back chart} (\ref{item: looping-back property - path}). 
%It violates that $\mcl{G}$ is an LLEE chart ((L2) does not hold for $\LC_r$). 
Therefore, transition $U\xrightarrow{a} W$ does not exist. 
\qed

\ifx
\textit{Base case:} 
Consider a minimum image $\theta(\LC_r)$ of $\mcl{I}_\theta(\mcl{H})$. 
%By the `minimum', 
%$\reg(\theta(\LC_r))$ is actually the image $\theta(\LC_r)$. 
If there exists a transition $U\xrightarrow{a} V$ such that 
$U\in \theta^{\ip}(\LC_r)$ but $V\notin \theta(\LC_r)$ (which violates (3)), 
then there is a node $u$ in $\LC_r$ such that $\theta(u) = U$. 
So there is a transition $u\xrightarrow{a} v$ such that $\theta(v) = V$ and $v\notin \LC_r$. 
Note that since $u\not\equiv r$, it violates that $\mcl{G}$ is an LLEE chart ((L2) does not hold for $\LC_r$). 
Therefore, such a transition $U\xrightarrow{a} V$ does not exist. 

\textit{Step case:} 
For an arbitrary image $\theta(\LC_s))$ of $\mcl{I}_\theta(\mcl{H})$, 
suppose there is a transition $U\xrightarrow{a} V$ in $\theta(\LC_s)$ where $U\in \theta^\ip(\LC_s)$ but $V\notin \theta(\LC_s)$.  
By inductive hypothesis, $U\notin \theta^\ip(\LC_r)$ for  any sub-image $\theta(\LC_r)$ of image $\theta(\LC_s)$.  
%Hence $U\in \theta^{\ip}(\LC_s)$, according to Def.~\ref{def:Region}, \ref{def:well-structured Region}. 
Similar to the base case above, we can prove that transition $U\xrightarrow{a} V$ causes contradiction to the fact that $\LC_s$ is a loop sub-chart of $\mcl{G}$. 
\fi
\end{proof}

\begin{theorem}[Stage (II')]
\label{theo:restoration for LLEE charts}
    For every bisimulation function $\theta: V_{\hat{\mcl{G}}/\mcl{G}}\to V_\mcl{H}$ from an LLEE chart $\hat{\mcl{G}}/\mcl{G}$ to a bisimulation collapse $\mcl{H}$, $\mcl{H}$ is an LEE chart. 
\end{theorem}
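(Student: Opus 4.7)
\textbf{Proof proposal for Theorem~\ref{theo:restoration for LLEE charts}.}
The plan is to exhibit an explicit elimination process on $\mcl{H}$ witnessing the LEE property, driven by the set $\mcl{I}_\theta(\mcl{H})$ ordered by the proper sub-image relation $\psubc$. First I would note that $\psubc$ is a strict partial order on the finite set $\mcl{I}_\theta(\mcl{H})$, so it can be linearized into a sequence $\mcl{I}_1,\mcl{I}_2,\ldots,\mcl{I}_N$ in which every proper sub-image of $\mcl{I}_j$ occurs before $\mcl{I}_j$. For each $\mcl{I}_i$ I fix, via Prop.~\ref{prop:well-structured Image}, a well-structured pre-image $\LC_{x_i}$ so that $\mcl{I}_i \eqvc \theta(\LC_{x_i})$. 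The elimination process processes $\mcl{I}_1,\ldots,\mcl{I}_N$ in this order: when the turn of $\mcl{I}_i$ comes, I look at the sub-chart of the current remnant of $\mcl{H}$ consisting of the surviving nodes and transitions of $\mcl{I}_i$; if it contains no loop I skip, otherwise I treat it as a loop sub-chart with start $\theta(x_i)$ and remove all its loop-entry transitions from $\theta(x_i)$ (and any resulting unreachable material).

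The central claim driving the induction is that when $\mcl{I}_i$ is reached, the remnant of $\mcl{I}_i$ satisfies (L1)--(L3) whenever it still contains a loop. Items~(L3) (absence of $\surd$ inside the body) and the closure property that no transition leaves the image from a non-start node come directly from Lemma~\ref{lemma:Image Reflection on LLEE Charts --- critical condition}(\ref{item:lemma:Image Reflection on LLEE Charts --- critical condition - path}); both survive restriction since removing transitions cannot introduce a terminal or an escaping edge. For (L2) I invoke Lemma~\ref{lemma:Image Reflection on LLEE Charts --- critical condition}(\ref{item:Image Reflection on LLEE Charts --- critical condition - constrained behaviour}): any loop in the remnant of $\mcl{I}_i$ was already a loop of the original $\mcl{I}_i$, hence it either passes through $\theta(x_i)$ or sits in a proper sub-image $\mcl{I}_j\psubc\mcl{I}_i$, but in the latter case a secondary minimality argument (choose a minimal $\mcl{I}_j$ containing the loop) together with Lemma~\ref{lemma:Image Reflection on LLEE Charts --- critical condition}(\ref{item:Image Reflection on LLEE Charts --- critical condition - constrained behaviour}) applied to $\mcl{I}_j$ shows the loop passed through $\theta(x_j)$ and was already broken when $\mcl{I}_j$ was eliminated. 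Thus every surviving loop of $\mcl{I}_i$ passes through $\theta(x_i)$, yielding (L2) and, if the remnant has any loop at all, (L1).

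After all images have been processed, I claim no loops remain in $\mcl{H}$. Here I invoke Lemma~\ref{lemma:Image Reflection on LLEE Charts --- critical condition}(\ref{item:lemma:Image Reflection on LLEE Charts --- critical condition - coverage}): every loop $\mcl{C}$ of $\mcl{H}$ lies in some image $\mcl{I}$. Taking a $\psubc$-minimal such $\mcl{I} = \mcl{I}_i$ and applying Lemma~\ref{lemma:Image Reflection on LLEE Charts --- critical condition}(\ref{item:Image Reflection on LLEE Charts --- critical condition - constrained behaviour}), $\mcl{C}$ must pass through $\theta(x_i)$; therefore either $\mcl{C}$ was already broken by a preceding elimination (if one of its transitions belongs to a proper sub-image of $\mcl{I}_i$), or it was destroyed at step $i$ when the loop-entry transitions from $\theta(x_i)$ were removed. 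Either way $\mcl{C}$ does not survive, and since $\mcl{H}$ is finite, no infinite paths remain. This establishes that $\mcl{H}$ is an LEE chart.

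The main obstacle I anticipate is the bookkeeping of overlapping but $\psubc$-incomparable images, which can share nodes and even transitions. I must guarantee that eliminating one such image never breaks the loop-sub-chart structure required for a subsequent one. The key technical lemma to make this bookkeeping painless is a monotonicity observation: because an elimination step only removes transitions (and then only unreachable material), items~(\ref{item:Image Reflection on LLEE Charts --- critical condition - constrained behaviour}) and~(\ref{item:lemma:Image Reflection on LLEE Charts --- critical condition - path}) of Lemma~\ref{lemma:Image Reflection on LLEE Charts --- critical condition}, which are stated on the original chart, transfer verbatim to the remnant. Once this invariant is established, verifying (L1)--(L3) at each step reduces to the direct applications above, and the ordering of images along $\psubc$ prevents any destructive interaction between incomparable images.
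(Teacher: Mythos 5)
Your proposal is correct and follows essentially the same route as the paper: induct (or, equivalently, linearize) along the proper sub-image order, use Prop.~\ref{prop:well-structured Image} to fix a well-structured pre-image $\LC_{x}$ for each image, verify (L1)--(L3) for the remnant of each image via Lemma~\ref{lemma:Image Reflection on LLEE Charts --- critical condition}(\ref{item:Image Reflection on LLEE Charts --- critical condition - constrained behaviour}) and (\ref{item:lemma:Image Reflection on LLEE Charts --- critical condition - path}), and conclude with the coverage condition (\ref{item:lemma:Image Reflection on LLEE Charts --- critical condition - coverage}). Your explicit handling of loops lying in proper sub-images (the secondary minimality argument) and of overlapping $\psubc$-incomparable images makes precise two points the paper leaves implicit, but the underlying argument is the same.
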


\begin{proof}
\ifx
We build a structure $(\mcl{I}_\theta(\mcl{H}), \eo)$ according to Def.~\ref{def:Elimination Order}. 
$\eo$ can be obtained by stipulating an order for every two images $\mcl{I}_1$ and $\mcl{I}_2$ such that $\mcl{I}_1\cap \mcl{I}_2\neq \emptyset$ in an inductive way. 
Let $\reg(\mcl{H})$ be the set of well-structured regions on $(\mcl{I}_\theta(\mcl{H}), \eo)$. 
\fi

By the definition of LEE chart in Sect.~\ref{section:LEE-LLEE Charts}, 
we show that there is an elimination process on $\mcl{H}$ 
by which $\mcl{H}$ can be transformed into a chart $\mcl{H}'$ without infinite paths. 
It is sufficient to prove that each image of $\mcl{I}_\theta(\mcl{H})$ is an LEE sub-chart of $\mcl{H}$ 
by induction on the sub-image relations between images. %the image structure of $\mcl{I}_\theta(\mcl{H})$ indicated by the sub-image relations between images.
The elimination process on $\mcl{H}$ can then be naturally induced by this proof process. 

%This elimination process is implied by the following proof,  which shows that 
%each image of $\mcl{I}_\theta(\mcl{H})$ is an LEE sub-chart of $\mcl{H}$ 
%by structural induction on the sub-image relation on $\mcl{I}_\theta(\mcl{H})$. 

%induced by the image structure $\mcl{I}_\theta(\mcl{H})$. 
%We proceed in an inductive way based on the sub-image relation in $\mcl{I}_\theta(\mcl{H})$. 

\textit{Base case:}	
Consider a minimum image
$\mcl{I}$ of $\mcl{I}_\theta(\mcl{H})$ (in which there are no proper sub-images). 
By Prop.~\ref{prop:well-structured Image}, let $\mcl{I}\eqvc \theta(\LC_r)$ for some $\LC_r$ of its well-structured pre-images.
%and for $\LC_r$, $\theta(\LC_r)$ is well structured. 
%By the definitions of images and well-structured regions (Def.~\ref{def:Region}, \ref{def:well-structured Region}), $\reg(\theta(\LC_r))$ is actually $\theta(\LC_r)$. 
If $\theta(\LC_r)$ has at least one loop, 
we show that it is a loop sub-chart of $\mcl{H}$ (so is also an LEE sub-chart of $\mcl{H}$). 
%we show that it can be eliminated. 
%Actually, it is sufficient to show that $\theta(\LC_r)$ is a loop sub-chart of $\mcl{H}$. 
(L1) is obvious since $\theta(\LC_r)$ has a loop. 
By Lemma~\ref{lemma:Image Reflection on LLEE Charts --- critical condition} (2) and that $\theta(\LC_r)$ is a minimum image, each loop in $\theta(\LC_r)$ must pass through $\theta(r)$. 
By Lemma~\ref{lemma:Image Reflection on LLEE Charts --- critical condition} (3), 
each path in $\theta(\LC_r)$ starting from $\theta(r)$ cannot leave $\theta(\LC_r)$ before returning to $\theta(r)$. 
So, for each infinite path in $\theta(\LC_r)$ starting from $\theta(r)$, 
%since each loop of the path passes through $\theta(r)$, 
it can only return to $\theta(r)$ after a finite number of transitions. 
That is, (L2) holds.  
(L3) is direct from Prop.~\ref{prop:looping-back chart} (\ref{item: looping-back property - termination}) satisfied by $\LC_r$ and the bisimulation relation $\theta$ itself. 
%So, $\theta(\LC_r)$ can be eliminated as described in Sect.~\ref{section:LEE-LLEE Charts}, and thus can be eliminated. 

\textit{Step case:}
Consider an arbitrary image $\mcl{I}$ of $\mcl{I}_\theta(\mcl{H})$. 
By Prop.~\ref{prop:well-structured Image}, let $\mcl{I}\eqvc \theta(\LC_s)$ for some well-structured pre-image $\LC_s$. 
We show that $\theta(\LC_s)$ is an LEE sub-chart. 
By inductive hypothesis, each maximum proper sub-image $\mcl{I}'$ of $\theta(\LC_s)$ is an LEE sub-chart.
So there exists an elimination process to eliminate each $\mcl{I}'$.  
After eliminating all these sub-images, assume $\theta(\LC_s)$ is transformed into a chart $(\theta(\LC_s))'$. 
By Lemma~\ref{lemma:Image Reflection on LLEE Charts --- critical condition} (2), all loops remained in $(\theta(\LC_s))'$ must pass through $\theta(s)$. 
If $(\theta(\LC_s))'$ still has at least one loop, 
then similarly as in the base case above, we show that $(\theta(\LC_s))'$ is a loop sub-chart: 

(L1) is obvious because $(\theta(\LC_s))'$ has a loop. 
By Lemma~\ref{lemma:Image Reflection on LLEE Charts --- critical condition} (3), each path in $\theta(\LC_s)$ cannot leave $\theta(\LC_s)$ before returning to $\theta(s)$. 
So after the eliminations, each path in the remnant $(\theta(\LC_s))'$ still cannot leave $(\theta(\LC_s))'$ before returning to $\theta(s)$. 
By this and the fact that all loops in $(\theta(\LC_s))'$ pass through $\theta(s)$ as obtained above, 
it is easy to see that (L2) holds. 
(L3) is direct from Prop.~\ref{prop:looping-back chart} (\ref{item: looping-back property - termination}) satisfied by $\LC_s$ and the bisimulation relation $\theta$ itself. 

By that $(\theta(\LC_s))'$ is a loop sub-chart, it then can be eliminated by an elimination process.  
So the whole image $\theta(\LC_s)$ can be eliminated and is thus an LEE sub-chart. 
%by proving it satisfies laws (L1) - (L3). 
%Thus $\theta(\LC_s)$ can be eliminated. 

\textit{Conclusion:}
Since each image is an LEE sub-chart of $\mcl{H}$, 
as indicated in the inductive reasoning above, 
we obtain an elimination process that eliminates all images in $\mcl{I}_\theta(\mcl{H})$ in an inside-out manner from the minimum sub-images according to the sub-image relations. 
Let $\mcl{H}'$ be the chart $\mcl{H}$ after the elimination process. 
%It is as required. 
%is the chart as required after the elimination process.  
By Lemma~\ref{lemma:Image Reflection on LLEE Charts --- critical condition} (1), there exist no loops left in $\mcl{H}'$. 
So $\mcl{H}'$ is a chart without infinite paths. 
\qed
%After eliminating all images in $\mcl{I}_\theta(\mcl{H})$, 
%there exists no loops in $\mcl{H}$ by Lemma~\ref{lemma:Image Reflection on LLEE Charts --- critical condition} (1). 
\end{proof}

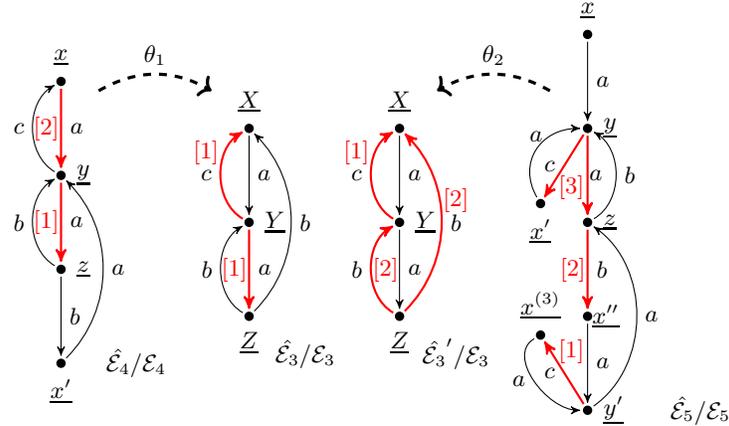
\begin{figure}[tbhp]
    \centering
    %\scalebox{1}{
    \noindent\makebox[\textwidth]{ %move the figure to the right of the whole page
    \begin{tabular}{c}
    \begin{tikzpicture}[->,>=stealth', node distance=1.7cm]
     %\node[state4,label=west:\mbox{$[p]\phi$}, fill=blue] (s1) {};
     %\coordinate (x) at (0,0);

     \coordinate (x) at (0,0);
     \coordinate (y) at ($(x) + (0,-1.25cm)$);
     \coordinate (z) at ($(y) + (0,-1.25cm)$);

     \coordinate (X1) at ($(x) + (-2.5cm,0.625cm)$);
     \coordinate (Y1) at ($(X1) + (0,-1.25cm)$);
     \coordinate (Z1) at ($(Y1) + (0,-1.25cm)$);
     \coordinate (X1i) at ($(Z1) + (0,-1.25cm)$);

     \coordinate (xi) at ($(x) + (2cm,0)$);
     \coordinate (yi) at ($(xi) + (0,-1.25cm)$);
     \coordinate (zi) at ($(yi) + (0,-1.25cm)$);
     
     \coordinate (X2) at ($(xi) + (2.5cm,1.25cm)$);
     \coordinate (Y2) at ($(X2) + (0,-1.25cm)$);
     \coordinate (X2i) at ($(Y2) + (-0.625cm, -1cm)$);
     \coordinate (Z2) at ($(Y2) + (0,-1.25cm)$);
     \coordinate (X2ii) at ($(Z2) + (0,-1.25cm)$);
     \coordinate (Y2i) at ($(X2ii) + (0,-1.25cm)$);
     \coordinate (X2iii) at ($(Y2i) + (-0.625cm,1cm)$);

    \node[circle,fill,inner sep=1.25pt, outer sep=1pt, label = north:$\underline{X}$] at (x) (nx) {};
    \node[circle,fill,inner sep=1.25pt, outer sep=1pt, label = east:$\underline{Y}$] at (y) (ny) {};
    \node[circle,fill,inner sep=1.25pt, outer sep=1pt, label = south:$\underline{Z}$] at (z) (nz) {};
      %\node[circle,fill,inner sep=1.25pt, outer sep=1pt, label = east:$\underline{v}$] at (v) (nv) {};
      %\node[circle,fill,inner sep=1.25pt, outer sep=1pt, label = north:$\underline{z'}$] at (z') (nz') {};
    \node[circle,fill,inner sep=1.25pt, outer sep=1pt, label = north:$\underline{x}$] at (X1) (nX1) {};
    \node[circle,fill,inner sep=1.25pt, outer sep=1pt, label = east:$\underline{y}$] at (Y1) (nY1) {};
    \node[circle,fill,inner sep=1.25pt, outer sep=1pt, label = east:$\underline{z}$] at (Z1) (nZ1) {};
    \node[circle,fill,inner sep=1.25pt, outer sep=1pt, label = south:$\underline{x'}$] at (X1i) (nX1i) {};

    \node[circle,fill,inner sep=1.25pt, outer sep=1pt, label = north:$\underline{X}$] at (xi) (nxi) {};
    \node[circle,fill,inner sep=1.25pt, outer sep=1pt, label = east:$\underline{Y}$] at (yi) (nyi) {};
    \node[circle,fill,inner sep=1.25pt, outer sep=1pt, label = south:$\underline{Z}$] at (zi) (nzi) {};
    
    \node[circle,fill,inner sep=1.25pt, outer sep=1pt, label = north:$\underline{x}$] at (X2) (nX2) {};
    \node[circle,fill,inner sep=1.25pt, outer sep=1pt, label = east:$\underline{y}$] at (Y2) (nY2) {}; 
    \node[circle,fill,inner sep=1.25pt, outer sep=1pt, label = south:$\underline{x'}$] at (X2i) (nX2i) {};
    \node[circle,fill,inner sep=1.25pt, outer sep=1pt, label = east:$\underline{z}$] at (Z2) (nZ2) {};
    \node[circle,fill,inner sep=1.25pt, outer sep=1pt, label = {[label distance=-0.15cm]east:$\underline{x''}$}] at (X2ii) (nX2ii) {};
    \node[circle,fill,inner sep=1.25pt, outer sep=1pt, label = east:$\underline{y'}$] at (Y2i) (nY2i) {};
    \node[circle,fill,inner sep=1.25pt, outer sep=1pt, label = north:$\underline{x^{(3)}}$] at (X2iii) (nX2iii) {};
    
     \path
     %(s1) edge [edge3] node [below,black] {$p$} (s2)
     (nx) edge [] node[right, yshift=0cm,black] {$a$} (ny)
     (ny) edge [red,thick] node[left,xshift=0.1cm] {\small{$[1]$}} node[right,black] {$a$} (nz)
     (ny) edge [out=150,in=-150,red,thick] node[left,yshift=0.3cm,xshift=0.1cm] {\small{$[1]$}} node[left,black] {$c$} (nx)
     (nz) edge [out=150,in=-150] node[left] {$b$} (ny)
     (nz) edge [out=45,in=-45] node[right] {$b$} (nx)

     (nX1) edge [red,thick] node[left,xshift=0.1cm] {\small{$[2]$}} node[right,black] {$a$} (nY1)
     (nY1) edge [red,thick] node[left,xshift=0.1cm] {\small{$[1]$}} node[right,black] {$a$} (nZ1)
     (nY1) edge [out=150,in=-150] node[left] {$c$} (nX1)
     (nZ1) edge [out=150,in=-150] node[left] {$b$} (nY1)
     (nZ1) edge [] node[right] {$b$} (nX1i)
     (nX1i) edge [out=45,in=-45] node[right] {$a$} (nY1)

     (nxi) edge [] node[right, yshift=0cm,black] {$a$} (nyi)
     (nyi) edge [] node[right,black] {$a$} (nzi)
     (nyi) edge [out=150,in=-150,red,thick] node[left,yshift=0.3cm,xshift=0.1cm] {\small{$[1]$}} node[left,black] {$c$} (nxi)
     (nzi) edge [out=150,in=-150,red,thick] node[right,xshift=-0.1cm] {\small{$[2]$}} node[left, black] {$b$} (nyi)
     (nzi) edge [out=45,in=-45,red,thick] node[right,yshift=0.3cm,xshift=-0.1cm] {\small{$[2]$}} node[right,black] {$b$} (nxi)
     
     (nX2) edge node[right] {$a$} (nY2)
     (nY2) edge[red,thick] node[below,xshift=0.1cm]{\small{$[3]$}} node[left,black] {$c$} (nX2i)
     (nY2) edge[red,thick] node[right,xshift=-0.1cm, black] {$a$} (nZ2)
     (nX2i) edge[out=120,in=180,looseness=1.25] node[above] {$a$} (nY2)
     (nZ2) edge[out=30,in=-30] node[right] {$b$} (nY2)
     (nZ2) edge[red,thick] node[left,xshift=0.1cm] {\small{$[2]$}} node[right,black] {$b$} (nX2ii)
     (nX2ii) edge[] node[right] {$a$} (nY2i)
     (nY2i) edge[out=35,in=-35] node[right] {$a$} (nZ2)
     (nY2i) edge[red,thick] node[above,xshift=0.1cm] {\small{$[1]$}} node[left, black] {$c$} (nX2iii)
     (nX2iii) edge[out=210,in=180] node[left] {$a$} (nY2i)
    ;

    \coordinate (P1) at ($(x) + (-0.5cm,0.5cm)$); 
    \coordinate (P2) at ($(P1) + (-1.5cm,0)$); 
    \coordinate (P3) at ($(xi) + (0.5cm,0.5cm)$);
    \coordinate (P4) at ($(P3) + (1.5cm,0)$);

    \draw[-To,dashed,very thick,out=30,in=150] (P2) to node[above] {$\theta_1$} (P1); 
    \draw[-To,dashed,very thick,out=150,in=30] (P4) to node[above] {$\theta_2$} (P3); 

    \coordinate (EII1) at ($(z) + (0.75cm,-0.5cm)$);
    \coordinate (EII2) at ($(zi) + (0.75cm,-0.5cm)$);
    \coordinate (EIIi) at ($(X1i) + (1cm,0cm)$);
    \coordinate (EIIii) at ($(Y2i) + (1.5cm,0cm)$);

    \node[draw=none] at (EII1) (nEII1) {$\hat{\EII}/\EII$};
    \node[draw=none] at (EII2) (nEII2) {$\hat{\EII}'/\EII$};
    \node[draw=none] at (EIIi) (nEIIi) {$\hat{\EIIi}/\EIIi$};
    \node[draw=none] at (EIIii) (nEIIii) {$\hat{\EIIii}/\EIIii$};
    
    \end{tikzpicture}
    \\
    \end{tabular}
    } %end of makebox
    %}%end scalebox
        %\captionsetup{font=footnotesize}
        \caption{An example of different image reflections by a same chart}
        \label{figure:An example of image reflection on LLEE charts}
\end{figure}

\ifx
Readers may wonder
the above proof does not reveal a situation when two images, say $\theta(\LC_x)$ and $\theta(\LC_y)$, of $\mcl{I}_\theta(\mcl{H})$ overlaps each other. 
Our answer is, it does not really matter. 
Because during the elimination process described above, 
even if $\theta(\LC_x)$ and $\theta(\LC_y)$ are overlapped, their eliminations are irrelevant to each other. 
What only matters is, as indicated in the proof above, that both $\theta(\LC_x)$ and $\theta(\LC_y)$, after 
eliminating their maximum sub-images, can become two loop charts. 
This hierarchy structure of well-constructed images indicated by sub-image relations is the key for our proof. 
%For example, if eliminating $\theta(\LC_x)$ first, 
%after that some or all of the loops in $\theta(\LC_y)$ may be eliminated.
%Let $\theta(\LC_y)$
%But this does not affect the elimination conducted on $\theta(\LC_y)$. 
\fi

In the above proof, 
for each image of $\mcl{I}_\theta(\mcl{H})$, 
only one well-structured pre-image is needed to guarantee that 
the image is an LEE sub-chart. 
The hierarchy structure of images indicated by the sub-image relations between them is the key to determine the LEE elimination process on $\mcl{H}$. 

\ifx
Note that two images that are not sub-images to each other may overlap each other. However, this does not affect each of their eliminations once an elimination process is determined. 
\fi

For example, consider the chart $\CI$ of Fig.~\ref{figure:An example of images}. 
By the elimination process of its images we have discussed earlier at the beginning of this section,  we obtained its LEE witness $\hat{\CI}$ as also shown on the left side in Fig.~\ref{figure:An example of images}. 
In this example, image $\CI$ reflects the LLEE structure 
of its well-structured pre-image $\gra(\{x, z'', k, y\})_{x}$;
images $\gra(\{Z, X\})$ and $\gra(\{Z, Y\})$ reflect the LLEE structures of their well-structured pre-images $\gra(\{z', x'\})_{z'}$ and $\gra(\{z'', y\})_{z''}$ respectively. 
%Note that image $\theta(\{z'', y\}_{z''})$ and $\theta(\{z',x'\}_{z'})$ overlap each other, but their elimination does not affect each other. 
\ifx
It does not matter that images $\gra(\{Z, X\})$ and $\gra(\{Z, Y\})$ are overlapped. After eliminating $\gra(\{Z, Y\})$ first, the remnant chart 
$\la \{Z, X\}, \{a_1,a_2,a_3,a_4,b_1,d_1,d_2\}, \{Z\xrightarrow{a_2} X, X\xrightarrow{b_1} Z\}, \surd\}$ of $\gra(\{Z, X\})$ (without transition $Z\xrightarrow{a_1} Z$) is still a loop sub-chart 
and can be eliminated further. 
\fi

In fact, from Theorem~\ref{theo:restoration for LLEE charts}, 
%we can see that 
%since each image of a bisimulation collapse $\mcl{H}$ reflects the LLEE structure of 
%any of its well-structured pre-images, 
a bisimulation collapse $\mcl{H}$ reflects the LLEE structure of 
some parts of \emph{any} LLEE chart that has $\mcl{H}$ as its bisimulation collapse. 
This result coincides with the nature that an LLEE chart can have multiple LEE/LLEE witnesses.

Consider another example firstly proposed in~\cite{Grabmayer20} as shown in Fig.~\ref{figure:An example of image reflection on LLEE charts}. 
Given two bisimulation functions $\theta_1: V_{\hat{\EIIi}/\EIIi}\to V_{\EII}$ and 
$\theta_2: V_{\hat{\EIIii}/\EIIii}\to V_{\EII}$ from LLEE charts $\EIIi$ and $\EIIii$ to their bisimulation collapse $\EII$, 
$\EII$ reflects some parts of the LLEE structures of both $\EIIi$ and $\EIIii$:
$\EII$ with the left witness $\hat{\EII}$ reflects the LLEE structure of the looping-back chart $\gra(\{y, z, x'\})_{y}$ of $\EIIi$, whereas $\EII$ with the right witness $\hat{\EII}'$ 
reflects the LLEE structure of the looping-back chart $\gra(\{z, x'', y', x^{(3)}\})_{z}$ of $\EIIii$.  
In our method, 
the different witnesses of  $\hat{\EII}$ and $\hat{\EII}'$ come from different structures of images on $\EII$ through two different bisimulation functions $\theta_1$ and $\theta_2$.

Now we illustrate that Theorem~\ref{theo:restoration for LLEE charts} 
is a result no weaker than the one obtained in stage (II). 
%only proves ``LEE charts'' rather than LLEE ones, it is not a weaker result compared to~\cite{???}. 

%how to obtain the same result in stage (II) from Theorem~\ref{theo:restoration for LLEE charts}. 
%This also shows that Theorem~\ref{theo:restoration for LLEE charts} is not a weaker result compared to~\cite{???}. 

As stated in Sect.~\ref{section:LEE-LLEE Charts}, not all LEE witnesses are LLEE witnesses. 
The following property was firstly obtained in~\cite{Grabmayer2022-2v}. 
%\cite{???}~mentioned that each LEE witness can be transformed into an LLEE witness by switching loop-entry transitions. 

\begin{proposition}[\cite{Grabmayer2022-2v,Grabmayer2022}]
\label{prop:From LEE Witnesses to LLEE Witnesses}
Each LEE chart is also an LLEE chart. 
\end{proposition}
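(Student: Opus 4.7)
The plan is to prove the proposition by strong induction on the number of transitions of the LEE chart $\mcl{G}$. The base case, where $\mcl{G}$ has no infinite path, is trivial: $\mcl{G}$ is LLEE with the empty elimination.

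For the inductive step, I would first pick an ``innermost'' loop sub-chart. The collection of $\la X, \E\ra$-generated loop sub-charts of $\mcl{G}$ is finite, so there exists one that is minimal under $\psubc$; call it $\mcl{A}$, with start node $X$. The key auxiliary claim is that no node of $V_{\mcl{A}} \setminus \{X\}$ is the source of a loop-entry transition of any other loop sub-chart of $\mcl{G}$. The reason is that any such loop-entry transition $Y \xrightarrow{} Z$ with $Y \in V_{\mcl{A}} \setminus \{X\}$ must, by condition (L2) applied to $\mcl{A}$, sit on a cycle that does not escape $\mcl{A}$ before returning to $X$; combining this with the fact that $Y \xrightarrow{} Z$ is itself a loop-entry produces a $\la Y, \cdot\ra$-generated loop sub-chart strictly contained in $\mcl{A}$, contradicting minimality.

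Next, I would eliminate $\mcl{A}$, producing a chart $\mcl{G}'$ with strictly fewer transitions, and argue that $\mcl{G}'$ is still an LEE chart. For this, take a given LEE witness $\hat{\mcl{G}}$ for $\mcl{G}$, locate the step at which some loop sub-chart containing $\mcl{A}$ is eliminated, and move the elimination of $\mcl{A}$ itself to the front of the process; the auxiliary claim guarantees that this swap does not invalidate later steps, since $\mcl{A}$'s loop-entry transitions are disjoint from the loop-entry sets of all remaining loop sub-charts in $\hat{\mcl{G}}$. The induction hypothesis then yields an LLEE witness $\hat{\mcl{G}}''$ for $\mcl{G}'$. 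Prepending to $\hat{\mcl{G}}''$ the elimination of $\mcl{A}$ gives the required LLEE witness for $\mcl{G}$: the very first step cannot violate the layered condition, and in every subsequent step — again by the auxiliary claim — the removed loop-entry transition is not sourced from $V_{\mcl{A}} \setminus \{X\}$, so no previously eliminated body is re-entered.

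The main obstacle will be proving the auxiliary claim cleanly. The delicate case is when a body node $Y$ of $\mcl{A}$ carries a putative loop-entry transition $Y \xrightarrow{} Z$ whose target $Z$ lies outside $V_{\mcl{A}}$; one has to rule this out using that $\mcl{A}$ is a loop sub-chart of $\mcl{G}$, so Prop.~\ref{prop:looping-back chart}(\ref{item: looping-back property - path})-style reasoning forces every infinite continuation from $Y$ to return to $X$ and constrains where $Z$ can live. A related subtlety is that the loop sub-charts appearing in intermediate stages of an LEE witness are loop sub-charts of the intermediate chart, not of $\mcl{G}$ itself, so the swap argument must track which body transitions of $\mcl{G}$ legitimately take on the role of loop-entry transitions at later stages after $\mcl{A}$'s removal.
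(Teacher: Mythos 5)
There is a genuine gap: your auxiliary claim is false. Consider the chart with nodes $X$ (initial), $Y$, $W$ and transitions $X\xrightarrow{a}Y$, $Y\xrightarrow{b}X$, $X\xrightarrow{c}W$, $W\xrightarrow{d}Y$. The $\la X,\{X\xrightarrow{a}Y\}\ra$-generated chart $\mcl{A}$ (nodes $\{X,Y\}$, transitions $X\xrightarrow{a}Y$, $Y\xrightarrow{b}X$) is a loop sub-chart that is minimal under $\psubc$, and yet its body node $Y$ is the source of the loop-entry transition $Y\xrightarrow{b}X$ of the loop sub-chart $\la Y,\{Y\xrightarrow{b}X\}\ra$ of $\mcl{G}$ (which contains all three nodes and is a perfectly good loop chart). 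Your argument for the claim breaks exactly here: a $\la Y,\cdot\ra$-generated chart rooted at a body node $Y$ of $\mcl{A}$ is generated using \emph{all} transitions of $\mcl{G}$, and once a path from $Y$ reaches the start node $X$ it may leave $\mcl{A}$ before returning to $Y$; so the generated chart need not be contained in $\mcl{A}$, and minimality of $\mcl{A}$ yields no contradiction. Condition (L2) for $\mcl{A}$ only forces paths inside $\mcl{A}$ to return to $X$, not the continuation beyond $X$.

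This false claim is load-bearing in your last step. After eliminating $\mcl{A}$ you invoke the induction hypothesis to get \emph{some} LLEE witness of $\mcl{G}'$, over which you have no control. In the example above, $\mcl{G}'$ is the $3$-cycle $X\to W\to Y\to X$, and a legitimate LLEE witness for it starts the elimination at $Y$, removing the loop-entry $Y\xrightarrow{b}X$ from a node in the body of the previously eliminated $\mcl{A}$; prepending $\mcl{A}$'s elimination then produces a non-layered process, so your construction does not yield an LLEE witness (even though the chart is of course LLEE via a different process). To repair this you would need a strengthened induction hypothesis guaranteeing a witness of $\mcl{G}'$ that never sources loop-entry transitions at nodes of $V_{\mcl{A}}\setminus\{X\}$. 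That is essentially what the paper's proof supplies directly: it fixes the given LEE witness $\hat{\mcl{G}}$ and repairs it in place, demoting each offending loop-entry transition $Y\xrightarrow{}_{[k]}\cdot$ sourced in the body of an earlier loop sub-chart to a body transition and compensating by promoting a transition out of that sub-chart's start node to a loop-entry transition with order number $k$, using (L2) to show the orphaned loops still pass through that start node. I would suggest either adopting that local reassignment or proving the stronger inductive statement; as written, the argument does not go through.
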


\ifx
This property was mentioned in~\cite{???} but without an explicit proof. 
We give our version of its proof in this paper (see Appendix~\ref{section:Other Proofs}). 
To prove Prop.~\ref{prop:From LEE Witnesses to LLEE Witnesses}, we show each LEE witness $\hat{\mcl{G}}$ of $\mcl{G}$ can be transformed into an LLEE witness $\hat{\mcl{G}}'$ so that $\hat{\mcl{G}}'/\mcl{G}$ satisfies LLEE property. 
\fi

Though not giving an explicit proof, \cite{Grabmayer2022-2v} pointed out that a proof can be obtained by showing that 
each LEE witness $\hat{\mcl{G}}$ of $\mcl{G}$ can be transformed into an LLEE witness $\hat{\mcl{G}}'$ so that $\hat{\mcl{G}}'/\mcl{G}$ satisfies the LLEE property. 
Here, we give our proof as below. 
%To prove Prop.~\ref{prop:From LEE Witnesses to LLEE Witnesses}, following~\cite{???} we show each LEE witness $\hat{\mcl{G}}$ of $\mcl{G}$ can be transformed into an LLEE witness $\hat{\mcl{G}}'$ so that $\hat{\mcl{G}}'/\mcl{G}$ satisfies the LLEE property. 

\begin{proof}[Proof of Prop.~\ref{prop:From LEE Witnesses to LLEE Witnesses}]

    Assume $\hat{\mcl{G}}/\mcl{G}$ is an LEE chart with $\hat{\mcl{G}}$ an LEE witness. 
    We obtain an LLEE witness $\hat{\mcl{G}}'$ from $\hat{\mcl{G}}$ by induction on the LEE elimination process according to $\hat{\mcl{G}}$.  
    In the following proof, 
    without loss of generality, assume that every loop-entry transition has an unique order number, starting from the smallest number $1$. 
    %Let $\mcl{C}[n]$ be the loop sub-chart of $\mcl{G}$ that has the largest order number $n$. 
    We denote by $\mcl{D}(X\xrightarrow{} Y)$ a loop starting from transition 
    $X\xrightarrow{} Y$.  
    We simply use $\la X, \E\ra$ to denote a $\la X, \E\ra$-generated chart of $\mcl{G}$.  
    %and does not pass through $R$ until finally returning to $R$.
    %Or formally, 
    %$\mcl{C}(R\xrightarrow{} \cdot)$ is 
    %in the form of $R\xrightarrow[\not\equiv R]{}^+ R$. 

    \textit{Base case:} 
    Consider the $\la R_1, \{R_1\xrightarrow{}_{[1]}\cdot\}\ra$-generated chart with $R_1 \xrightarrow{}_{[1]}\cdot$ the smallest loop-entry transition. 
    By LEE property, $\la R_1, \{R_1\xrightarrow{}_{[1]}\cdot\}\ra$ is a loop sub-chart.
    For each loop-entry transition $X\xrightarrow{}_{[k]} \cdot$ from the body of chart $\la R_1, \{R_1\xrightarrow{}_{[1]}\cdot\}\ra$ with $k > 1$ (thus violating the LLEE property),
    we turn $X\xrightarrow{}_{[k]} \cdot$ into a body transition: $X\xrightarrow{}_{[0]} \cdot$. 
    For each loop $\mcl{D}(X\xrightarrow{}_{[0]}\cdot)$ that becomes a loop without loop-entry transitions because of the transforming from $X\xrightarrow{}_{[k]}\cdot $ to $X\xrightarrow{}_{[0]}\cdot $, 
    since $X\xrightarrow{}_{[0]}\cdot$ is from the body of $\la R_1, \{R_1\xrightarrow{}_{[1]}\cdot\}\ra$, 
    by that $\la R_1, \{R_1\xrightarrow{}_{[1]}\cdot\}\ra$ is a loop sub-chart and (L2), $\mcl{D}(X\xrightarrow{}_{[0]}\cdot)$ must pass through $R_1$. 
    Let $\mcl{D}(X\xrightarrow{}_{[0]}\cdot)$ be the form of: $X\xrightarrow{}_{[0]} \cdot \xrightarrow{}^* R_1\xrightarrow{}Z_1 \xrightarrow{}^* X$. 
    We then turn transition $R_1\xrightarrow{} Z_1$ into a loop-entry transition $R_1\xrightarrow{}_{[k]}Z_1 $. 
    It is not hard to see that after the loop-entry-transition-switching process described above for each $X\xrightarrow{}_{[k]} \cdot$ with $k > 1$, 
    all transitions $\cdot \xrightarrow{}_{[k]}\cdot $ from the body of $\la R_1, \{R_1\xrightarrow{}_{[1]}\cdot\}\ra$ are body transitions with $k = 0 < 1$. 

    \ifx
    Consider the loop sub-chart $\mcl{C}{[1]}$ whose largest loop-entry transition is the smallest loop-entry transition $R_1 \xrightarrow{}_{[1]}\cdot$. 
    By LEE property, $\mcl{C}{[1]}$ is a loop chart. 
    For each loop-entry transition $X\xrightarrow{}_{[k]} \cdot$ from the body of $\mcl{C}{[1]}$ with $k > 1$ (violating the LLEE property),
    we turns $X\xrightarrow{}_{[k]} \cdot$ into a body transition: $X\xrightarrow{}_{[0]} \cdot$. 
    And for each loop $\mcl{D}(X\xrightarrow{}_{[0]}\cdot)$ that becomes a loop without loop-entry transitions (because of the transforming of $X\xrightarrow{}_{[k]}\cdot $), 
    since $X\xrightarrow{}_{[0]}\cdot$ is from the body of $\mcl{C}{[1]}$, 
    by that $\mcl{C}{[1]}$ is a loop chart and (L2), $\mcl{D}(X\xrightarrow{}_{[0]}\cdot)$ must pass through $R_1$.
    Let $\mcl{D}(X\xrightarrow{}_{[0]}\cdot)$ be the form of: $X\xrightarrow{}_{[0]} \cdot \xrightarrow{}^* R_1\xrightarrow{}Z_1 \xrightarrow{}^* X$. 
    We turn transition $R_1\xrightarrow{} Z_1$ into a loop-entry transition $R_1\xrightarrow{}_{[k]}Z_1 $.
    It is easy to see that after the loop-entry-transition switching described above for each $X\xrightarrow{}_{[k]} \cdot$, 
    all transitions $\cdot \xrightarrow{}_{[k]}\cdot $ from the body of $\mcl{C}{[1]}$ are body transitions with $k = 0$. 
    \fi
    
    \textit{Step case:} 
    Consider a $\la R_n, \{R_n\xrightarrow{}_{[n]}\cdot \}\ra$-generated chart starting from a loop-entry transition $R_n\xrightarrow{}_{[n]}\cdot $ with $n > 1$.
    By induction hypothesis, in any $\la R_l, \{R_l\xrightarrow{}_{[l]}\cdot \}\ra$-generated chart with $l < n$, 
    any transition 
    $\cdot \xrightarrow{}_{[o]}\cdot$ from the body of $\la R_l, \{R_l\xrightarrow{}_{[l]}\cdot \}\ra$ satisfies that $o < l$. 
    For each loop-entry transition  $X\xrightarrow{}_{[k]} \cdot$ from the body of $\la R_n, \{R_n\xrightarrow{}_{[n]}\cdot \}\ra$ satisfying $k > n$,  
    we turn $X\xrightarrow{}_{[k]} \cdot$ into a body transition: $X\xrightarrow{}_{[0]} \cdot$.
    For each loop $\mcl{D}(X\xrightarrow{}_{[0]}\cdot)$ that becomes a loop without loop-entry transitions because of the transforming from $X\xrightarrow{}_{[k]}\cdot$ to $X\xrightarrow{}_{[0]}\cdot$, 
    we see that loop $\mcl{D}(X\xrightarrow{}_{[0]}\cdot)$ must pass through $R_n$. 
    Because by the LEE property, after eliminating all loop-entry transitions 
    with an order number $l < n$, 
    chart $\la R_n, \{R_n\xrightarrow{}_{[n]}\cdot \}\ra$ becomes a loop sub-chart, denoted by $\la R_n, \{R_n\xrightarrow{}_{[n]}\cdot \}\ra'$.  
    Since $X\xrightarrow{}_{[k]} \cdot$ is not from the body of any chart $\la R_l, \{R_l\xrightarrow{}_{[l]}\cdot \}\ra$ with $l < n$ (by inductive hypothesis above), 
    loop $\mcl{D}(X\xrightarrow{}_{[0]}\cdot)$ is still not eliminated in $\la R_n, \{R_n\xrightarrow{}_{[n]}\cdot \}\ra'$. 
    So by (L2), $\mcl{D}(X\xrightarrow{}_{[0]}\cdot)$ passes through $R_n$. 
    Let $\mcl{D}(X\xrightarrow{}_{[0]}\cdot)$ be of the form:
    $X\xrightarrow{}_{[0]}\cdot \xrightarrow{}^* R_n\xrightarrow{} Z_n \xrightarrow{}^* X$, 
    then we turn the body transition $R_n\xrightarrow{} Z_n$ into 
    a loop-entry transition $R_n\xrightarrow{}_{[k]} Z_n$. 
    After the loop-entry-transition-switching process described above for each $X\xrightarrow{}_{[k]} \cdot$, we see that all
    transitions $\cdot \xrightarrow{}_{[k]}\cdot $ from the body of $\la R_n, \{R_n\xrightarrow{}_{[n]}\cdot \}\ra$ satisfy that $k < n$.

    \ifx
    Consider an arbitrary loop sub-chart $\mcl{C}{[n]}$ whose largest 
    loop-entry transition is $R_n\xrightarrow{}_{[n]}\cdot $ with $n > 1$. 
    By induction hypothesis, in any loop sub-chart $\mcl{C}{[l]}$ with $l < n$, 
    all transitions 
    $\cdot \xrightarrow{}_{[k]}\cdot$ from the body of $\mcl{C}{[l]}$ satisfy that $k < l$. 
    For each loop-entry transition  $X\xrightarrow{}_{[k]} \cdot$ from the body of $\mcl{C}{[n]}$ with $k > n$,  
    we turn $X\xrightarrow{}_{[k]} \cdot$ into a body transition: $X\xrightarrow{}_{[0]} \cdot$.
    And for each loop $\mcl{D}(X\xrightarrow{}_{[0]}\cdot)$ that becomes without loop-entry transitions (because of the transforming of $X\xrightarrow{}_{[k]}\cdot$), 
    we see that $\mcl{D}(X\xrightarrow{}_{[0]}\cdot)$ must pass through $R_n$. 
    Because by the LEE property, after eliminating all $\mcl{C}{[l]}$ with $l < n$, 
    $\mcl{C}{[n]}$ becomes a loop chart. 
    And since $X\xrightarrow{}_{[k]} \cdot$ is not from the body of $\mcl{C}{[l]}$ (by inductive hypothesis), 
    loop $\mcl{D}(X\xrightarrow{}_{[0]}\cdot)$ is not eliminated. 
    So by (L2), $\mcl{D}(X\xrightarrow{}_{[0]}\cdot)$ passes through $R_n$. 
    Let $\mcl{D}(X\xrightarrow{}_{[0]}\cdot)$ be of the form:
    $X\xrightarrow{}_{[0]}\cdot \xrightarrow{}^* R_n\xrightarrow{} Z_n \xrightarrow{}^* X$, 
    then we turn the body transition $R_n\xrightarrow{} Z_n$ into 
    a loop-entry transition $R_n\xrightarrow{}_{[k]} Z_n$. 
    After the loop-entry-transition switching for each $X\xrightarrow{}_{[k]} \cdot$ as above, it is easy to see that all
    transitions $\cdot \xrightarrow{}_{[k]}\cdot $ from the body of $\mcl{C}{[n]}$ satisfy that $k < n$.
    \fi
    
    By the LLEE property, the witness $\hat{\mcl{G}}'$ obtained after 
    the transformations above is the LLEE witness as required. 
    \qed
\end{proof}

As an example, see the LEE chart $\hat{\CI}'/\CI$ with witness $\hat{\CI}'$ in Fig.~\ref{figure:An example of a non-LLEE LEE witness}, which is an LLEE witness. 
From the LEE witness $\hat{\CI}$, 
we first turn the loop-entry transition $X\xrightarrow{b_1}_{[3]} Z$ from the body of the $\la Z, \{Z\xrightarrow{a_2}_{[2]} X\}\ra$-generated chart that violates the LLEE property into a body transition $X\xrightarrow{b_1}_{[0]} Z$. 
This causes loop $X\xrightarrow{b_1} Z\xrightarrow{a_4} K\xrightarrow{d_2} X$ become a loop without loop-entry transitions. 
Then we turn transition $Z\xrightarrow{a_4} K$ into a loop-entry transition 
$Z\xrightarrow{a_4}_{[3]} K$. 
By this loop-entry-transition-switching process we transform $\hat{\CI}$ into an LLEE witness $\hat{\CI}'$. 

%by switching the loop-entry transition $X\xrightarrow{b_1}_{[3]} Z$ to $Z\xrightarrow{a_4}_{[3]} K$ from the $\la Z, \{Z\xrightarrow{a_2}_{[2]} X\}\ra$-generated chart. 

We obtain the same result as stage (II) as follows. 
It is straightforward by Prop.~\ref{prop:From LEE Witnesses to LLEE Witnesses} and Theorem~\ref{theo:restoration for LLEE charts}. 

\begin{corollary}[\cite{Grabmayer20}]
    For every bisimulation function $\theta: V_{\hat{\mcl{G}}/\mcl{G}}\to V_\mcl{H}$ from an LLEE chart $\hat{\mcl{G}}/\mcl{G}$ to a bisimulation collapse $\mcl{H}$, $\mcl{H}$ is an LLEE chart. 
\end{corollary}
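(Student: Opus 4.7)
The plan is to observe that this corollary is an immediate chaining of the two preceding results, so no new machinery is required. First I would invoke Theorem~\ref{theo:restoration for LLEE charts} on the given bisimulation function $\theta: V_{\hat{\mcl{G}}/\mcl{G}}\to V_\mcl{H}$ to conclude that $\mcl{H}$ is an LEE chart. This is the heavy lifting of the paper: it produces an LEE witness on $\mcl{H}$ by orchestrating an inside-out elimination of the images in $\mcl{I}_\theta(\mcl{H})$ according to the sub-image order, using Lemma~\ref{lemma:Image Reflection on LLEE Charts --- critical condition} to guarantee that, after removing its maximal proper sub-images, each image becomes a loop sub-chart of $\mcl{H}$ and can itself be eliminated.

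Next I would apply Prop.~\ref{prop:From LEE Witnesses to LLEE Witnesses} to the LEE chart $\mcl{H}$ in order to upgrade the LEE witness to an LLEE witness. Concretely, this means invoking the loop-entry-transition-switching procedure from the proof of that proposition, which inductively walks through the loop sub-charts in order of their largest loop-entry transition, demoting offending loop-entry transitions in the body (those violating the layered condition) to body transitions, and in return promoting suitable body transitions emanating from the start node of the current loop sub-chart to loop-entry transitions of the appropriate order number. The output is a witness $\hat{\mcl{H}}$ such that $\hat{\mcl{H}}/\mcl{H}$ satisfies the LLEE property.

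Combining the two steps gives the result: $\mcl{H}$ is an LEE chart by Theorem~\ref{theo:restoration for LLEE charts}, hence an LLEE chart by Prop.~\ref{prop:From LEE Witnesses to LLEE Witnesses}. Since both ingredients have already been established in full, there is no genuine obstacle here; the only thing worth checking is that Prop.~\ref{prop:From LEE Witnesses to LLEE Witnesses} is applied in the correct direction (from LEE to LLEE) and that it applies to an arbitrary LEE chart without additional hypotheses on its origin, which is indeed the case. Thus the proof is a one-line composition of the two preceding results.
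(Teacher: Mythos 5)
Your proposal is correct and matches the paper exactly: the corollary is stated there as an immediate consequence of Theorem~\ref{theo:restoration for LLEE charts} (which yields that $\mcl{H}$ is an LEE chart) followed by Prop.~\ref{prop:From LEE Witnesses to LLEE Witnesses} (every LEE chart is an LLEE chart). No further argument is needed.
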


\section{Conclusion and Analysis}
\label{section:Conclusion and Future Work}

In this paper, we analyze a phenomenon called ``image reflection'' on a type of characterization graphs --- LLEE charts, which leads us to a new approach for proving the completeness of \MilIfree\ for 1-free regular expressions modulo bisimulation equivalence. 
The novelty of our method lies in stage (II) 
for showing that the bisimulation collapse of an LLEE chart is an LLEE chart using the minimization strategy. 
We propose a different approach by utilizing 
the image reflection from the bisimulation collapse $\mcl{H}$ of an LLEE chart $\mcl{G}$ back to chart $\mcl{G}$, through a bisimulation function $f: V_\mcl{G}\to V_\mcl{H}$ from $\mcl{G}$ to $\mcl{H}$. 
\ifx
a phenomenon we call ``image reflection'' that 
each image on the bisimulation collapse of an LLEE chart is actually an LEE sub-chart enforced by its corresponding well-structured looping-back charts of the LLEE chart. We also call it that the image `reflects' the LLEE structure of the LLEE chart. 
\fi
The significance of our idea is that we directly prove $\mcl{H}$ being an LEE chart according to $f$, without performing any graph transformations which rely on careful selections of bisimilar-node pairs. 
\ifx
Our approach advances in that we give a direct proof according to $f$, without 
performing any graph transformations which rely on careful selections of bisimilar-node pairs. 
\fi

From Lemma~\ref{lemma:Image Reflection on LLEE Charts --- critical condition} and the proof of Theorem~\ref{theo:restoration for LLEE charts}, our approach actually induces a new way to look at LEE charts by taking `loops' as basic components and our primary study objects. 
An LEE chart can be seen as a space consisting of a (possibly infinite) set of loops that are orchestrated in a hierarchy manner. 
And to prove a chart satisfying LEE property, we only need to show that all loops are orchestrated properly (through the conditions Lemma~\ref{lemma:Image Reflection on LLEE Charts --- critical condition} (\ref{item:lemma:Image Reflection on LLEE Charts --- critical condition - coverage}), (\ref{item:Image Reflection on LLEE Charts --- critical condition - constrained behaviour}) and (\ref{item:lemma:Image Reflection on LLEE Charts --- critical condition - path})). 
This new angle of understanding based on loops is the key to our proof method. 

\ifx
To prove that a chart satisfies LEE property, the significance of our idea is that 

And to prove a chart satisfying LEE property, we only need to show, from a global view, that all loops are orchestrated properly (through the conditions Lemma~\ref{lemma:Image Reflection on LLEE Charts --- critical condition} (\ref{item:lemma:Image Reflection on LLEE Charts --- critical condition - coverage}), (\ref{item:Image Reflection on LLEE Charts --- critical condition - constrained behaviour}) and (\ref{item:lemma:Image Reflection on LLEE Charts --- critical condition - path})). 
This new angle of understanding based on loops is the key to our proof method. 
\fi

\ifx
the images on the bisimulation collapse of an LLEE chart actually reflect (some part of) the LLEE structure of the LLEE chart, in the sense that 
the behaviour of each image is constrained by its corresponding well-structured looping-back charts. 
\fi

\ifx
We propose a different approach by utilizing a phenomenon we call ``image reflection'' that a certain type of images --- well-structured images --- on the bisimulation collapse of an LLEE chart actually reflect (some part of) the LLEE structure of the LLEE chart, in the sense that 
the behaviour of each image is constrained by its corresponding looping-back chart. 
\fi

\section{Future Work}
\label{section:Future Work}
Previous work further extended LEE/LLEE charts to characterize regular expressions modulo bisimulation equivalence by introducing a silent (without doing any action) \emph{1-transition} $X\xrightarrow{1} Y$ (cf.~\cite{Grabmayer20-2,Grabmayer2022}). 
The LEE/LLEE charts with additional 1-transitions, called \emph{LEE/LLEE 1-charts}, 
are defined the same as shown in Sect.~\ref{section:LEE-LLEE Charts}. 
With 1-transitions, a similar correspondence between regular expressions and provable solutions of LLEE 1-charts can be built, analog to 
Prop.~\ref{prop:from 1-free regular expressions to LLEE charts} and ~\ref{prop:from LLEE charts to 1-free regular expressions} for LLEE charts. 
The connection of LLEE 1-charts to the completeness of \Mil\ were fully explored in recent work~\cite{Grabmayer2022}. 

%The extension of our idea to the case of LEE/LLEE 1-charts would be natural, because 

A direct application of our proof to the case of LEE/LLEE 1-charts for proving the completeness of \Mil\ will fail. 
There are mainly two reasons:
(i) We see that the properties Lemma~\ref{lemma:Image Reflection on LLEE Charts --- critical condition} (2) and (3) generally do not hold when 1-transitions are involved;
(ii) There exist bisimulation collapses of LLEE 1-charts that 
cannot be simply altered into an LLEE 1-chart by introducing auxiliary 1-transitions. 
\cite{Grabmayer2022,Grabmayer2023}~gave a counter-example showing (ii). 
Though currently we have found that for some particular bisimulation collapses, 
Lemma~\ref{lemma:Image Reflection on LLEE Charts --- critical condition} (2) and (3) can be restored by introducing auxiliary 1-transitions, 
%reason (i) can be tackled by restoring Lemma~\ref{lemma:Image Reflection on LLEE Charts --- critical condition} (2) and (3) through introducing auxiliary 1-transitions, 
reason (ii) tells us a restoration of this type for all bisimulation collapses is not possible. 

However, the idea in our proof method can be tailored to simplify the graph transformations proposed in~\cite{Grabmayer2022}, which is quite technically complex and difficult. 
To do this, the primary is to investigate when $\mcl{H}$ is not a bisimulation collapse, how our method can be adapted to work out on a bisimulation relation $\R:V_\mcl{G}\to V_{\mcl{H}}$ from $\mcl{G}$ to $\mcl{H}$. 
Our future work might focus on this aspect.  

\ifx
The extension of our idea to the case of LEE/LLEE 1-charts for proving the completeness of \Mil\ will not be trivial. 
There are mainly two reasons:
(i) We see that the properties Lemma~\ref{lemma:Image Reflection on LLEE Charts --- critical condition} (2) and (3) generally do not hold when 1-transitions are involved;
(ii) There exist bisimulation collapses of LLEE 1-charts that 
cannot be simply altered into an LLEE 1-chart by auxiliary 1-transitions. 
\cite{Grabmayer2022,Grabmayer2023}~gave a counter-example showing this. 
(ii) is more of an essential reason. 
Although we already have in mind a technique to restore properties Lemma~\ref{lemma:Image Reflection on LLEE Charts --- critical condition} (2) and (3) for a certain type of bisimulation collapses. 
Because of the reason (ii), a restoration for all bisimulation collapses will not be possible. 
More complex graph transformations are needed, for example, the transformations that add additional but necessary bisimilar nodes (and transitions) from the bisimulation collapses in order for our restoration technique to work.  
Our future work may focus on these aspects. 
\fi

\section{Acknowledgements}
\ifx
We thank Prof. Zhiming Liu for his financial support to the authors which makes this work possible. 
Also thank Prof. Chao Wang, students Wei Zhang and Quan Sun for their attendances of our discussions. 
\fi

This work is partially supported by the Youth Project of National Science Foundation of China (No. 62102329) and
the Project of National Science Foundation of Chongqing (No. cstc2021jcyj-bshX0120).

%
% ---- Bibliography ----
%
% BibTeX users should specify bibliography style 'splncs04'.
% References will then be sorted and formatted in the correct style.
%
% \bibliographystyle{splncs04}
% \bibliography{mybibliography}
%

%\begin{thebibliography}{8}
%\bibitem{ref_article1}
%Author, F.: Article title. Journal \textbf{2}(5), 99--110 (2016)

%\end{thebibliography}

\bibliographystyle{splncs04}
\bibliography{main}

\end{document}